\documentclass[twocolumn,superscriptaddress,aps]{revtex4-2}
\usepackage{graphicx}
\usepackage{tikz}
\usepackage{amsmath}
\usepackage{amsthm}
\usepackage{amssymb}
\usepackage{latexsym}
\usepackage{array}
\usepackage{hyperref}
\usepackage{float}
\usepackage{amsfonts}
\usepackage{dsfont}
\usepackage{mathrsfs}
\usepackage{verbatim}
\usepackage{bbold}
\usepackage{lipsum}
\usepackage[normalem]{ulem}

\usepackage{bm}
\usepackage{times}

\usepackage{upgreek}

\usepackage{makecell}
\usepackage{adjustbox,lipsum}

\usepackage{algorithm}
\usepackage{algpseudocode}

\usepackage{color}

\newcommand{\bra}[1]{\left<#1\right|}
\newcommand{\ket}[1]{\left|#1\right>}
\newcommand{\abs}[1]{\bigl|#1\bigr|}

\newcommand{\braket}[2]{\left<{#1}|{#2}\right>}
\newcommand{\ketbra}[2]{\ket{#1}\!\!\bra{#2}}

\newtheorem{theorem}{Theorem}
\newtheorem{proposition}[theorem]{Proposition}
\newtheorem{lemma}[theorem]{Lemma}
\newtheorem{corollary}[theorem]{Corollary}
\newtheorem{definition}[theorem]{Definition}

\newtheorem{remark}[theorem]{Remark}

\newcommand{\Tr}{\operatorname{Tr}}
\newcommand{\Var}{\operatorname{Var}}

\newcommand{\Prob}{\operatorname{Pr}}
\newcommand{\dd}{\mathrm{d}}
\providecommand{\openone}{\mathds{1}}
\newcommand{\one}{\hat{\mathds{1}}}
\newcommand{\vone}{\bm{1}}
\newcommand{\pp}{\bm p}
\newcommand{\xx}{\bm x}
\newcommand{\cB}{c_B}
\newcommand{\CB}{C_B}
\newcommand{\DCB}{\Delta C_B}
\newcommand{\DB}{\mathcal D_B}
\newcommand{\QB}{\mathcal Q_B}
\newcommand{\UU}{\mathcal U}
\newcommand{\EE}{\mathcal E}
\newcommand{\Hperp}{\hat{H}_d}
\newcommand{\Dfour}{\mathfrak D_d}

\begin{document}

\title{Coherence-generating power deviation: Fluctuations and input selectivity beyond average coherence generation}

\author{Kyoungho~Cho}\email{khcho23@yonsei.ac.kr}
\affiliation{Institute for Convergence Research and Education in Advanced Technology, Yonsei University, Seoul 03722, Republic of Korea}
\affiliation{Department of Statistics and Data Science, Yonsei University, Seoul 03722, Republic of Korea}

\author{Jeongho~Bang}\email{jbang@yonsei.ac.kr}
\affiliation{Institute for Convergence Research and Education in Advanced Technology, Yonsei University, Seoul 03722, Republic of Korea}
\affiliation{Department of Quantum Information, Yonsei University, Incheon 21983, Republic of Korea}

\date{\today}

\begin{abstract}
Coherence-generating power (CGP) tells how much coherence a quantum process produces on average from incoherent inputs, but an average can hide where that coherence comes from. We introduce the coherence-generating power deviation (CGPD), the standard deviation of the generated Hilbert--Schmidt coherence over the uniform simplex of incoherent states. CGPD turns coherence generation into a landscape: a small value signals robust production across input populations, whereas a large value reveals selective conversion of particular population imbalances. By representing unitary coherence generation as a quadratic response on population space, we obtain exact first and second moments in arbitrary finite dimension. We prove that every nontrivial unitary generator must fluctuate, and we separate this unavoidable isotropic fluctuation from excess anisotropy. The resulting distinction between average strength and input selectivity has direct consequences for benchmarking, reliability guarantees, input-ensemble susceptibility, and four-copy measurement protocols; it also extends naturally to unital quantum channels. Matched quantum-walk and quasiperiodic-transport examples show that topology and localization can leave CGP unchanged while substantially changing CGPD, revealing structure invisible to the mean alone.
\end{abstract}

\maketitle

\section{Introduction}\label{sec:introduction}

Quantum coherence is a basis-dependent manifestation of superposition and a resource for quantum information processing, metrology, thermodynamics, and symmetry-sensitive tasks~\cite{Baumgratz2014,Streltsov2017,WinterYang2016,Napoli2016,Lostaglio2015,MarvianSpekkens2016}. Besides quantifying the coherence contained in a quantum state, one may ask how effectively a dynamical map converts an incoherent input into a coherent output. This question has led to several notions of coherence-generating power (CGP)~\cite{ManiKarimipour2015,ZanardiStyliarisVenuti2017,ZanardiStyliarisVenuti2017b,StyliarisVenutiZanardi2018,ZhangMaChenFei2018}.

The probabilistic CGP introduced in Ref.~\cite{ZanardiStyliarisVenuti2017} averages the Hilbert--Schmidt coherence generated by a unitary over a uniform ensemble of incoherent states. Its analytical tractability, basis covariance, and direct detection protocol make it a useful dynamical resource measure. Yet, CGP is a first-moment quantity. It answers how much coherence is produced on average, but not whether this production is distributed uniformly over the incoherent simplex or concentrated on a restricted set of population profiles.

This limitation is physical rather than merely statistical. In experiments and algorithms, the diagonal input may be imperfectly prepared, inferred only probabilistically, or deliberately drawn from a data distribution. Two operations with identical mean coherence generation can therefore perform differently from run to run. One may be a reliable resource supplier whose output coherence is insensitive to the population profile; another may be a selective converter that strongly amplifies only particular population imbalances. The mean value alone identifies neither behavior.

An analogous limitation of the mean-only quantities has motivated the deviation-like complementary measure, which resolves the input-state dependence invisible to the mean~\cite{ChoBang2026}. The present work develops the corresponding fluctuation theory for the coherence generation. We regard the generated coherence as a random variable over the uniform simplex of incoherent states and define its standard deviation as the coherence-generating power deviation (CGPD). The deviation is not introduced as an error bar appended to CGP. It is a dynamical diagnostic in its own right: the width of the coherence-generation landscape, and hence, a measure of input-population sensitivity, selectivity, and reliability.

The key step in this work is to treat the coherence generation not as a single number, but as a landscape over the incoherent simplex. The height of this landscape at a population vector $\pp$ is the coherence generated from the corresponding diagonal state. CGP records the mean height and CGPD records the width. Thus, two processes with the same CGP can represent very different physical behaviors: one may generate a coherence almost uniformly across input populations, while another may generate the same average coherence by responding strongly only to selected population imbalances.

This landscape has a simple geometry. For a unitary operator $\hat{U}$, the transition probabilities $\hat{P}_{\hat{U}}$ are obtained from the absolute squares of the matrix elements of $\hat{U}$. The generated coherence is then a quadratic response on population space, governed by the real matrix $\hat{\Gamma}_{\hat{U}}=\one-\hat{P}_{\hat{U}}^{\mathsf T}\hat{P}_{\hat{U}}$. Here, and throughout the paper, $\one$ denotes the identity matrix on real population space. This representation shows immediately what the average discards: CGP depends only on the total response strength, whereas CGPD also senses how that response is distributed among population modes and how it is oriented relative to the simplex.

The main structural consequence is that the nonzero unitary coherence generation can never be perfectly input independent. The maximally mixed population sits at the center of the simplex and produces no coherence under unitary evolution, so a positive mean necessarily requires the landscape to rise away from the center. We make this statement quantitative by proving a universal dimension-dependent lower bound on CGPD and by decomposing the deviation into two parts: an unavoidable isotropic contribution fixed by CGP, and an excess anisotropic contribution that measures genuine input selectivity. The equality case identifies isotropic generators, while deviations above this boundary diagnose population modes that are preferentially converted into coherence.

Beyond this geometric picture, CGPD has several operational meanings. It is the single-input noise scale in finite-sample CGP benchmarking, gives distribution-independent guarantees against underperformance for unknown inputs, and appears as the susceptibility of average coherence generation to an infinitesimal bias in the input ensemble. We also give a four-copy measurement representation for the second moment, so the deviation can be accessed without full process tomography. The same moment formulas and fluctuation geometry extend to unital quantum channels through a Gram matrix of coherent output components.

Finally, we show that the deviation is sensitive to physical structure that the mean misses. The continuous-time quantum walks can be matched to have exactly the same CGP while having different CGPD, giving an analytically solvable topology-resolved example. A clean tight-binding chain and a localized Aubry--Andr\'e--Harper chain~\cite{AubryAndre1980,BiddleDasSarma2009,Lahini2009} show the same effect in transport: at a common CGP, localization produces a broader coherence-generation distribution. These examples connect CGPD to quantum simulation and to quantum-walk or Hamiltonian feature maps used in graph and kernel learning~\cite{Rossi2015,Henry2021,Albrecht2023,Havlicek2019,SchuldKilloran2019}.

\section{Preliminaries and statistical formulation}\label{sec:preliminaries}

\subsection{Basis coherence and incoherent states}

Let $\mathcal{H} \cong \mathbb C^d$ and fix an orthonormal reference basis $B = \{ \ket{i} \}_{i=1}^d$. The dephasing map and the complementary off-diagonal projection are
\begin{eqnarray}
\DB(\hat{X}) = \sum_{i=1}^d\ketbra{i}{i}\hat{X}\ketbra{i}{i},
\quad
\QB = \mathcal{I} - \DB,
\label{eq:dephasing}
\end{eqnarray}
where $\hat{X}$ is an operator on $\mathcal{H}$ and $\mathcal{I}$ is the identity superoperator on the operator space. Throughout this paper, $\hat{\openone}$ denotes the identity operator on Hilbert space, $\one$ denotes the identity matrix on real population space, and $\vone$ denotes the all-ones vector.
The map $\DB$ is an orthogonal projection with respect to the Hilbert--Schmidt inner product, and the operator space decomposes into diagonal and off-diagonal sectors. A density operator is $B$-incoherent when $\DB(\hat{\rho})=\hat{\rho}$.

We employ the squared Hilbert--Schmidt coherence used in the probabilistic CGP construction~\cite{ZanardiStyliarisVenuti2017},
\begin{eqnarray}
\cB(\hat{\rho}) &=& \left\|\QB(\hat{\rho})\right\|_2^2 = \Tr(\hat{\rho}^2) - \Tr[\DB(\hat{\rho})^2] \nonumber\\
	&=&\sum_{i\neq j}\abs{\bra{i}\hat{\rho}\ket{j}}^2.
\label{eq:coherence}
\end{eqnarray}
Although this quantity is not monotonic under every class of free coherence operations, it is a natural squared distance from the diagonal subspace, is monotonic for the unital incoherent maps relevant here, and renders the fluctuation theory exactly solvable.

Every incoherent state is represented by a population vector in the probability simplex,
\begin{eqnarray}
\hat{\rho}_{\pp} &=& \sum_{j=1}^dp_j\ketbra{j}{j},\nonumber\\
\Delta_{d-1} &=& \bigl\{\pp: p_j \geq 0, \ \sum_j p_j = 1 \bigr\}.
\label{eq:incoherent_state}
\end{eqnarray}
The maximally mixed population is denoted by $\pp_* = \vone/d$. Here, $\Delta_{d-1}$ denotes the $(d-1)$-dimensional probability simplex embedded in $\mathbb{R}^d$, since the population vectors satisfy $\sum_i p_i = 1$.

\subsection{Uniform ensemble of incoherent inputs}

The incoherent ensemble is the uniform probability measure $\nu_d$ on $\Delta_{d-1}$. It can be generated operationally by drawing a Haar-random pure state $\ket{\psi}$ and dephasing it,
\begin{eqnarray}
\hat{\rho}_{\pp} &=& \DB(\ketbra{\psi}{\psi}), \nonumber\\
p_j &=& \abs{\braket{j}{\psi}}^2.
\label{eq:haar_dephase}
\end{eqnarray}
The resulting population vector follows the Dirichlet distribution $\operatorname{Dir}(1,\ldots,1)$~\cite{ZyczkowskiSommers2001}. Thus, the Haar construction does not privilege any corner or face of the incoherent simplex. Appendix~\ref{app:haar} gives the moment derivation and explains the equivalence between Haar dephasing and the flat simplex measure.

The first two population moments are
\begin{eqnarray}
\mathbb{E}[p_i] = \frac{1}{d}, 
\quad
\mathbb{E}[p_ip_j] = \frac{1+\delta_{ij}}{d(d+1)}.
\label{eq:dirichlet_second}
\end{eqnarray}
It is useful to center the population vector,
\begin{eqnarray}
\xx = \pp - \frac{\vone}{d},
\qquad
\vone^{\mathsf T}\xx=0.
\label{eq:centered_population}
\end{eqnarray}
The covariance is isotropic on the zero-sum subspace,
\begin{eqnarray}
\mathbb{E}[\xx\xx^{\mathsf T}] = \frac{1}{d(d+1)}\Hperp,
\quad
\Hperp = \one-\frac{\vone\vone^{\mathsf T}}{d}.
\label{eq:centered_covariance}
\end{eqnarray}
This isotropy is the geometric origin of the trace formula for CGP.

\subsection{CGP, CGPD, and the coherence distribution}

For the unitary channel $\UU_{\hat{U}}(\hat{X})=\hat{U}\hat{X}\hat{U}^\dagger$, define the generated-coherence random variable
\begin{eqnarray}
X_{B,\hat{U}}(\pp) = \cB\!\left(\hat{U}\hat{\rho}_{\pp}\hat{U}^\dagger\right).
\label{eq:random_variable}
\end{eqnarray}
The full probability law of $X_{B,\hat{U}}$ is the coherence-generation distribution of $\hat{U}$ relative to $B$.

\begin{definition}[CGP and CGPD]
The coherence-generating power and coherence-generating power deviation are
\begin{eqnarray}
\CB(\hat{U}) &=& \mathbb E_{\nu_d}[X_{B,\hat{U}}], \nonumber \\
\DCB(\hat{U}) &=& \sqrt{\Var_{\nu_d}[X_{B,\hat{U}}]}.
\label{eq:definitions}
\end{eqnarray}
\end{definition}

CGP is the average height of the coherence-generation landscape over the simplex, while CGPD is its width. The coefficient of variation $\DCB/\CB$ is meaningful when $\CB>0$, but it mixes an unavoidable dimension-dependent fluctuation with additional structural anisotropy. 

The moment-generating and cumulant-generating functions are given as
\begin{eqnarray}
Z_{\hat{U}}(\lambda) = \mathbb E_{\nu_d}[e^{\lambda X_{B,\hat{U}}}],
\quad
K_{\hat{U}}(\lambda) = \log Z_{\hat{U}}(\lambda).
\label{eq:cumulant_generator}
\end{eqnarray}
Then, $K_{\hat{U}}'(0)=\CB(\hat{U})$ and $K_{\hat{U}}''(0)=\DCB(\hat{U})^2$. The pair $(\CB, \DCB)$ is therefore the first nontrivial truncation of a hierarchy of coherence-generation cumulants.

\subsection{Coherence-generation profile and terminology}

The distribution itself is cleanly defined as the push-forward of the simplex measure. For every Borel set $\mathcal A\subset\mathbb R$, let
\begin{eqnarray}
\mu_{B,\hat{U}}(\mathcal A) = \nu_d\!\left(\{\pp\in\Delta_{d-1}: X_{B,\hat{U}}(\pp)\in\mathcal A\}\right),
\label{eq:pushforward}
\end{eqnarray}
and write $F_{B,\hat{U}}(x) = \mu_{B,\hat{U}}(({-}\infty,x])$ for the cumulative distribution function. CGP and CGPD are properties of this coherence-generation profile, rather than independent state functionals. Two dynamics may therefore have the same CGP while differing in width, skewness, tails, or multimodality.

This observation motivates precise terminology. We call CGPD the absolute input-population sensitivity: it measures the typical change in generated coherence across the incoherent simplex. A generator is isotropic when it treats population directions as evenly as allowed by its nonzero mean, and selective when the response is concentrated on particular population directions. Because the maximally mixed input always yields zero coherence under a unitary, no nontrivial generator can have zero absolute sensitivity. Accordingly, isotropic generation should not be identified with $\DCB=0$; it is characterized instead by saturation of the dimension-dependent lower bound, which will be introduced in Sec.~\ref{sec:structure}. 

Quantiles provide a complementary language. The lower $\alpha$-quantile
\begin{eqnarray}
q_\alpha(\hat{U})&=&\inf\{x:F_{B,\hat{U}}(x)\geq\alpha\}
\label{eq:quantile}
\end{eqnarray}
answers how much coherence is guaranteed for a prescribed fraction of incoherent inputs. CGPD does not determine every quantile, but it gives distribution-independent guarantees and is the first statistic beyond the mean that constrains them.

\section{Population-space geometry of coherence generation}\label{sec:geometry}

\subsection{Unistochastic transition matrix}

The absolute squares of a unitary define the unistochastic matrix
\begin{eqnarray}
(\hat{P}_{\hat{U}})_{ij} = \abs{\bra{i}\hat{U}\ket{j}}^2.
\label{eq:unistochastic}
\end{eqnarray}
It is doubly stochastic: $\hat{P}_{\hat{U}} \vone = \vone$ and $\hat{P}_{\hat{U}}^{\mathsf T} \vone = \vone$. When a diagonal input has population $\pp$, the output population is $\bm q=\hat{P}_{\hat{U}}\pp$.

\begin{proposition}[Coherence-generation matrix]
\label{prop:gamma}
Define
\begin{eqnarray}
\hat{\Gamma}_{\hat{U}} = \one-\hat{P}_{\hat{U}}^{\mathsf T}\hat{P}_{\hat{U}}.
\label{eq:gamma}
\end{eqnarray}
Then,
\begin{eqnarray}
X_{B,\hat{U}}(\pp) = \pp^{\mathsf T}\hat{\Gamma}_{\hat{U}}\pp =\xx^{\mathsf T}\hat{\Gamma}_{\hat{U}}\xx.
\label{eq:quadratic_form}
\end{eqnarray}
Moreover, $\hat{\Gamma}_{\hat{U}}$ is real symmetric, $0\leq\hat{\Gamma}_{\hat{U}}\leq\one$, and $\hat{\Gamma}_{\hat{U}}\vone=0$.
\end{proposition}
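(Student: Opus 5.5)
We start from the identity in Eq.~\eqref{eq:coherence}, $\cB(\hat{\rho})=\Tr(\hat{\rho}^2)-\Tr[\DB(\hat{\rho})^2]$, applied to $\hat{\rho}=\hat{U}\hat{\rho}_{\pp}\hat{U}^\dagger$. Unitary invariance of the purity gives $\Tr(\hat{\rho}^2)=\Tr(\hat{\rho}_{\pp}^2)=\sum_j p_j^2=\pp^{\mathsf T}\pp$. The diagonal of the output is $\bra{i}\hat{U}\hat{\rho}_{\pp}\hat{U}^\dagger\ket{i}=\sum_j\abs{\bra{i}\hat{U}\ket{j}}^2p_j=(\hat{P}_{\hat{U}}\pp)_i$, so $\Tr[\DB(\hat{\rho})^2]=\|\hat{P}_{\hat{U}}\pp\|^2=\pp^{\mathsf T}\hat{P}_{\hat{U}}^{\mathsf T}\hat{P}_{\hat{U}}\pp$. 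Subtracting the two terms yields $X_{B,\hat{U}}(\pp)=\pp^{\mathsf T}\hat{\Gamma}_{\hat{U}}\pp$ directly.

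Next we establish the structural properties. Symmetry and reality are immediate because $\hat{P}_{\hat{U}}$ is a real matrix and $\hat{P}_{\hat{U}}^{\mathsf T}\hat{P}_{\hat{U}}$ is a Gram matrix. Double stochasticity gives $\hat{P}_{\hat{U}}^{\mathsf T}\hat{P}_{\hat{U}}\vone=\hat{P}_{\hat{U}}^{\mathsf T}\vone=\vone$, hence $\hat{\Gamma}_{\hat{U}}\vone=0$. For the operator bounds, $\hat{\Gamma}_{\hat{U}}\le\one$ is equivalent to $\hat{P}_{\hat{U}}^{\mathsf T}\hat{P}_{\hat{U}}\ge0$, which holds for any Gram matrix. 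The bound $\hat{\Gamma}_{\hat{U}}\ge0$ is equivalent to $\|\hat{P}_{\hat{U}}\|_{\mathrm{op}}\le1$. We obtain this from the Schur test (or Birkhoff--von Neumann): since every row and column sum of the nonnegative matrix $\hat{P}_{\hat{U}}$ equals $1$, $\|\hat{P}_{\hat{U}}\|_{\mathrm{op}}\le\sqrt{\|\hat{P}_{\hat{U}}\|_1\|\hat{P}_{\hat{U}}\|_\infty}=1$. Alternatively, $\hat{P}_{\hat{U}}$ is a convex combination of permutation matrices, each of which has norm $1$. An independent check is that $\pp^{\mathsf T}\hat{\Gamma}_{\hat{U}}\pp=\cB\ge0$ on the simplex, but positivity on all of $\mathbb{R}^d$ needs the norm argument, because the quadratic form must be controlled off the cone of population vectors.

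Finally, the centered form follows by substituting $\pp=\xx+\vone/d$ and expanding:
\begin{eqnarray}
\pp^{\mathsf T}\hat{\Gamma}_{\hat{U}}\pp=\xx^{\mathsf T}\hat{\Gamma}_{\hat{U}}\xx+\frac{2}{d}\vone^{\mathsf T}\hat{\Gamma}_{\hat{U}}\xx+\frac{1}{d^2}\vone^{\mathsf T}\hat{\Gamma}_{\hat{U}}\vone.
\nonumber
\end{eqnarray}
The last two terms vanish because $\hat{\Gamma}_{\hat{U}}\vone=0$ and $\hat{\Gamma}_{\hat{U}}$ is symmetric.

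The only step requiring more than bookkeeping is $\hat{\Gamma}_{\hat{U}}\ge0$ on the full space rather than just on the simplex. We handle it with the Schur-test norm bound for doubly stochastic matrices.
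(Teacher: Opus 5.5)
Your proof is correct and takes essentially the same route as the paper: purity invariance plus the output diagonal $\hat{P}_{\hat{U}}\pp$ give the quadratic form, the doubly stochastic operator-norm bound gives $\hat{\Gamma}_{\hat{U}}\geq0$, and zero row sums give the centered form. You also usefully make explicit that $\hat{\Gamma}_{\hat{U}}\leq\one$ needs only $\hat{P}_{\hat{U}}^{\mathsf T}\hat{P}_{\hat{U}}\geq0$, and that $\hat{\Gamma}_{\hat{U}}\vone=0$ uses $\hat{P}_{\hat{U}}^{\mathsf T}\vone=\vone$ as well as $\hat{P}_{\hat{U}}\vone=\vone$.

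One aside is overstated: nonnegativity of $\cB$ on the simplex already implies $\hat{\Gamma}_{\hat{U}}\geq0$, so the norm argument is convenient but not necessary. The reason is that the centered simplex contains a neighborhood of $0$ in $\vone^\perp$, so homogeneity gives $\xx^{\mathsf T}\hat{\Gamma}_{\hat{U}}\xx\geq0$ on all of $\vone^\perp$, and $\hat{\Gamma}_{\hat{U}}\vone=0$ extends this to $\mathbb{R}^d$.
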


\begin{proof}---The diagonal of $\hat{U}\hat{\rho}_{\pp}\hat{U}^\dagger$ is $\hat{P}_{\hat{U}}\pp$. Unitary invariance of purity and Eq.~(\ref{eq:coherence}) give
\begin{eqnarray}
X_{B,\hat{U}}(\pp) &=& \pp^{\mathsf T}\pp - (\hat{P}_{\hat{U}}\pp)^{\mathsf T}(\hat{P}_{\hat{U}}\pp) \nonumber \\
	&=& \pp^{\mathsf T}(\one-\hat{P}_{\hat{U}}^{\mathsf T}\hat{P}_{\hat{U}})\pp.
\end{eqnarray}
Because $\hat{P}_{\hat{U}}$ is doubly stochastic, its Euclidean operator norm is at most one, so $\hat{\Gamma}_{\hat{U}}\geq0$ and $\hat{\Gamma}_{\hat{U}}\leq\one$. Finally, $\hat{P}_{\hat{U}}\vone=\vone$ gives $\hat{\Gamma}_{\hat{U}}\vone=0$, which also implies the centered form in Eq.~(\ref{eq:quadratic_form}).
\end{proof}

This representation gives a sharp interpretation. The centered vector $\xx$ contains the classical population imbalances of the input. The eigenvectors of $\hat{\Gamma}_{\hat{U}}$ on $\vone^\perp$ are population modes, and the corresponding eigenvalues measure how efficiently each mode is converted into off-diagonal coherence. A monomial unitary has $\hat{P}_{\hat{U}}$ equal to a permutation matrix and $\hat{\Gamma}_{\hat{U}}=0$. A perfectly mixing complex Hadamard unitary has $\hat{P}_{\hat{U}}=\vone\vone^{\mathsf T}/d$ and $\hat{\Gamma}_{\hat{U}}=\Hperp$.

\subsection{Mean strength versus directional selectivity}

Let the nonzero eigenvalues of $\hat{\Gamma}_{\hat{U}}$ on $\vone^\perp$ be $\lambda_1,\ldots,\lambda_{d-1}$. Then,
\begin{eqnarray}
X_{B,\hat{U}}(\pp) = \sum_{\alpha=1}^{d-1}\lambda_\alpha y_\alpha(\pp)^2,
\label{eq:mode_expansion}
\end{eqnarray}
where $y_\alpha$ is the projection of $\xx$ onto the $\alpha$th population mode. CGP averages this quadratic response over the isotropic covariance in Eq.~(\ref{eq:centered_covariance}) and therefore retains only $\sum_\alpha \lambda_\alpha$. CGPD is sensitive to how unevenly the response is distributed among modes and to how those modes are oriented relative to the simplex vertices.


\subsection{Column participation and correlation structure}

Define the column inverse participation ratios and the overlap invariant
\begin{eqnarray}
s_j = \sum_i(\hat{P}_{\hat{U}})_{ij}^2,
\quad
S = \sum_j s_j,
\quad
R = \left\| \hat{P}_{\hat{U}}^{\mathsf T}\hat{P}_{\hat{U}} \right\|_F^2.
\label{eq:ipr}
\end{eqnarray}
Then,
\begin{eqnarray}
\Tr\hat{\Gamma}_{\hat{U}} &=& d-S, \nonumber \\
\Tr(\hat{\Gamma}_{\hat{U}}^2) &=& d-2S+R, \nonumber \\
\left\| \operatorname{diag}\hat{\Gamma}_{\hat{U}} \right\|_2^2 &=& \sum_j(1-s_j)^2.
\label{eq:gamma_invariants}
\end{eqnarray}
The quantity $S = \sum_{ij}\abs{\bra{i}\hat{U}\ket{j}}^4$ is the aggregate statistic already appearing in CGP. The invariant $R$ contains correlations between different transition-probability columns, and the final term records column-resolved inhomogeneity. These additional structures enter CGPD but not CGP.

\section{Exact moment theory}\label{sec:moments}

\subsection{Haar--Dirichlet tensor moments}

For $n \geq 1$, define the $n$-copy moment state of the uniform incoherent ensemble,
\begin{eqnarray}
\hat{\Omega}_B^{(n)}&=&\mathbb E_{\nu_d}[\hat{\rho}_{\pp}^{\otimes n}].
\label{eq:omega_n}
\end{eqnarray}
The Haar integration gives
\begin{eqnarray}
\hat{\Omega}_B^{(n)} &=& \DB^{\otimes n}\!\left[\frac{1}{(d)_n}\sum_{\pi\in S_n}\hat{T}_\pi\right], \nonumber \\
	(d)_n &=& d(d+1)\cdots(d+n-1),
\label{eq:tensor_moment}
\end{eqnarray}
where $\hat{T}_\pi$ permutes tensor factors~\cite{CollinsSniady2006}. In scalar form,
\begin{eqnarray}
\mathbb E[p_{i_1}\cdots p_{i_n}]&=&\frac{1}{(d)_n}\sum_{\pi\in S_n}\prod_{a=1}^n\delta_{i_a,i_{\pi(a)}}.
\label{eq:dirichlet_general}
\end{eqnarray}
The fourth-order case is sufficient for CGPD. Appendix~\ref{app:haar} derives Eq.~(\ref{eq:tensor_moment}), while Appendix~\ref{app:fourth} classifies the surviving permutations in the second-moment contraction.

\subsection{Closed formulas for the first two moments}

\begin{theorem}[Exact coherence moments]
\label{thm:moments}
Let
\begin{eqnarray}
\Dfour = d(d+1)(d+2)(d+3).
\label{eq:Dfour}
\end{eqnarray}
For every $d$-dimensional unitary operator $\hat{U}$,
\begin{eqnarray}
M_1(\hat{U}) = \mathbb E[X_{B,\hat{U}}] = \frac{\Tr\hat{\Gamma}_{\hat{U}}}{d(d+1)},
\label{eq:m1}
\end{eqnarray}
and
\begin{eqnarray}
M_2(\hat{U}) &=& \mathbb E[X_{B,\hat{U}}^2] \nonumber \\
	&=&\frac{(\Tr\hat{\Gamma}_{\hat{U}})^2 + 2\Tr(\hat{\Gamma}_{\hat{U}}^2) + 6\sum_{j=1}^d(\hat{\Gamma}_{\hat{U}})_{jj}^2}{\Dfour}.
\label{eq:m2}
\end{eqnarray}
Consequently,
\begin{eqnarray}
\DCB(\hat{U})^2&=&\frac{(\Tr\hat{\Gamma}_{\hat{U}})^2+2\Tr(\hat{\Gamma}_{\hat{U}}^2) + 6\left\|\operatorname{diag}\hat{\Gamma}_{\hat{U}}\right\|_2^2}{\Dfour} \nonumber \\
&& \quad -\frac{(\Tr\hat{\Gamma}_{\hat{U}})^2}{d^2(d+1)^2}.
\label{eq:cgpd_closed}
\end{eqnarray}
\end{theorem}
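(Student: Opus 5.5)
By Proposition~\ref{prop:gamma}, $X_{B,\hat{U}}(\pp)=\pp^{\mathsf T}\hat{\Gamma}_{\hat{U}}\pp=\sum_{ij}\Gamma_{ij}p_ip_j$ with $\Gamma=\hat{\Gamma}_{\hat{U}}$ real symmetric and $\Gamma\vone=0$. The plan is therefore to take expectations monomial by monomial, using the Dirichlet moment formula~(\ref{eq:dirichlet_general}). Both moments become contractions of $\Gamma$ with sums of products of Kronecker deltas indexed by permutations. The identity $\Gamma\vone=0$ is not needed for the contractions themselves; it only explains why the $\pp$- and $\xx$-forms agree.

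\emph{First moment.} Use Eq.~(\ref{eq:dirichlet_second}):
\[
\mathbb E[p_ip_j]=\frac{1+\delta_{ij}}{d(d+1)}.
\]
This gives
\[
M_1=\frac{\sum_{ij}\Gamma_{ij}+\Tr\Gamma}{d(d+1)}.
\]
The first sum is $\vone^{\mathsf T}\Gamma\vone=0$, which yields Eq.~(\ref{eq:m1}).

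\emph{Second moment.} Write
\[
M_2=\sum_{ijkl}\Gamma_{ij}\Gamma_{kl}\,\mathbb E[p_ip_jp_kp_l].
\]
By Eq.~(\ref{eq:dirichlet_general}) with $n=4$,
\[
\mathbb E[p_ip_jp_kp_l]=\frac{1}{\Dfour}\sum_{\pi\in S_4}\prod_a\delta_{i_a,i_{\pi(a)}},
\]
where $(i_1,i_2,i_3,i_4)=(i,j,k,l)$. The delta product for $\pi$ forces all indices within each cycle of $\pi$ to coincide, so only the cycle type of $\pi$, and how its cycles sit relative to the pairing $\{12\}\{34\}$, matters. I would classify the 24 permutations accordingly.

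\begin{itemize}
\item Identity (1 permutation): no constraint, giving $(\vone^{\mathsf T}\Gamma\vone)^2=0$.
\item Transpositions (6 permutations). The two transpositions $(12)$ and $(34)$ give $\Tr\Gamma\cdot 0=0$. The four cross transpositions, such as $(13)$, give $\vone^{\mathsf T}\Gamma^2\vone=0$.
\item Products of two transpositions (3 permutations). The permutation $(12)(34)$ gives $(\Tr\Gamma)^2$. The other two, $(13)(24)$ and $(14)(23)$, give $\Tr\Gamma^2$ each.
\item 3-cycles (8 permutations): each leaves one free index, which contracts against a row sum of $\Gamma$, giving $0$.
\item 4-cycles (6 permutations): all four indices are equal, giving $\sum_j\Gamma_{jj}^2$ each.
\end{itemize}

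Summing the surviving classes gives
\[
(\Tr\Gamma)^2+2\Tr\Gamma^2+6\|\operatorname{diag}\Gamma\|_2^2,
\]
divided by $\Dfour$, which is Eq.~(\ref{eq:m2}). Subtracting $M_1^2$ then gives Eq.~(\ref{eq:cgpd_closed}).

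The main obstacle is the bookkeeping of the 3-cycle class: checking that every one of the 8 three-cycles leaves an index attached only to a factor $\Gamma_{\cdot\cdot}$ whose other index is summed freely, so that it is killed by $\Gamma\vone=0$. I would handle this by writing each 3-cycle explicitly. For example, $(123)$ forces $i=j=k$ with $l$ free, and the free $l$ multiplies $\Gamma_{kl}$, giving $\sum_l\Gamma_{kl}=0$. The others follow by the symmetry $i\leftrightarrow j$, $k\leftrightarrow l$, $(ij)\leftrightarrow(kl)$. As a sanity check I would verify the result in $d=2$ with $\Gamma=\Hperp$ against a direct integral over the segment $p_1\in[0,1]$.
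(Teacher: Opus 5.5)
Your proposal is correct and is essentially the paper's proof. Both expand $\mathbb E[p_ip_jp_kp_l]$ over $S_4$, with the three double transpositions giving $(\Tr\Gamma)^2+2\Tr\Gamma^2$ and the six 4-cycles giving $6\sum_j\Gamma_{jj}^2$. The paper disposes of the identity, the transpositions, and the 3-cycles in one stroke rather than enumerating them: a fixed point leaves an index in a single factor, hence a row sum. So the 3-cycle bookkeeping you flag as the main obstacle is not really one.

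The only slip is your opening remark that $\Gamma\vone=0$ is not needed for the contractions. It is exactly what kills every fixed-point class and the $\vone^{\mathsf T}\Gamma\vone$ term in $M_1$, as your own case analysis then uses.
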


\begin{proof}---Using Proposition~\ref{prop:gamma} and Eq.~(\ref{eq:dirichlet_second}),
\begin{eqnarray}
M_1(\hat{U}) = \sum_{ij}(\hat{\Gamma}_{\hat{U}})_{ij}\mathbb E[p_ip_j] = \frac{\vone^{\mathsf T}\hat{\Gamma}_{\hat{U}}\vone+\Tr\hat{\Gamma}_{\hat{U}}}{d(d+1)},
\end{eqnarray}
which gives Eq.~(\ref{eq:m1}) because $\hat{\Gamma}_{\hat{U}}\vone=0$. For $M_2$, insert the fourth-order instance of Eq.~(\ref{eq:dirichlet_general}) into the contraction of two quadratic forms. All permutations with a fixed point produce a row-sum factor and vanish. The three double transpositions produce $(\Tr\hat{\Gamma}_{\hat{U}})^2+2\Tr(\hat{\Gamma}_{\hat{U}}^2)$, and the six four-cycles produce $6\sum_j(\hat{\Gamma}_{\hat{U}})_{jj}^2$. This yields Eq.~(\ref{eq:m2}); subtracting $M_1^2$ gives Eq.~(\ref{eq:cgpd_closed}). The complete index contraction is given in Appendix~\ref{app:fourth}.
\end{proof}

The first-moment formula recovers the CGP expression,
\begin{eqnarray}
\CB(\hat{U}) = \frac{1}{d(d+1)}\left(d-\sum_{ij}|\bra{i}\hat{U}\ket{j}|^4\right).
\label{eq:known_cgp}
\end{eqnarray}
The second moment can be written entirely in terms of $S$, $R$, and the individual $s_j$ through Eq.~(\ref{eq:gamma_invariants}). Explicitly,
\begin{eqnarray}
M_2(\hat{U}) = \frac{(d-S)^2+2(d-2S+R) + 6\sum_j(1-s_j)^2}{\Dfour}.
\label{eq:m2_transition}
\end{eqnarray}
Thus, two unitaries with equal $S$ necessarily have equal CGP but can have different CGPD through $R$ or the dispersion of the $s_j$.

\subsection{Directional information under inversion}

CGP is invariant under $\hat{U}\mapsto \hat{U}^\dagger$ because $\hat{P}_{\hat{U}}^{\mathsf T}\hat{P}_{\hat{U}}$ and $\hat{P}_{\hat{U}}\hat{P}_{\hat{U}}^{\mathsf T}$ have the same trace. CGPD need not be invariant because its diagonal term distinguishes column and row participation profiles. Define
\begin{eqnarray}
r_i = \sum_j(\hat{P}_{\hat{U}})_{ij}^2,
\quad
s_j=\sum_i(\hat{P}_{\hat{U}})_{ij}^2.
\label{eq:row_col_ipr}
\end{eqnarray}
Then,
\begin{eqnarray}
&& \DCB(\hat{U})^2 - \DCB(\hat{U}^\dagger)^2 \nonumber \\
&& \qquad = \frac{6}{\Dfour} \left( \sum_j(1-s_j)^2 - \sum_i(1-r_i)^2 \right).
\label{eq:directional_difference}
\end{eqnarray}
The signed difference is a forward--inverse asymmetry of coherence selectivity. It vanishes for symmetric transition matrices, for complex Hadamard unitaries, and in dimension two, but not in general.

For example, the real unitary
\begin{eqnarray}
\hat{U}_0 &=& \frac12\left(
\begin{array}{rrrr}
1&-1&0&\sqrt2\\
1&1&\sqrt2&0\\
-1&-1&\sqrt2&0\\
-1&1&0&\sqrt2
\end{array}\right)
\label{eq:u0}
\end{eqnarray}
has
\begin{eqnarray}
\CB(\hat{U}_0) &=& \CB(\hat{U}_0^\dagger)=\frac18, \nonumber \\
\DCB(\hat{U}_0)^2 &=& \frac{59}{6720}, \nonumber \\
\DCB(\hat{U}_0^\dagger)^2 &=& \frac1{120}.
\label{eq:u0_values}
\end{eqnarray}
Hence, CGPD can retain directional information discarded by the average.

\section{Structural properties and fluctuation geometry}\label{sec:structure}

\subsection{Invariances and basis covariance}

A $B$-incoherent unitary operator is monomial: it permutes basis states and attaches phases.

\begin{proposition}[Symmetries]
\label{prop:symmetry}
Let $\hat{W}_L$ and $\hat{W}_R$ be $B$-incoherent unitaries. Then, the full distributions of $X_{B,\hat{U}}$, $X_{B,\hat{W}_L\hat{U}}$, and $X_{B,\hat{U}\hat{W}_R}$ coincide. If $\widetilde{B}=\{\hat{V}\ket{i}\}$, then
\begin{eqnarray}
C_{\widetilde B}(\hat{U}) &=& \CB(\hat{V}^\dagger \hat{U}\hat{V}), \nonumber \\
\Delta C_{\widetilde B}(\hat{U}) &=& \DCB(\hat{V}^\dagger \hat{U}\hat{V}).
\label{eq:basis_covariance}
\end{eqnarray}
\end{proposition}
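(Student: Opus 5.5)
The plan is to work at the level of the random variable itself, using Proposition~\ref{prop:gamma}, rather than through the moment formulas. Write $\hat{W}_L=\hat{D}_L\hat{\Pi}_L$ and $\hat{W}_R=\hat{\Pi}_R\hat{D}_R$, where $\hat{D}_{L,R}$ are diagonal phase unitaries and $\hat{\Pi}_{L,R}$ are permutation unitaries. This monomial form is exactly the characterization of $B$-incoherent unitaries quoted just before the statement. The key observation is how the unistochastic matrix transforms. Phases drop out under absolute squares, so
\begin{eqnarray}
\hat{P}_{\hat{W}_L\hat{U}}=\Pi_L\hat{P}_{\hat{U}},
\qquad
\hat{P}_{\hat{U}\hat{W}_R}=\hat{P}_{\hat{U}}\Pi_R ,
\end{eqnarray}
where $\Pi_{L,R}$ are the real permutation matrices on population space.

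For left multiplication, $\Pi_L^{\mathsf T}\Pi_L=\one$ gives $\hat{\Gamma}_{\hat{W}_L\hat{U}}=\hat{\Gamma}_{\hat{U}}$. The function $X_{B,\hat{W}_L\hat{U}}$ then coincides pointwise with $X_{B,\hat{U}}$ on $\Delta_{d-1}$, so the push-forward measures in Eq.~(\ref{eq:pushforward}) agree. One could also see this directly: $\cB$ is invariant under conjugation by monomial unitaries, since they only permute and rephase matrix elements.

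For right multiplication, $\hat{\Gamma}_{\hat{U}\hat{W}_R}=\Pi_R^{\mathsf T}\hat{\Gamma}_{\hat{U}}\Pi_R$, so $X_{B,\hat{U}\hat{W}_R}(\pp)=X_{B,\hat{U}}(\Pi_R\pp)$. The distributions coincide because $\nu_d$ is invariant under coordinate permutations: the flat Dirichlet$(1,\ldots,1)$ density is constant, and permutations map the simplex onto itself with unit Jacobian. Equivalently, in the Haar-dephasing picture of Eq.~(\ref{eq:haar_dephase}), $\hat{W}_R\ket{\psi}$ is again Haar distributed, and $\DB(\hat{W}_R\ketbra{\psi}{\psi}\hat{W}_R^\dagger)$ has permuted populations. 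This measure-invariance step is the only point needing care and is the main (mild) obstacle.

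For basis covariance, rotate the reference basis. Dephasing in $\widetilde B$ satisfies $\mathcal D_{\widetilde B}(\hat{X})=\hat{V}\DB(\hat{V}^\dagger\hat{X}\hat{V})\hat{V}^\dagger$, which gives $c_{\widetilde B}(\hat\rho)=\cB(\hat{V}^\dagger\hat\rho\hat{V})$ by unitary invariance of the Hilbert--Schmidt norm. A $\widetilde B$-incoherent input has the form $\hat{V}\hat{\rho}_{\pp}\hat{V}^\dagger$, with $\pp$ drawn from the same uniform simplex measure. Hence
\begin{eqnarray}
X_{\widetilde B,\hat{U}}(\pp)=\cB\bigl(\hat{V}^\dagger\hat{U}\hat{V}\hat{\rho}_{\pp}\hat{V}^\dagger\hat{U}^\dagger\hat{V}\bigr)=X_{B,\hat{V}^\dagger\hat{U}\hat{V}}(\pp)
\end{eqnarray}
pointwise. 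The two distributions are therefore identical, and in particular their means and standard deviations agree, which is Eq.~(\ref{eq:basis_covariance}).
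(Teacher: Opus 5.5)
Your proof is correct and follows the same route as the paper's. Left multiplication permutes the rows of $\hat{P}_{\hat{U}}$ and leaves $\hat{\Gamma}_{\hat{U}}$ unchanged. Right multiplication conjugates $\hat{\Gamma}_{\hat{U}}$ by a permutation, so permutation invariance of $\nu_d$ gives equal distributions. Basis covariance follows by writing $\mathcal{D}_{\widetilde B}=\hat{V}\DB(\hat{V}^\dagger\,\cdot\,\hat{V})\hat{V}^\dagger$. You simply make explicit what the paper leaves implicit: the monomial factorization, the identity $X_{B,\hat{U}\hat{W}_R}(\pp)=X_{B,\hat{U}}(\Pi_R\pp)$, and the pointwise identity in the rotated basis (which in fact gives equality of the full distributions, not only of $C$ and $\Delta C$).
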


\begin{proof}---Left multiplication by $\hat{W}_L$ permutes rows of $\hat{P}_{\hat{U}}$ and leaves $\hat{P}_{\hat{U}}^{\mathsf T}\hat{P}_{\hat{U}}$ unchanged. Right multiplication by $\hat{W}_R$ permutes columns, so $\hat{\Gamma}_{\hat{U}}$ is conjugated by a permutation matrix. The uniform simplex measure is invariant under the same coordinate permutation, proving equality of the full distributions. Basis covariance follows by expressing dephasing and matrix elements in the transformed basis.
\end{proof}

The symmetry result shows that CGPD is a property of the nontrivial coherence-generating content of $\hat{U}$, not of basis relabelings or phases.

\subsection{Exact isotropic-plus-selective decomposition}

The fluctuation admits an exact isotropic-plus-selective decomposition. Define $\gamma_d=2d/\sqrt{(d-1)(d+2)(d+3)}$ and let
\begin{eqnarray}
\bar\lambda_{\hat{U}} =\frac{\Tr\hat{\Gamma}_{\hat{U}}}{d-1}, \quad \hat{A}_{\hat{U}} = \hat{\Gamma}_{\hat{U}}-\bar\lambda_{\hat{U}}\Hperp.
\label{eq:anisotropy_matrix}
\end{eqnarray}
Then, $\hat{A}_{\hat{U}}\vone=0$ and $\Tr\hat{A}_{\hat{U}}=0$. The matrix $\hat{A}_{\hat{U}}$ is the traceless anisotropic component of the coherence response.

\begin{theorem}[Fluctuation decomposition]
\label{thm:decomposition}
For every unitary operator $\hat{U}$,
\begin{eqnarray}
\DCB(\hat{U})^2 = \gamma_d^2\CB(\hat{U})^2 + \Xi_B(\hat{U})^2,
\label{eq:variance_decomposition}
\end{eqnarray}
where
\begin{eqnarray}
\Xi_B(\hat{U})^2 = \frac{2\|\hat{A}_{\hat{U}}\|_F^2 + 6\left\|\operatorname{diag}\hat{A}_{\hat{U}}\right\|_2^2}{\Dfour} \geq 0.
\label{eq:excess_selectivity}
\end{eqnarray}
Moreover, $\Xi_B(\hat{U})=0$ if and only if $\hat{\Gamma}_{\hat{U}}$ is isotropic on $\vone^\perp$.
\end{theorem}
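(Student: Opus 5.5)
Substitute the split $\hat{\Gamma}_{\hat{U}}=\bar\lambda\Hperp+\hat{A}_{\hat{U}}$, with $\bar\lambda=\bar\lambda_{\hat{U}}$, into the closed variance formula~(\ref{eq:cgpd_closed}) of Theorem~\ref{thm:moments}. Each of the three invariants $\Tr\hat{\Gamma}$, $\Tr\hat{\Gamma}^2$ and $\|\operatorname{diag}\hat{\Gamma}\|_2^2$ then decomposes into an isotropic part, a cross term and an anisotropic part. The cross terms should cancel, the isotropic parts should collapse to $\gamma_d^2\CB^2$, and the anisotropic parts should give $\Xi_B^2$ exactly.

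First, $\Tr\hat{\Gamma}=(d-1)\bar\lambda$, since $\Tr\Hperp=d-1$ and $\Tr\hat{A}=0$. Second, $\Tr\hat{\Gamma}^2=(d-1)\bar\lambda^2+2\bar\lambda\Tr(\Hperp\hat{A})+\|\hat{A}\|_F^2$. Because $\hat{A}\vone=0$, we have $\Hperp\hat{A}=\hat{A}$, so the cross term is $2\bar\lambda\Tr\hat{A}=0$. Third, every diagonal entry of $\Hperp$ equals $(d-1)/d$, so
\begin{eqnarray}
\|\operatorname{diag}\hat{\Gamma}\|_2^2 = \frac{(d-1)^2}{d}\bar\lambda^2 + \frac{2(d-1)}{d}\bar\lambda\sum_j \hat{A}_{jj} + \|\operatorname{diag}\hat{A}\|_2^2 ,
\end{eqnarray}
and the middle term again vanishes by $\Tr\hat{A}=0$. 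Thus the decomposition is orthogonal in all three invariants, and the $\hat{A}$-dependent parts assemble into $\Xi_B^2$ as defined in Eq.~(\ref{eq:excess_selectivity}).

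The only real computation is the isotropic remainder. Using $\CB=(d-1)\bar\lambda/[d(d+1)]$ from Eq.~(\ref{eq:m1}), we must verify
\begin{eqnarray}
\frac{(d-1)^2\bar\lambda^2+2(d-1)\bar\lambda^2+6(d-1)^2\bar\lambda^2/d}{\Dfour}-\frac{(d-1)^2\bar\lambda^2}{d^2(d+1)^2}=\gamma_d^2\,\frac{(d-1)^2\bar\lambda^2}{d^2(d+1)^2}.
\end{eqnarray}
Factoring out $(d-1)\bar\lambda^2/[d^2(d+1)^2(d+2)(d+3)]$, the numerator becomes $(d+1)[d(d+1)+6(d-1)]-(d-1)(d+2)(d+3)$. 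This should reduce to $4d^2$, which gives $\gamma_d^2=4d^2/[(d-1)(d+2)(d+3)]$, matching the definition. This polynomial bookkeeping is the main place where an error could occur, so I would also check it against $d=2$, where every $\hat{A}_{\hat{U}}$ vanishes.

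For the equality statement, $\Xi_B\geq0$ is immediate because $\Xi_B^2$ is a sum of squared norms. Moreover, $\Xi_B=0$ forces $\|\hat{A}\|_F=0$, i.e.\ $\hat{A}=0$, so $\hat{\Gamma}_{\hat{U}}=\bar\lambda\Hperp$ is a multiple of the identity on $\vone^\perp$. Conversely, if $\hat{\Gamma}_{\hat{U}}$ is isotropic on $\vone^\perp$, then together with $\hat{\Gamma}\vone=0$ we get $\hat{\Gamma}=c\Hperp$. Taking traces forces $c=\bar\lambda$, so $\hat{A}=0$ and hence $\Xi_B=0$.
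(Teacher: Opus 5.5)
Your proof is correct and follows the same route as the paper. The paper likewise uses $\Tr(\Hperp\hat{A}_{\hat{U}})=\Tr\hat{A}_{\hat{U}}=0$ and $\sum_j(\hat{A}_{\hat{U}})_{jj}=0$ to split $\Tr\hat{\Gamma}_{\hat{U}}^2$ and $\|\operatorname{diag}\hat{\Gamma}_{\hat{U}}\|_2^2$ into orthogonal isotropic and anisotropic pieces, substitutes them into Eq.~(\ref{eq:cgpd_closed}), and obtains $\gamma_d^2\CB^2$ from the $(\Tr\hat{\Gamma}_{\hat{U}})^2$ terms. Your explicit reduction of the isotropic numerator to $4d^2$ is exactly the bookkeeping the paper leaves implicit in the main text and records in Appendix~\ref{app:bound}, and that reduction is correct.
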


\begin{proof}---The orthogonality relations $\Tr(\Hperp \hat{A}_{\hat{U}})=\Tr\hat{A}_{\hat{U}}=0$ and $\sum_j(\hat{A}_{\hat{U}})_{jj}=0$ imply
\begin{eqnarray}
\Tr(\hat{\Gamma}_{\hat{U}}^2) &=& \frac{(\Tr\hat{\Gamma}_{\hat{U}})^2}{d-1} + \left\|\hat{A}_{\hat{U}}\right\|_F^2,\nonumber\\
\left\|\operatorname{diag}\hat{\Gamma}_{\hat{U}}\right\|_2^2 &=& \frac{(\Tr\hat{\Gamma}_{\hat{U}})^2}{d} + \left\|\operatorname{diag}\hat{A}_{\hat{U}}\right\|_2^2.
\end{eqnarray}
Insert these identities into Eq.~(\ref{eq:cgpd_closed}). The terms proportional to $(\Tr\hat{\Gamma}_{\hat{U}})^2$ reduce to $\gamma_d^2\CB(\hat{U})^2$, and the remaining terms give Eq.~(\ref{eq:excess_selectivity}). Vanishing of $\Xi_B$ is equivalent to $\hat{A}_{\hat{U}}=0$.
\end{proof}

\begin{remark}
Since $\Xi_B(\hat{U})^2 \geq 0$, the decomposition can be written as
\begin{eqnarray}
\DCB(\hat{U})^2-\gamma_d^2\CB(\hat{U})^2 = \Xi_B(\hat{U})^2 \geq 0.
\label{eq:bound_remainder}
\end{eqnarray}
It therefore immediately yields the universal lower bound stated below. Moreover, the bound is saturated if and only if $\Xi_B(\hat{U})=0$, equivalently, $\hat{A}_{\hat{U}}=0$ or $\hat{\Gamma}_{\hat{U}}=\bar\lambda_{\hat{U}}\Hperp$. When $\CB(\hat{U})>0$, one has $\bar\lambda_{\hat{U}}>0$, which gives the equality condition in Eq.~(\ref{eq:isotropic_gamma}).
\end{remark}

\subsection{Universal lower bound}

\begin{theorem}[Universal CGP--CGPD bound]
\label{thm:lower_bound}
For every unitary operator $\hat{U}$ on $\mathbb C^d$,
\begin{eqnarray}
\DCB(\hat{U}) \geq \gamma_d\CB(\hat{U}).
\label{eq:lower_bound}
\end{eqnarray}
For $\CB(\hat{U})>0$, equality holds if and only if
\begin{eqnarray}
\hat{\Gamma}_{\hat{U}} &=& \lambda\Hperp
\label{eq:isotropic_gamma}
\end{eqnarray}
for some $\lambda>0$.
\end{theorem}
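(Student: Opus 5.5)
The plan is to derive Theorem~\ref{thm:lower_bound} directly from the fluctuation decomposition of Theorem~\ref{thm:decomposition}. It involves no new estimate, only the nonnegativity of the excess term and a short characterization of when it vanishes.

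\textbf{Inequality.} By Eq.~(\ref{eq:variance_decomposition}),
\[
\DCB(\hat{U})^2=\gamma_d^2\CB(\hat{U})^2+\Xi_B(\hat{U})^2 .
\]
Eq.~(\ref{eq:excess_selectivity}) expresses $\Xi_B(\hat{U})^2$ as a nonnegative combination of $\|\hat{A}_{\hat{U}}\|_F^2$ and $\|\operatorname{diag}\hat{A}_{\hat{U}}\|_2^2$, divided by $\Dfour>0$, so $\Xi_B(\hat{U})^2\ge 0$. This gives $\DCB(\hat{U})^2\ge\gamma_d^2\CB(\hat{U})^2$. Since $\CB(\hat{U})\ge 0$ (it is the mean of a nonnegative random variable), $\DCB\ge 0$ by definition, and $\gamma_d>0$ for $d\ge2$, taking square roots yields Eq.~(\ref{eq:lower_bound}). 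For $d=1$ the statement is vacuous, because $\CB=0$ there.

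\textbf{Equality.} Suppose $\CB(\hat{U})>0$. Equality holds in Eq.~(\ref{eq:lower_bound}) if and only if $\Xi_B(\hat{U})=0$. Because $\Xi_B^2$ contains the term $2\|\hat{A}_{\hat{U}}\|_F^2/\Dfour$, this forces $\hat{A}_{\hat{U}}=0$, and conversely $\hat{A}_{\hat{U}}=0$ makes both terms vanish. So equality is equivalent to $\hat{\Gamma}_{\hat{U}}=\bar\lambda_{\hat{U}}\Hperp$. By Eq.~(\ref{eq:m1}), $\CB(\hat{U})>0$ gives $\Tr\hat{\Gamma}_{\hat{U}}>0$, hence $\bar\lambda_{\hat{U}}>0$, which establishes the form in Eq.~(\ref{eq:isotropic_gamma}) with $\lambda=\bar\lambda_{\hat{U}}$.

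For the converse, suppose $\hat{\Gamma}_{\hat{U}}=\lambda\Hperp$ with $\lambda>0$. Then $\Tr\hat{\Gamma}_{\hat{U}}=\lambda(d-1)$, so $\bar\lambda_{\hat{U}}=\lambda$ and $\hat{A}_{\hat{U}}=0$. Hence $\Xi_B(\hat{U})=0$ and equality holds.

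\textbf{Main obstacle.} Essentially all the work is already contained in Theorem~\ref{thm:decomposition}. The only point that needs checking is that the coefficient of $(\Tr\hat{\Gamma}_{\hat{U}})^2$ after substitution into Eq.~(\ref{eq:cgpd_closed}) really equals $\gamma_d^2/[d^2(d+1)^2]$. If one did not trust that theorem, this is the step to verify. The coefficient is
\[
\frac{1}{\Dfour}\left(1+\frac{2}{d-1}+\frac{6}{d}\right)-\frac{1}{d^2(d+1)^2},
\]
and it should simplify to $4/[(d-1)(d+1)^2(d+2)(d+3)]$, which matches $\gamma_d^2/[d^2(d+1)^2]$. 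The check is routine algebra: bring everything to the common denominator $d^2(d-1)(d+1)^2(d+2)(d+3)$.
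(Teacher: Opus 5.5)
Your proposal is correct and matches the paper's own proof: the inequality follows from the nonnegativity of $\Xi_B(\hat{U})^2$ in Theorem~\ref{thm:decomposition}, and for $\CB(\hat{U})>0$ equality reduces to $\hat{A}_{\hat{U}}=0$ with $\bar\lambda_{\hat{U}}>0$, with the converse checked by substitution. Your coefficient check, which gives $4/[(d-1)(d+1)^2(d+2)(d+3)]$, is also right and agrees with Eq.~(\ref{eq:app_variance_identity}) in Appendix~\ref{app:bound}.
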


\begin{proof}---Eq.~(\ref{eq:bound_remainder}) gives the inequality immediately. For $\CB(\hat{U})>0$, equality holds if and only if $\Xi_B(\hat{U})=0$, which is equivalent to $\hat{\Gamma}_{\hat{U}}=\bar\lambda_{\hat{U}}\Hperp$ with $\bar\lambda_{\hat{U}}>0$. The converse follows by substitution.
\end{proof}

The theorem has a direct physical meaning. The maximally mixed input $\pp_*$ always produces zero coherence under a unital unitary evolution. If the average is positive, the landscape must rise away from the center, so some fluctuation is unavoidable. The equality condition describes an isotropic generator that treats every zero-sum population direction equally.

Eq.~(\ref{eq:variance_decomposition}) is central to the interpretation of CGPD. The first term is an irreducible radial fluctuation fixed entirely by the mean and dimension. The second term is excess fluctuation caused by directional selectivity. For $\CB(\hat{U})>0$, define
\begin{eqnarray}
\mathcal S_B(\hat{U}) = \frac{\DCB(\hat{U})}{\gamma_d\CB(\hat{U})} = \sqrt{1+\frac{\Xi_B(\hat{U})^2}{\gamma_d^2\CB(\hat{U})^2}} \geq 1.
\label{eq:selectivity_factor}
\end{eqnarray}
The value $\mathcal{S}_B=1$ identifies an isotropic generator; larger values quantify excess selectivity at fixed average power. Its reciprocal can be read as a uniformity index.

\begin{figure}[t]
\centering
\includegraphics[width=0.46\textwidth]{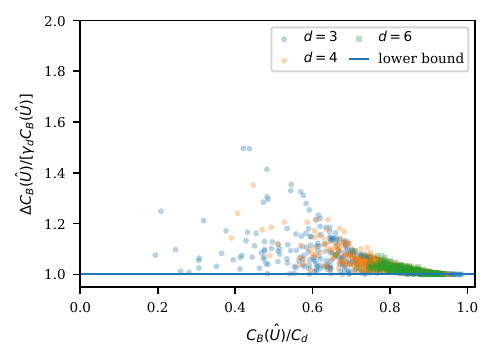}
\caption{Scaled relative deviation for Haar-random unitaries. The horizontal line is the exact lower bound in Eq.~(\ref{eq:lower_bound}). Points above the line have a nonzero anisotropic contribution $\Xi_B$. Highly mixing unitaries tend to approach the isotropic boundary.}
\label{fig:bound}
\end{figure}

Fig.~\ref{fig:bound} shows the allowed lower edge for Haar-random unitaries. The deviation above the edge is not sampling noise; it is the deterministic anisotropy term in Eq.~(\ref{eq:variance_decomposition}).

\subsection{Upper bound, faithfulness, and maximal generators}

\begin{proposition}[Allowed CGP--CGPD region]
\label{prop:upper_bound}
For every unitary operator $\hat{U}$,
\begin{eqnarray}
\gamma_d\CB(\hat{U}) &\leq& \DCB(\hat{U}) \nonumber \\
	&\leq& \sqrt{\CB(\hat{U})\left(\frac{d-1}{d} - \CB(\hat{U})\right)}.
\label{eq:two_sided_bound}
\end{eqnarray}
\end{proposition}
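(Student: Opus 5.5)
The lower bound in Eq.~(\ref{eq:two_sided_bound}) is exactly Theorem~\ref{thm:lower_bound}, so only the upper bound needs proof. The plan is to use a pointwise bound $X_{B,\hat{U}}(\pp)\le\frac{d-1}{d}$ together with the nonnegativity of $X_{B,\hat{U}}$. A bounded random variable then has its second moment controlled by its mean.

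First, we bound the landscape pointwise. By Proposition~\ref{prop:gamma}, $X_{B,\hat{U}}(\pp)=\xx^{\mathsf T}\hat{\Gamma}_{\hat{U}}\xx$ with $0\le\hat{\Gamma}_{\hat{U}}\le\one$. Hence
\begin{eqnarray}
0\le X_{B,\hat{U}}(\pp)\le\|\xx\|_2^2=\|\pp\|_2^2-\frac1d\le 1-\frac1d ,
\end{eqnarray}
since $\|\pp\|_2^2\le1$ on $\Delta_{d-1}$. Set $c=\frac{d-1}{d}$. This is the maximal Hilbert--Schmidt coherence of a state, consistent with the pointwise bound.

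Second, we apply the standard Bhatia--Davis-type argument. Since $X(c-X)\ge0$ pointwise, taking expectations gives $\mathbb E[X^2]\le c\,\mathbb E[X]$. Therefore
\begin{eqnarray}
\Var[X]\le c\,\CB-\CB^2=\CB\Bigl(\frac{d-1}{d}-\CB\Bigr).
\end{eqnarray}
Taking square roots yields the upper bound. Note that $\CB\le c$ automatically, so the radicand is nonnegative.

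No step is a real obstacle. The only point needing care is the pointwise inequality $\xx^{\mathsf T}\hat{\Gamma}\xx\le\|\xx\|^2$, which uses the operator bound $\hat{\Gamma}_{\hat{U}}\le\one$ already established in Proposition~\ref{prop:gamma}. One could optionally note when the bound is tight: this requires the distribution of $X$ to be supported on $\{0,c\}$. That cannot happen for the continuous Dirichlet law unless $\CB=0$, so the bound is generally strict but still delimits the allowed region.
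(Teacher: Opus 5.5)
Your proposal is correct and follows the paper's proof: the lower bound is quoted from Theorem~\ref{thm:lower_bound}, and the upper bound comes from $0\le X_{B,\hat{U}}\le (d-1)/d$, hence $X_{B,\hat{U}}^2\le \frac{d-1}{d}X_{B,\hat{U}}$, followed by taking expectations. The only difference is cosmetic: you derive the pointwise ceiling from $\hat{\Gamma}_{\hat{U}}\le\one$ and $\|\xx\|_2^2=\|\pp\|_2^2-1/d$, whereas the paper invokes the general bound $\cB(\hat{\rho})\le (d-1)/d$, which holds for every density operator.
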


\begin{proof}---The lower inequality is Theorem~\ref{thm:lower_bound}. For any density operator, $0\leq\cB(\hat{\rho}) \leq (d-1)/d$. Hence, $0 \leq X_{B,\hat{U}} \leq M_d=(d-1)/d$ and $X_{B,\hat{U}}^2 \leq M_dX_{B,\hat{U}}$. Taking expectations gives $\Var{(X_{B,\hat{U}})}\leq\CB(\hat{U})[M_d-\CB(\hat{U})]$.
\end{proof}

\begin{corollary}[Faithfulness for unitaries]
\label{cor:faithful}
For unitary dynamics,
\begin{eqnarray}
\DCB(\hat{U})=0 &\Longleftrightarrow& \CB(\hat{U})=0 \nonumber \\
	&\Longleftrightarrow& \hat{U}\ \hbox{is $B$-incoherent}.
\label{eq:faithfulness}
\end{eqnarray}
\end{corollary}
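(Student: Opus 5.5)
The plan is to prove the two equivalences by chaining the two-sided bound of Proposition~\ref{prop:upper_bound} with a characterization of $\CB(\hat{U})=0$ in terms of the unistochastic matrix $\hat{P}_{\hat{U}}$.

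\emph{First equivalence, $\DCB(\hat{U})=0 \Leftrightarrow \CB(\hat{U})=0$.} I read it off Eq.~(\ref{eq:two_sided_bound}) directly. If $\DCB(\hat{U})=0$, the lower inequality $\gamma_d\CB(\hat{U})\leq\DCB(\hat{U})$ forces $\CB(\hat{U})=0$, because $\gamma_d>0$ for $d\geq2$ and $\CB(\hat{U})\geq0$. Conversely, if $\CB(\hat{U})=0$, the upper inequality gives $\DCB(\hat{U})\leq 0$, so $\DCB(\hat{U})=0$. (For $d=1$ everything vanishes trivially.)

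\emph{Second equivalence, $\CB(\hat{U})=0 \Leftrightarrow \hat{U}$ is $B$-incoherent.}
\begin{itemize}
\item By Eq.~(\ref{eq:m1}) and Eq.~(\ref{eq:gamma_invariants}), $\CB(\hat{U})=(d-S)/[d(d+1)]$ with $S=\sum_{ij}(\hat{P}_{\hat{U}})_{ij}^2$.
\item Each column of $\hat{P}_{\hat{U}}$ is a probability vector, so $s_j=\sum_i(\hat{P}_{\hat{U}})_{ij}^2\leq\sum_i(\hat{P}_{\hat{U}})_{ij}=1$. Equality holds exactly when the column has a single entry equal to $1$ and the rest are $0$.
\item Hence $S=d$ if and only if every column of $\hat{P}_{\hat{U}}$ is a standard basis vector. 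Double stochasticity then makes $\hat{P}_{\hat{U}}$ a permutation matrix.
\item A permutation $\hat{P}_{\hat{U}}$ means $|\bra{i}\hat{U}\ket{j}|=\delta_{i,\sigma(j)}$ for some permutation $\sigma$. So $\hat{U}\ket{j}=e^{i\phi_j}\ket{\sigma(j)}$, i.e.\ $\hat{U}$ is monomial, which the paper identifies with $B$-incoherent.
\item Conversely, a monomial $\hat{U}$ has a permutation $\hat{P}_{\hat{U}}$, so $\hat{\Gamma}_{\hat{U}}=\one-\Pi^{\mathsf T}\Pi=0$ and $\CB(\hat{U})=0$. Alternatively, $\hat{U}$ maps every diagonal state to a diagonal state, so $X_{B,\hat{U}}\equiv0$.
\end{itemize}

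\emph{Main obstacle.} Nothing here is deep. The only step needing care is the deduction that $S=d$ forces every column, and not just $S$ on average, to be a unit vector. This follows because each $s_j\leq1$ individually, so $\sum_j s_j=d$ requires $s_j=1$ for all $j$. I would state this termwise argument explicitly.
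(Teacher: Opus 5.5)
Your proof is correct, but it is organized differently from the paper's. The paper pivots on a single fact: $\CB(\hat U)=0$ means $\Tr\hat\Gamma_{\hat U}=0$, and positivity of $\hat\Gamma_{\hat U}$ then forces $\hat\Gamma_{\hat U}=0$. From this one fact it gets two things:
\begin{itemize}
\item $\Xi_B(\hat U)=0$, hence $\DCB(\hat U)=0$ through the decomposition of Theorem~\ref{thm:decomposition};
\item $\hat P_{\hat U}^{\mathsf T}\hat P_{\hat U}=\one$, after which it invokes, without proof, the fact that a nonnegative doubly stochastic matrix with orthonormal columns is a permutation.
\end{itemize}

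Your direction $\DCB=0\Rightarrow\CB=0$ matches the paper's, since the lower bound is just the decomposition with $\Xi_B^2\geq0$ dropped. For the converse, however, you use the range bound of Proposition~\ref{prop:upper_bound}. At $\CB=0$ that bound needs nothing about $\hat\Gamma_{\hat U}$ or the anisotropy term, only $X_{B,\hat U}\geq0$, because a nonnegative variable with zero mean has zero variance. So this step would survive for any nonnegative coherence functional.

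Your termwise bound $s_j\leq1$ is really the diagonal of the paper's argument, since $(\hat\Gamma_{\hat U})_{jj}=1-s_j\geq0$ and these entries sum to zero. Written your way, though, the permutation step becomes self-contained, and it shows that unit-norm columns already suffice without orthogonality.

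What each route buys: the paper's keeps everything inside the $\hat\Gamma$-geometry and makes explicit that the coherence-generation matrix itself vanishes. Yours is more elementary and fills in the one step the paper leaves to the reader.
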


\begin{proof}---If $\DCB(\hat{U})=0$, Eq.~(\ref{eq:variance_decomposition}) gives $\gamma_d^2\CB(\hat{U})^2+\Xi_B(\hat{U})^2=0$. Since both terms are nonnegative, $\CB(\hat{U})=0$ and $\Xi_B(\hat{U})=0$. Conversely, if $\CB(\hat{U})=0$, Eq.~(\ref{eq:m1}) gives $\Tr\hat{\Gamma}_{\hat{U}}=0$, and positivity of $\hat{\Gamma}_{\hat{U}}$ then implies $\hat{\Gamma}_{\hat{U}}=0$. Thus, $\hat{\Gamma}_{\hat{U}}$ is trivially isotropic on $\vone^\perp$, $\Xi_B(\hat{U})=0$, and Eq.~(\ref{eq:variance_decomposition}) gives $\DCB(\hat{U})=0$. This proves the first equivalence. Moreover, whenever $\CB(\hat{U})=0$, the same argument gives $\hat{\Gamma}_{\hat{U}}=0$. Hence, $\hat{P}_{\hat{U}}^{\mathsf T}\hat{P}_{\hat{U}}=\one$. A nonnegative doubly stochastic matrix with orthonormal columns must be a permutation matrix, so $\hat{U}$ is monomial. The converse is immediate.
\end{proof}

Maximal CGP is attained when $\abs{\bra{i}\hat{U}\ket{j}}^2=1/d$, namely when $\hat{U}$ maps the reference basis to a mutually unbiased basis~\cite{ZanardiStyliarisVenuti2017}. Such the unitaries are also minimally fluctuating among all operations with the same mean.

\begin{corollary}[Complex Hadamard generators]
\label{cor:mub}
If $\abs{\bra{i}\hat{U}\ket{j}}^2=1/d$ for all $i,j$, then
\begin{eqnarray}
\CB(\hat{U}) &=& \frac{d-1}{d(d+1)}, \nonumber \\
\DCB(\hat{U}) &=& \frac{2\sqrt{d-1}}{(d+1)\sqrt{(d+2)(d+3)}}, \nonumber \\
\mathcal S_B(\hat{U}) &=& 1.
\label{eq:mub_values}
\end{eqnarray}
\end{corollary}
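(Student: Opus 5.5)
The plan is to specialize the earlier moment formulas to the flat transition matrix. Under the hypothesis $\abs{\bra{i}\hat{U}\ket{j}}^2=1/d$ for all $i,j$, we have $\hat{P}_{\hat{U}}=\vone\vone^{\mathsf T}/d$. Hence $\hat{P}_{\hat{U}}^{\mathsf T}\hat{P}_{\hat{U}}=\vone\vone^{\mathsf T}/d$, since $\vone^{\mathsf T}\vone=d$, and therefore $\hat{\Gamma}_{\hat{U}}=\Hperp$, as already noted after Proposition~\ref{prop:gamma}. Every claim in Eq.~(\ref{eq:mub_values}) then follows by substituting this single matrix into earlier results.

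First, $\Tr\Hperp=d-1$, so Eq.~(\ref{eq:m1}) gives $\CB(\hat{U})=(d-1)/[d(d+1)]$ directly.

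Second, for the selectivity factor I would invoke Theorem~\ref{thm:decomposition}. With $\bar\lambda_{\hat{U}}=\Tr\hat{\Gamma}_{\hat{U}}/(d-1)=1$, the anisotropy matrix is $\hat{A}_{\hat{U}}=\Hperp-\Hperp=0$. Hence $\Xi_B(\hat{U})=0$ and $\DCB(\hat{U})=\gamma_d\CB(\hat{U})$. Since $\CB>0$ for $d\geq2$, Eq.~(\ref{eq:selectivity_factor}) gives $\mathcal S_B(\hat{U})=1$. This is also the equality case of Theorem~\ref{thm:lower_bound} with $\lambda=1$.

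Third, the explicit deviation comes from multiplying:
\begin{eqnarray}
\gamma_d\CB(\hat{U})=\frac{2d}{\sqrt{(d-1)(d+2)(d+3)}}\cdot\frac{d-1}{d(d+1)}=\frac{2\sqrt{d-1}}{(d+1)\sqrt{(d+2)(d+3)}}.
\end{eqnarray}
As an independent check, I would also evaluate Eq.~(\ref{eq:cgpd_closed}) directly. The needed inputs are $\Tr(\Hperp^2)=d-1$ and $\|\operatorname{diag}\Hperp\|_2^2=d(1-1/d)^2=(d-1)^2/d$. This gives
\begin{eqnarray}
\DCB^2=\frac{(d-1)^2+2(d-1)+6(d-1)^2/d}{\Dfour}-\frac{(d-1)^2}{d^2(d+1)^2}.
\end{eqnarray}
Placing both terms over the common denominator $d^2(d+1)^2(d+2)(d+3)$ should collapse the expression to $4(d-1)/[(d+1)^2(d+2)(d+3)]$.

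There is no genuine obstacle here. The only step requiring care is this algebraic simplification. Routing the argument through the decomposition theorem avoids it entirely, so I would present that route and leave the direct evaluation as a consistency check.
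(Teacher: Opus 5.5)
Your proposal is correct and follows the paper's proof. Both arguments set $\hat{P}_{\hat{U}}=\vone\vone^{\mathsf T}/d$ and $\hat{\Gamma}_{\hat{U}}=\Hperp$ and substitute into the earlier formulas. The only difference is cosmetic: the paper evaluates Eq.~(\ref{eq:cgpd_closed}) directly, whereas you get $\DCB=\gamma_d\CB$ from $\hat{A}_{\hat{U}}=0$ in Theorem~\ref{thm:decomposition}. Your direct check, which does reduce to $4(d-1)/[(d+1)^2(d+2)(d+3)]$, is exactly the paper's route.
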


\begin{proof}---For a complex Hadamard unitary, $\hat{P}_{\hat{U}}=\vone\vone^{\mathsf T}/d$ and $\hat{\Gamma}_{\hat{U}}=\Hperp$. Eqs.~(\ref{eq:m1}), (\ref{eq:cgpd_closed}), and (\ref{eq:selectivity_factor}) give the stated values.
\end{proof}

The relative deviation of a maximal generator scales as
\begin{eqnarray}
\frac{\DCB(\hat{U})}{\CB(\hat{U})} = \gamma_d \sim \frac{2}{\sqrt d}.
\label{eq:mub_scaling}
\end{eqnarray}
Thus, the high-dimensional complex Hadamard dynamics combines maximal average generation with decreasing relative fluctuation.

\subsection{Why qubits are exceptional}

\begin{corollary}[Qubit relation]
\label{cor:qubit}
For every qubit unitary operator,
\begin{eqnarray}
\DCB(\hat{U}) = \frac{2}{\sqrt5}\CB(\hat{U}).
\label{eq:qubit_relation}
\end{eqnarray}
\end{corollary}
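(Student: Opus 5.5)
The plan is to reduce the claim to the equality case of Theorem~\ref{thm:lower_bound}, using the decomposition of Theorem~\ref{thm:decomposition}. For $d=2$ the zero-sum subspace $\vone^\perp$ is one-dimensional, spanned by $(1,-1)^{\mathsf T}/\sqrt2$. This makes every admissible $\hat{\Gamma}_{\hat{U}}$ automatically isotropic, so the anisotropic part vanishes and the bound is saturated.

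First, by Proposition~\ref{prop:gamma}, $\hat{\Gamma}_{\hat{U}}$ is real symmetric and satisfies $\hat{\Gamma}_{\hat{U}}\vone=0$. Its range therefore lies in $\vone^\perp$, which for $d=2$ is one-dimensional. Hence
\begin{eqnarray}
\hat{\Gamma}_{\hat{U}}=\lambda\,\Hperp,
\qquad
\Hperp=\frac12\left(\begin{array}{rr}1&-1\\-1&1\end{array}\right),
\end{eqnarray}
with $\lambda=\Tr\hat{\Gamma}_{\hat{U}}=\bar\lambda_{\hat{U}}$, because $d-1=1$. This holds for every qubit unitary, including the incoherent case $\lambda=0$. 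It follows that $\hat{A}_{\hat{U}}=\hat{\Gamma}_{\hat{U}}-\bar\lambda_{\hat{U}}\Hperp=0$, and so $\Xi_B(\hat{U})=0$ in Eq.~(\ref{eq:excess_selectivity}).

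Next, Eq.~(\ref{eq:variance_decomposition}) gives $\DCB(\hat{U})^2=\gamma_2^2\CB(\hat{U})^2$. Evaluating the constant, $\gamma_2=2\cdot2/\sqrt{1\cdot4\cdot5}=4/\sqrt{20}=2/\sqrt5$. Since both $\DCB$ and $\CB$ are nonnegative, taking square roots yields Eq.~(\ref{eq:qubit_relation}).

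As a sanity check, I would verify the identity directly from Eq.~(\ref{eq:cgpd_closed}) with $\Dfour=120$. Write $\hat{\Gamma}_{\hat{U}}=\lambda\Hperp$, so that $\Tr\hat{\Gamma}_{\hat{U}}=\lambda$, $\Tr\hat{\Gamma}_{\hat{U}}^2=\lambda^2$, and $\|\operatorname{diag}\hat{\Gamma}_{\hat{U}}\|_2^2=\lambda^2/2$. Then
\begin{eqnarray}
\DCB^2=\frac{\lambda^2+2\lambda^2+3\lambda^2}{120}-\frac{\lambda^2}{36}=\frac{\lambda^2}{20}-\frac{\lambda^2}{36}=\frac{\lambda^2}{45}.
\end{eqnarray}
On the other hand, $\CB=\lambda/6$ from Eq.~(\ref{eq:m1}), so $\tfrac45\CB^2=\tfrac{4\lambda^2}{180}=\lambda^2/45$, in agreement.

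The only step needing care is the dimension-counting claim that any symmetric matrix annihilating $\vone$ in $d=2$ is a multiple of $\Hperp$. This follows at once from the spectral theorem: such a matrix has $\vone$ as an eigenvector, and the only remaining eigenvector is orthogonal to $\vone$, hence proportional to $(1,-1)^{\mathsf T}$. No real obstacle is expected.
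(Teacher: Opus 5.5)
Your proof is correct and follows essentially the paper's route: show $\hat{\Gamma}_{\hat{U}}\propto\Hperp$ for $d=2$, then conclude that the universal bound is saturated with $\gamma_2=2/\sqrt5$. The paper gets the proportionality from the explicit form of $2\times2$ doubly stochastic matrices and cites the equality clause of Theorem~\ref{thm:lower_bound}. You instead use symmetry plus $\hat{\Gamma}_{\hat{U}}\vone=0$ and pass through the decomposition of Theorem~\ref{thm:decomposition}, which also covers the incoherent case $\CB=0$ directly.
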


\begin{proof}---Every $2 \times 2$ doubly stochastic matrix has the form
\begin{eqnarray}
\hat{P}_{\hat{U}} = \left(\begin{array}{cc} a & 1-a \\ 1-a & a \end{array}\right),
\end{eqnarray}
so $\hat{\Gamma}_{\hat{U}}$ is proportional to $\Hperp$. The equality condition in Theorem~\ref{thm:lower_bound} then yields Eq.~(\ref{eq:qubit_relation}).
\end{proof}

Up to phases, a qubit unitary operator is parameterized by a mixing angle $\theta$, giving
\begin{eqnarray}
\CB(\hat{U}) &=& \frac16\sin^2(2\theta), \nonumber \\
\DCB(\hat{U}) &=& \frac{1}{3\sqrt5}\sin^2(2\theta).
\label{eq:qubit_explicit}
\end{eqnarray}
The full distribution can also be obtained. If $X_{\max} = 3\CB(\hat{U})$, then
\begin{eqnarray}
f_{\hat{U}}(x) = \frac{1}{2\sqrt{X_{\max}x}}, \quad 0<x<X_{\max}.
\label{eq:qubit_density}
\end{eqnarray}
The zero-sum population space is one dimensional, so no directional anisotropy exists and $\Xi_B(\hat{U})=0$ for all qubit unitaries. Independent information beyond CGP first appears for $d \geq 3$. Appendix~\ref{app:lowdim} gives the low-dimensional derivation of this distribution and the four-level invariant check used below.

\subsection{Parallel composition}

For architectures composed of independent subsystems, the coherence-generation matrix obeys a simple product law.

\begin{proposition}[Tensor-product law]
\label{prop:tensor}
Let $\hat{U}$ and $\hat{V}$ act on dimensions $d_A$ and $d_B$. Then,
\begin{eqnarray}
\hat{\Gamma}_{\hat{U}\otimes\hat{V}} &=& \hat{\Gamma}_{\hat{U}} \otimes \one + \one\otimes\hat{\Gamma}_{\hat{V}} - \hat{\Gamma}_{\hat{U}} \otimes \hat{\Gamma}_{\hat{V}}.
\label{eq:tensor_gamma}
\end{eqnarray}
Writing $t_{\hat{U}} = \Tr\hat{\Gamma}_{\hat{U}}$ and $t_{\hat{V}} = \Tr\hat{\Gamma}_{\hat{V}}$,
\begin{eqnarray}
C_B(\hat{U}\otimes\hat{V}) = \frac{d_Bt_{\hat{U}} + d_At_{\hat{V}} - t_{\hat{U}}t_{\hat{V}}}
{d_Ad_B(d_Ad_B+1)}.
\label{eq:tensor_cgp}
\end{eqnarray}
\end{proposition}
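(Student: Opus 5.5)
The plan is to first establish the product law for the unistochastic matrix and then read off $\hat{\Gamma}$ and its trace. Since $\bra{ik}\hat{U}\otimes\hat{V}\ket{jl}=\bra{i}\hat{U}\ket{j}\bra{k}\hat{V}\ket{l}$, taking absolute squares gives
\begin{eqnarray}
\hat{P}_{\hat{U}\otimes\hat{V}}=\hat{P}_{\hat{U}}\otimes\hat{P}_{\hat{V}}
\end{eqnarray}
entrywise in the product basis. The mixed-product property of the Kronecker product then yields
\begin{eqnarray}
\hat{P}_{\hat{U}\otimes\hat{V}}^{\mathsf T}\hat{P}_{\hat{U}\otimes\hat{V}}=(\hat{P}_{\hat{U}}^{\mathsf T}\hat{P}_{\hat{U}})\otimes(\hat{P}_{\hat{V}}^{\mathsf T}\hat{P}_{\hat{V}})=(\one-\hat{\Gamma}_{\hat{U}})\otimes(\one-\hat{\Gamma}_{\hat{V}}),
\end{eqnarray}
using the definition in Eq.~(\ref{eq:gamma}) for each factor.

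Subtracting this from $\one\otimes\one$ and expanding the product gives
\begin{eqnarray}
\one\otimes\one-\one\otimes\one+\hat{\Gamma}_{\hat{U}}\otimes\one+\one\otimes\hat{\Gamma}_{\hat{V}}-\hat{\Gamma}_{\hat{U}}\otimes\hat{\Gamma}_{\hat{V}},
\end{eqnarray}
which is exactly Eq.~(\ref{eq:tensor_gamma}).

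For the CGP formula I will take the trace and use $\Tr(\hat{X}\otimes\hat{Y})=\Tr\hat{X}\,\Tr\hat{Y}$ together with $\Tr\one=d_A$ or $d_B$. This gives
\begin{eqnarray}
\Tr\hat{\Gamma}_{\hat{U}\otimes\hat{V}}=d_Bt_{\hat{U}}+d_At_{\hat{V}}-t_{\hat{U}}t_{\hat{V}}.
\end{eqnarray}
Inserting this into Eq.~(\ref{eq:m1}) of Theorem~\ref{thm:moments} with total dimension $d=d_Ad_B$ gives Eq.~(\ref{eq:tensor_cgp}).

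There is no real obstacle here. The only point requiring care is that the reference basis of the composite system is the product basis $\{\ket{i}\otimes\ket{k}\}$, ordered consistently with the Kronecker convention, so that the entrywise factorization of $\hat{P}$ is legitimate. Theorem~\ref{thm:moments} applies unchanged in dimension $d_Ad_B$.
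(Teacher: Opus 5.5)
Your proposal is correct and follows essentially the same route as the paper. Both factorize $\hat{P}_{\hat{U}\otimes\hat{V}}=\hat{P}_{\hat{U}}\otimes\hat{P}_{\hat{V}}$, write $\hat{\Gamma}_{\hat{U}\otimes\hat{V}}=\one-(\one-\hat{\Gamma}_{\hat{U}})\otimes(\one-\hat{\Gamma}_{\hat{V}})$ and expand, then take the trace and apply Eq.~(\ref{eq:m1}) with $d=d_Ad_B$; you additionally spell out the mixed-product step and the product-basis convention, which the paper leaves implicit.
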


\begin{proof}---Since $\hat{P}_{\hat{U}\otimes\hat{V}}=\hat{P}_{\hat{U}}\otimes \hat{P}_{\hat{V}}$,
\begin{eqnarray}
\hat{\Gamma}_{\hat{U} \otimes \hat{V}} = \one-(\one-\hat{\Gamma}_{\hat{U}}) \otimes (\one-\hat{\Gamma}_{\hat{V}}),
\end{eqnarray}
which expands to Eq.~(\ref{eq:tensor_gamma}). Taking the trace and using Eq.~(\ref{eq:m1}) gives Eq.~(\ref{eq:tensor_cgp}).
\end{proof}

Eq.~(\ref{eq:tensor_gamma}) shows that CGP is not simply additive: the final term removes the response already generated in both factors. The complete second-order composition law can also be stated using only three scalar invariants per subsystem.

\begin{proposition}[Second-order composition data]
Let $m=d_A$, $n=d_B$, and for $\hat{G}_A=\hat{\Gamma}_{\hat{U}}$ and $\hat{G}_B=\hat{\Gamma}_{\hat{V}}$, define
\begin{eqnarray}
&& t_A=\Tr\hat{G}_A, ~~ q_A=\Tr(\hat{G}_A^2), ~~ h_A=\sum_i(\hat{G}_A)_{ii}^2, \nonumber\\
&& t_B=\Tr\hat{G}_B, ~~ q_B=\Tr(\hat{G}_B^2), ~~ h_B=\sum_j(\hat{G}_B)_{jj}^2.
\label{eq:local_invariants}
\end{eqnarray}
For $\hat{G}_{AB}=\hat{\Gamma}_{\hat{U}\otimes\hat{V}}$, the corresponding invariants are
\begin{eqnarray}
t_{AB} &=& mn - (m - t_A)(n - t_B), \nonumber \\
q_{AB} &=& mn - 2(m - t_A)(n - t_B) \nonumber \\
	&& \quad +(m - 2t_A + q_A)(n - 2t_B + q_B), \nonumber \\
h_{AB} &=& n h_A + m h_B + h_A h_B + 2t_A t_B \nonumber \\
	&& \quad -2h_A t_B - 2t_A h_B.
\label{eq:composite_invariants}
\end{eqnarray}
Substitution of $(t_{AB},q_{AB},h_{AB})$ into Theorem~\ref{thm:moments} gives the exact CGPD of the product unitary.
\end{proposition}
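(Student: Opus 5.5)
The plan is to work throughout with the complementary matrices $\hat{K}_A=\one-\hat{G}_A=\hat{P}_{\hat{U}}^{\mathsf T}\hat{P}_{\hat{U}}$ and $\hat{K}_B=\one-\hat{G}_B$. By the proof of the tensor-product law, Proposition~\ref{prop:tensor},
\begin{eqnarray}
\hat{G}_{AB}=\one-\hat{K}_A\otimes\hat{K}_B .
\end{eqnarray}
Each of the three invariants then reduces to a product of single-factor invariants of $\hat{K}_A$ and $\hat{K}_B$. These are expressed back in terms of $(t,q,h)$ using
\begin{eqnarray}
\Tr\hat{K}=\dim-t,\qquad \Tr\hat{K}^2=\dim-2t+q .
\end{eqnarray}
The diagonal entries are $(\hat{K})_{ii}=1-\hat{G}_{ii}$.

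\textbf{Trace and Frobenius invariants.} Taking the trace and using $\Tr(\hat{K}_A\otimes\hat{K}_B)=\Tr\hat{K}_A\,\Tr\hat{K}_B$ gives $t_{AB}=mn-(m-t_A)(n-t_B)$ at once. For $q_{AB}$, expand
\begin{eqnarray}
\hat{G}_{AB}^2=\one-2\hat{K}_A\otimes\hat{K}_B+\hat{K}_A^2\otimes\hat{K}_B^2
\end{eqnarray}
and take the trace. Multiplicativity gives $mn-2(m-t_A)(n-t_B)+(m-2t_A+q_A)(n-2t_B+q_B)$, which is the stated formula.

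\textbf{Diagonal invariant.} The diagonal of a Kronecker product is the product of the diagonals, so
\begin{eqnarray}
(\hat{G}_{AB})_{(i,j),(i,j)}=1-(1-a_i)(1-b_j),
\end{eqnarray}
where $a_i=(\hat{G}_A)_{ii}$ and $b_j=(\hat{G}_B)_{jj}$. Write this as $a_i+b_j-a_ib_j$. Square it and sum over all $i,j$, using $\sum_i a_i=t_A$, $\sum_i a_i^2=h_A$, and the analogous sums for $B$. The six terms of the square give $nh_A$, $mh_B$, $h_Ah_B$, $2t_At_B$, $-2h_At_B$ and $-2t_Ah_B$, which reproduces $h_{AB}$.

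\textbf{Final substitution.} Theorem~\ref{thm:moments} expresses $M_1$ and $M_2$ solely through $\Tr\hat{\Gamma}$, $\Tr\hat{\Gamma}^2$ and $\sum_j\hat{\Gamma}_{jj}^2$, evaluated in dimension $mn$. It applies because $\hat{U}\otimes\hat{V}$ is a unitary on $\mathbb C^{mn}$. The exact CGPD of the product therefore follows by substituting $(t_{AB},q_{AB},h_{AB})$ and $\mathfrak D_{mn}$.

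The only place where care is needed is the $h_{AB}$ expansion, specifically tracking the signs and factors of two in the cross terms. I expect no real obstacle: everything is multiplicativity of trace and diagonal under Kronecker products.
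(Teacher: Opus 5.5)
Your proposal is correct and follows the paper's proof essentially line by line: the paper writes $\hat{G}_{AB}=\one-\hat{X}_A\otimes\hat{X}_B$ with $\hat{X}_A=\hat{P}_{\hat{U}}^{\mathsf T}\hat{P}_{\hat{U}}$ (your $\hat{K}_A$), uses multiplicativity of traces and of Kronecker diagonals, and expands $(a_i+b_\alpha-a_ib_\alpha)^2$ exactly as you do. The only remark is notational: $\hat{K}_B$ already denotes the swap-based observable in Eq.~(\ref{eq:KB}), so a different letter would avoid a clash.
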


\begin{proof}---Write $\hat{X}_A=\one-\hat{G}_A=\hat{P}_{\hat{U}}^{\mathsf T}\hat{P}_{\hat{U}}$ and $\hat{X}_B=\one-\hat{G}_B=\hat{P}_{\hat{V}}^{\mathsf T}\hat{P}_{\hat{V}}$. Eq.~(\ref{eq:tensor_gamma}) is $\hat{G}_{AB}=\one-\hat{X}_A\otimes \hat{X}_B$. Therefore,
\begin{eqnarray}
\Tr\hat{G}_{AB} &=& m n - \Tr(\hat{X}_A) \Tr(\hat{X}_B), \nonumber \\
\Tr(\hat{G}_{AB}^2) &=& m n - 2\Tr(\hat{X}_A) \Tr(\hat{X}_B) \nonumber\\
	&& \quad + \Tr(\hat{X}_A^2)\Tr(\hat{X}_B^2).
\end{eqnarray}
Using $\Tr(\hat{X}_A)=m-t_A$ and $\Tr(\hat{X}_A^2)=m-2t_A+q_A$, and similarly for $B$, gives the first two lines of Eq.~(\ref{eq:composite_invariants}). On the product basis,
\begin{eqnarray}
(\hat{G}_{AB})_{i\alpha,i\alpha} = a_i + b_\alpha - a_i b_\alpha,
\end{eqnarray}
where $a_i=(\hat{G}_A)_{ii}$ and $b_\alpha=(\hat{G}_B)_{\alpha\alpha}$. Expanding the square and summing over $i,\alpha$ gives the expression for $h_{AB}$.
\end{proof}

This result is useful for layered or parallel architectures: average strength and second-order selectivity can be propagated without constructing the full coherence distribution.

\section{Operational meaning and experimental accessibility}\label{sec:operational}

\subsection{Reliability for unknown inputs}

Suppose that the diagonal input is encountered randomly according to $\nu_d$. The probability that the generated coherence underperforms its mean by at least $\epsilon>0$ is bounded by Cantelli's inequality~\cite{Feller1971},
\begin{eqnarray}
\Prob\!\left[X_{B,\hat{U}} \leq \CB(\hat{U}) - \epsilon\right] \leq \frac{\DCB(\hat{U})^2}{\DCB(\hat{U})^2+\epsilon^2}.
\label{eq:cantelli}
\end{eqnarray}
At fixed CGP, smaller CGPD gives a stronger reliability guarantee. Conversely, larger CGPD signals a selective generator whose favorable inputs may be useful when the population profiles can be controlled or optimized.

A two-sided concentration statement follows from Chebyshev's inequality,
\begin{eqnarray}
\Prob\!\left[\abs{X_{B,\hat{U}} - \CB(\hat{U})} \geq \epsilon\right] \leq \frac{\DCB(\hat{U})^2}{\epsilon^2}.
\label{eq:chebyshev}
\end{eqnarray}
These inequalities are distribution agnostic. Solving Eq.~(\ref{eq:cantelli}) for a target failure probability $0<\delta<1$ gives the certified lower performance
\begin{eqnarray}
&& a_\delta = \sqrt{\frac{1-\delta}{\delta}}, \nonumber \\
&& L_\delta(\hat{U}) = \max\{0,\CB(\hat{U}) - a_\delta\DCB(\hat{U})\}, \nonumber \\
&& \Prob[X_{B,\hat{U}} \geq L_\delta(\hat{U})] \geq 1 - \delta.
\label{eq:certified_lower}
\end{eqnarray}
At equal CGP, the generator with smaller CGPD has a larger distribution-independent performance certificate. This suggests a risk-sensitive design functional
\begin{eqnarray}
\mathcal{J}_\eta(\hat{U}) = \CB(\hat{U}) - \eta\DCB(\hat{U}) \quad (\eta \geq 0),
\label{eq:mean_deviation_objective}
\end{eqnarray}
where $\eta$ controls the tradeoff between average resource production and input robustness. By contrast, maximizing CGPD or $\mathcal S_B$ at fixed CGP is appropriate when selective activation is the desired behavior. When higher cumulants are available, sharper tail estimates can be constructed from $K_{\hat{U}}(\lambda)$.

\subsection{Finite-sample CGP benchmarking}

Assume $N$ independent incoherent inputs are sampled and the generated coherence is measured for each one. The empirical estimator is
\begin{eqnarray}
\widetilde{C}_N&=&\frac1N\sum_{n=1}^NX_{B,\hat{U}}(\pp_n).
\label{eq:empirical_cgp}
\end{eqnarray}
It is unbiased and has
\begin{eqnarray}
\Var(\widetilde{C}_N)&=&\frac{\DCB(\hat{U})^2}{N}.
\label{eq:sampling}
\end{eqnarray}
Therefore CGPD is the single-input noise scale governing the standard error of CGP benchmarking. The root-mean-square precision condition $\sqrt{\mathbb E[(\widetilde{C}_N-\CB)^2]}\leq\varepsilon$ requires
\begin{eqnarray}
N&\geq&\frac{\DCB(\hat{U})^2}{\varepsilon^2}.
\label{eq:rms_samples}
\end{eqnarray}
Using the bounded range $0\leq X\leq(d-1)/d$, Bernstein-type confidence bounds can combine the observed variance with the known range. Thus CGPD is both a physical fluctuation and the statistical cost of estimating the mean resource generation.

\subsection{Estimating CGPD and diagnosing non-Gaussian profiles}

The deviation itself can be estimated from the same data. The unbiased sample variance and its square root are
\begin{eqnarray}
\widetilde{V}_N &=& \frac{1}{N-1}\sum_{n=1}^N [X_{B,\hat{U}}(\pp_n) - \widetilde{C}_N]^2, \nonumber \\
\widetilde{\Delta}_N &=& \sqrt{\widetilde{V}_N}.
\label{eq:sample_deviation}
\end{eqnarray}
Because $X_{B,\hat{U}}$ is bounded, all moments exist. If $\mu_4=\mathbb E[(X_{B,\hat{U}}-\CB)^4]$, the standard asymptotic variance formula gives
\begin{eqnarray}
\sqrt N(\widetilde{V}_N - \DCB^2) \Longrightarrow \mathcal N(0,\mu_4-\DCB^4).
\label{eq:sample_variance_clt}
\end{eqnarray}
For $\DCB>0$, the delta method then yields an asymptotic variance $(\mu_4-\DCB^4)/(4\DCB^2)$ for $\widetilde{\Delta}_N$. This exposes a natural hierarchy: CGPD controls estimation of CGP, while the fourth central moment controls estimation of CGPD. Strong disagreement between the empirical histogram and a mean--variance description is itself evidence that skewness or higher cumulants should be retained.

\subsection{Susceptibility to an input-ensemble bias}

Consider the exponentially tilted ensemble
\begin{eqnarray}
\dd\nu_{d,\lambda}(\pp) = \frac{e^{\lambda X_{B,\hat{U}}(\pp)}}{Z_{\hat{U}}(\lambda)}\dd\nu_d(\pp),
\label{eq:tilted_measure}
\end{eqnarray}
which favors coherence-productive inputs for $\lambda>0$ and suppresses them for $\lambda<0$. Define the tilted average
\begin{eqnarray}
C_{B,\lambda}(\hat{U}) = \mathbb E_{\nu_{d,\lambda}}[X_{B,\hat{U}}].
\label{eq:tilted_average}
\end{eqnarray}

\begin{proposition}[Coherence-generation susceptibility]
\label{prop:susceptibility}
The tilted mean satisfies
\begin{eqnarray}
C_{B,0}(\hat{U}) &=& \CB(\hat{U}), \nonumber \\
\left.\frac{\dd C_{B,\lambda}(\hat{U})}{\dd\lambda}\right|_{\lambda=0} &=& \DCB(\hat{U})^2.
\label{eq:susceptibility}
\end{eqnarray}
\end{proposition}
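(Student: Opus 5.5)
The plan is to identify the tilted mean with the derivative of the cumulant-generating function $K_{\hat{U}}$ defined in Eq.~(\ref{eq:cumulant_generator}). The key observation is that $X_{B,\hat{U}}$ is bounded, with $0\leq X_{B,\hat{U}}\leq (d-1)/d$ (see the proof of Proposition~\ref{prop:upper_bound}). Hence $Z_{\hat{U}}(\lambda)=\mathbb E_{\nu_d}[e^{\lambda X_{B,\hat{U}}}]$ is finite for every real $\lambda$, strictly positive, and in fact entire in $\lambda$. We may therefore differentiate under the integral sign by dominated convergence. The dominating function on any compact $\lambda$-interval $|\lambda|\leq\Lambda$ is $M_d^k e^{\Lambda M_d}$ for the $k$th derivative, where $M_d=(d-1)/d$.

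With this justification in hand, the first step is to write the tilted average directly from Eq.~(\ref{eq:tilted_measure}):
\begin{eqnarray}
C_{B,\lambda}(\hat{U})=\frac{\mathbb E_{\nu_d}[X_{B,\hat{U}}e^{\lambda X_{B,\hat{U}}}]}{Z_{\hat{U}}(\lambda)}=\frac{Z_{\hat{U}}'(\lambda)}{Z_{\hat{U}}(\lambda)}=K_{\hat{U}}'(\lambda).
\end{eqnarray}
At $\lambda=0$ we have $Z_{\hat{U}}(0)=1$ and the tilted measure reduces to $\nu_d$. This gives $C_{B,0}(\hat{U})=\mathbb E_{\nu_d}[X_{B,\hat{U}}]=\CB(\hat{U})$, which is the first claim.

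For the second claim, differentiate once more using the quotient rule:
\begin{eqnarray}
\frac{\dd C_{B,\lambda}}{\dd\lambda}=\frac{Z_{\hat{U}}''}{Z_{\hat{U}}}-\left(\frac{Z_{\hat{U}}'}{Z_{\hat{U}}}\right)^2=\mathbb E_{\nu_{d,\lambda}}[X_{B,\hat{U}}^2]-\bigl(\mathbb E_{\nu_{d,\lambda}}[X_{B,\hat{U}}]\bigr)^2.
\end{eqnarray}
This says that the derivative is the variance of $X_{B,\hat{U}}$ under the tilted ensemble, and the identity holds for every $\lambda$. Setting $\lambda=0$ recovers $\Var_{\nu_d}[X_{B,\hat{U}}]=\DCB(\hat{U})^2$ by the Definition in Eq.~(\ref{eq:definitions}). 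This is precisely the statement $K_{\hat{U}}''(0)=\DCB(\hat{U})^2$ already noted after Eq.~(\ref{eq:cumulant_generator}).

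The only point that needs care is the interchange of differentiation and expectation, and the boundedness argument above settles it. One can then add a byproduct at no extra cost: since the tilted variance is nonnegative, $C_{B,\lambda}$ is nondecreasing in $\lambda$. It is strictly increasing unless $\hat{U}$ is $B$-incoherent, by Corollary~\ref{cor:faithful}.
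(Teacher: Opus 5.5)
Your proof is correct and follows the same route as the paper: you identify $C_{B,\lambda}(\hat{U})=K_{\hat{U}}'(\lambda)$, differentiate once more to obtain the variance under the tilted ensemble, and evaluate at $\lambda=0$. Your boundedness argument justifying differentiation under the expectation, and the strict-monotonicity byproduct obtained from Corollary~\ref{cor:faithful}, are both valid additions that the paper leaves implicit.
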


\begin{proof}---By Eq.~(\ref{eq:cumulant_generator}), $C_{B,\lambda}(\hat{U})=K_{\hat{U}}'(\lambda)$. Differentiating once more gives the variance in the tilted ensemble; at $\lambda=0$ this is $\DCB(\hat{U})^2$.
\end{proof}

CGPD squared is therefore a linear-response coefficient. It measures how rapidly the average coherence generation changes when the input distribution is infinitesimally biased toward favorable populations. This interpretation is especially natural in variational state preparation or data-encoding tasks, where the input distribution can be trained or reweighted.

\subsection{Four-copy measurement representation}

Let $\hat{S}$ be the swap operator on two copies and define
\begin{eqnarray}
\hat{K}_B = \hat{S}-\DB^{\otimes2}(\hat{S}).
\label{eq:KB}
\end{eqnarray}
For any state $\hat{\sigma}$,
\begin{eqnarray}
\cB(\hat{\sigma}) = \Tr[\hat{K}_B\hat{\sigma}^{\otimes2}].
\label{eq:swap_coherence}
\end{eqnarray}
The CGP protocol of Ref.~\cite{ZanardiStyliarisVenuti2017} follows from averaging this two-copy expression. The second moment requires four copies.

\begin{proposition}[Four-copy representation]
\label{prop:fourcopy}
The second moment of unitary coherence generation is
\begin{eqnarray}
M_2(\hat{U}) = \Tr\!\left[(\hat{K}_B\otimes \hat{K}_B) \hat{U}^{\otimes4}\hat{\Omega}_B^{(4)}\hat{U}^{\dagger\otimes4}\right].
\label{eq:four_copy}
\end{eqnarray}
\end{proposition}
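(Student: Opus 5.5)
The plan is to reduce the four-copy trace to the scalar second moment by linearity, working in three steps.

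\emph{Step 1: two-copy identity.} Apply Eq.~(\ref{eq:swap_coherence}) to the output state $\hat{\sigma}_{\pp}=\hat{U}\hat{\rho}_{\pp}\hat{U}^\dagger$. This gives $X_{B,\hat{U}}(\pp)=\Tr[\hat{K}_B\hat{\sigma}_{\pp}^{\otimes2}]$. The identity itself follows from $\Tr[\hat{S}\hat{\sigma}^{\otimes2}]=\Tr\hat{\sigma}^2$. For the dephased swap, $\DB^{\otimes2}(\hat{S})=\sum_i\ketbra{i}{i}\otimes\ketbra{i}{i}$, so $\Tr[\DB^{\otimes2}(\hat{S})\hat{\sigma}^{\otimes2}]=\sum_i\sigma_{ii}^2=\Tr[\DB(\hat{\sigma})^2]$. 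Subtracting, Eq.~(\ref{eq:coherence}) reproduces $\cB(\hat{\sigma})$.

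\emph{Step 2: square and factorize.} Squaring the scalar and using $\Tr[A]\Tr[B]=\Tr[A\otimes B]$ gives
\begin{eqnarray}
X_{B,\hat{U}}(\pp)^2 = \Tr\!\left[(\hat{K}_B\otimes\hat{K}_B)\,\hat{\sigma}_{\pp}^{\otimes4}\right].
\end{eqnarray}
Since $\hat{\sigma}_{\pp}^{\otimes4}=\hat{U}^{\otimes4}\hat{\rho}_{\pp}^{\otimes4}\hat{U}^{\dagger\otimes4}$, the expression becomes $\Tr[(\hat{K}_B\otimes\hat{K}_B)\hat{U}^{\otimes4}\hat{\rho}_{\pp}^{\otimes4}\hat{U}^{\dagger\otimes4}]$.

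\emph{Step 3: average over inputs.} Take the expectation over $\nu_d$. The map $\hat{Y}\mapsto\Tr[(\hat{K}_B\otimes\hat{K}_B)\hat{U}^{\otimes4}\hat{Y}\hat{U}^{\dagger\otimes4}]$ is linear. The integrand is a bounded, continuous function on the compact simplex, so expectation and trace commute. By the definition in Eq.~(\ref{eq:omega_n}), $\mathbb E_{\nu_d}[\hat{\rho}_{\pp}^{\otimes4}]=\hat{\Omega}_B^{(4)}$, and this yields Eq.~(\ref{eq:four_copy}).

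None of these steps is a real obstacle. The only point needing care is the ordering of tensor factors: copies $1,2$ must pair with the first $\hat{K}_B$ and copies $3,4$ with the second. Because $\hat{\rho}_{\pp}^{\otimes4}$ is permutation-symmetric, any consistent pairing works.

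As a consistency check, one could substitute Eq.~(\ref{eq:tensor_moment}) with $n=4$ and contract over the $24$ permutations. This should recover Eq.~(\ref{eq:m2}), mirroring the classification of fixed points, double transpositions and four-cycles already used in the proof of Theorem~\ref{thm:moments}. It is not needed for the proposition itself.
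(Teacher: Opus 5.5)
Your proposal is correct and follows essentially the same route as the paper: you apply the two-copy swap identity to $\hat{U}\hat{\rho}_{\pp}\hat{U}^\dagger$, write the square as a four-copy trace against $\hat{K}_B\otimes\hat{K}_B$, and use linearity to replace $\hat{\rho}_{\pp}^{\otimes4}$ by $\hat{\Omega}_B^{(4)}$. Your explicit check of $\Tr[\hat{K}_B\hat{\sigma}^{\otimes2}]=\cB(\hat{\sigma})$, which the paper takes as given, is a harmless addition.
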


\begin{proof}---For $\hat{\sigma}_{\pp}=\hat{U}\hat{\rho}_{\pp}\hat{U}^\dagger$, Eq.~(\ref{eq:swap_coherence}) gives $X_{B,\hat{U}}(\pp)=\Tr[\hat{K}_B\hat{\sigma}_{\pp}^{\otimes2}]$. The square of this trace is a four-copy trace with $\hat{K}_B \otimes \hat{K}_B$. Averaging $\hat{\rho}_{\pp}^{\otimes4}$ produces $\hat{\Omega}_B^{(4)}$ and yields Eq.~(\ref{eq:four_copy}).
\end{proof}

The terms in $\hat{K}_B\otimes \hat{K}_B$ are products of swap and dephased-swap observables. Hence, $M_2$ and CGPD can be accessed without reconstructing the full channel or the full coherence distribution. The required input moment state is classically correlated after dephasing; entangled input preparation is not essential, in direct analogy with the original CGP protocol.

A randomized-measurement implementation is also possible: fourth-order functions of outcome probabilities can estimate the same nonlinear invariant using classical postprocessing~\cite{Elben2018,Brydges2019}. The multi-copy formula is retained here because it makes the operator structure transparent.

\section{Extension to unital quantum channels}\label{sec:channels}

\subsection{Gram-matrix extension}

The population-space formulation is not restricted to unitaries. Let $\EE$ be a completely positive trace-preserving (CPTP) and unital map, $\EE(\hat{\openone})=\hat{\openone}$~\cite{NielsenChuang2010}. Define the coherent output associated with the $i$th basis projector,
\begin{eqnarray}
\hat{A}_i^{(\EE)} = \QB\!\left[\EE(\ketbra{i}{i})\right],
\label{eq:channel_vectors}
\end{eqnarray}
and the real Gram matrix
\begin{eqnarray}
(\hat{\Gamma}_{\EE})_{ij} = \Tr\!\left[\hat{A}_i^{(\EE)}\hat{A}_j^{(\EE)}\right].
\label{eq:channel_gram}
\end{eqnarray}

\begin{proposition}[Unital-channel coherence matrix]
\label{prop:channel_gamma}
For every unital channel $\EE$,
\begin{eqnarray}
\cB[\EE(\hat{\rho}_{\pp})]&=&\pp^{\mathsf T}\hat{\Gamma}_{\EE}\pp.
\label{eq:channel_quadratic}
\end{eqnarray}
The matrix $\hat{\Gamma}_{\EE}$ is positive semidefinite and satisfies $\hat{\Gamma}_{\EE} \vone = 0$.
\end{proposition}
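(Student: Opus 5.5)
The plan is to use linearity of $\EE$ and of the off-diagonal projection $\QB$. For a diagonal input we have $\hat{\rho}_{\pp}=\sum_i p_i\ketbra{i}{i}$. Hence
\begin{eqnarray}
\QB[\EE(\hat{\rho}_{\pp})]=\sum_i p_i\hat{A}_i^{(\EE)}.
\end{eqnarray}
The coherence in Eq.~(\ref{eq:coherence}) is $\cB(\hat{\sigma})=\|\QB(\hat{\sigma})\|_2^2=\Tr[\QB(\hat{\sigma})^\dagger\QB(\hat{\sigma})]$. Each $\hat{A}_i^{(\EE)}$ is Hermitian, because $\EE$ is positive (so $\EE(\ketbra{i}{i})$ is Hermitian) and $\QB$ preserves Hermiticity. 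Expanding the square therefore gives
\begin{eqnarray}
\cB[\EE(\hat{\rho}_{\pp})]=\sum_{ij}p_ip_j\Tr[\hat{A}_i^{(\EE)}\hat{A}_j^{(\EE)}]=\pp^{\mathsf T}\hat{\Gamma}_{\EE}\pp,
\end{eqnarray}
which is Eq.~(\ref{eq:channel_quadratic}). The same Hermiticity shows that the entries $\Tr[\hat{A}_i\hat{A}_j]$ are real and symmetric in $i,j$. So $\hat{\Gamma}_{\EE}$ is a real symmetric matrix: the Gram matrix of the vectors $\hat{A}_i^{(\EE)}$ in the real Hilbert space of Hermitian operators with the Hilbert--Schmidt inner product.

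Positive semidefiniteness is the standard Gram argument. For any real vector $\bm v$,
\begin{eqnarray}
\bm v^{\mathsf T}\hat{\Gamma}_{\EE}\bm v=\Bigl\|\sum_i v_i\hat{A}_i^{(\EE)}\Bigr\|_2^2\geq0.
\end{eqnarray}
Note that this step needs no unitality.

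The kernel property is where unitality enters, and it is the only step specific to the hypothesis. Compute
\begin{eqnarray}
\sum_j\hat{A}_j^{(\EE)}=\QB\Bigl[\EE\Bigl(\sum_j\ketbra{j}{j}\Bigr)\Bigr]=\QB[\EE(\hat{\openone})]=\QB(\hat{\openone})=0,
\end{eqnarray}
since the identity is diagonal. Then
\begin{eqnarray}
(\hat{\Gamma}_{\EE}\vone)_i=\Tr\Bigl[\hat{A}_i^{(\EE)}\sum_j\hat{A}_j^{(\EE)}\Bigr]=0.
\end{eqnarray}
As in Proposition~\ref{prop:gamma}, this also yields the centered form $\xx^{\mathsf T}\hat{\Gamma}_{\EE}\xx$.

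I do not expect a real obstacle; the argument is bookkeeping. The one point to check carefully is that the Hilbert--Schmidt norm squared of the off-diagonal part equals the bilinear expansion with real coefficients. This is where Hermiticity of the $\hat{A}_i$ is used, so that $\Tr[\hat{A}_i^\dagger\hat{A}_j]=\Tr[\hat{A}_i\hat{A}_j]$ is real. Trace preservation and complete positivity are not needed beyond positivity, and unitality is used only for $\hat{\Gamma}_{\EE}\vone=0$.
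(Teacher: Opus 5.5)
Your proof is correct and follows the paper's argument: linearity gives $\QB\EE(\hat{\rho}_{\pp})=\sum_i p_i\hat{A}_i^{(\EE)}$, the Gram structure gives positivity, and unitality gives $\sum_i\hat{A}_i^{(\EE)}=\QB(\hat{\openone})=0$. You also check explicitly that Hermiticity of the $\hat{A}_i^{(\EE)}$ makes $\Tr[\hat{A}_i^{(\EE)}\hat{A}_j^{(\EE)}]$ real and symmetric, which the paper leaves implicit.
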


\begin{proof}---Linearity gives $\QB\EE(\hat{\rho}_{\pp})=\sum_ip_i\hat{A}_i^{(\EE)}$. Taking the squared Hilbert--Schmidt norm yields Eq.~(\ref{eq:channel_quadratic}), and the Gram form proves positivity. The unitality implies
\begin{eqnarray}
\sum_i\hat{A}_i^{(\EE)} = \QB[\EE(\hat{\openone})] = \QB(\hat{\openone}) = 0,
\end{eqnarray}
so every row sum of $\hat{\Gamma}_{\EE}$ vanishes.
\end{proof}

\begin{theorem}[CGPD for unital channels]
\label{thm:channel_moments}
Eqs.~(\ref{eq:m1}), (\ref{eq:m2}), (\ref{eq:cgpd_closed}), (\ref{eq:lower_bound}), and (\ref{eq:variance_decomposition}) remain valid for a unital channel after replacing $\hat{\Gamma}_{\hat{U}}$ by $\hat{\Gamma}_{\EE}$.
\end{theorem}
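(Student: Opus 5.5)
The plan is to observe that every earlier derivation used only three properties of $\hat{\Gamma}_{\hat{U}}$: that the generated coherence is the quadratic form $\pp^{\mathsf T}\hat{\Gamma}\pp$, that $\hat{\Gamma}$ is real symmetric positive semidefinite, and that $\hat{\Gamma}\vone=0$. Proposition~\ref{prop:channel_gamma} supplies exactly these three properties for $\hat{\Gamma}_{\EE}$. Symmetry holds because $\hat{A}_i^{(\EE)}$ is Hermitian, so $\Tr[\hat{A}_i\hat{A}_j]$ is real and symmetric in $i,j$. I will therefore rerun each proof with $\hat{\Gamma}_{\EE}$ in place of $\hat{\Gamma}_{\hat{U}}$ and check that no step invokes the unistochastic structure.

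First, for Eq.~(\ref{eq:m1}) I contract $(\hat{\Gamma}_{\EE})_{ij}$ with the Dirichlet second moments of Eq.~(\ref{eq:dirichlet_second}). The term $\vone^{\mathsf T}\hat{\Gamma}_{\EE}\vone$ vanishes by the row-sum property, which leaves $\Tr\hat{\Gamma}_{\EE}/[d(d+1)]$.

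Second, for Eq.~(\ref{eq:m2}) I insert the fourth-order instance of Eq.~(\ref{eq:dirichlet_general}) into $\sum(\hat{\Gamma}_{\EE})_{ij}(\hat{\Gamma}_{\EE})_{kl}\,\mathbb E[p_ip_jp_kp_l]$. This is the step to check most carefully, and I will reuse the permutation classification of Appendix~\ref{app:fourth}:
\begin{itemize}
\item any permutation with a fixed point leaves a free index summed against a row of $\hat{\Gamma}$, so it vanishes by $\hat{\Gamma}_{\EE}\vone=0$ together with symmetry;
\item the three double transpositions give $(\Tr\hat{\Gamma})^2+2\Tr(\hat{\Gamma}^2)$, where symmetry is used to write $\sum_{ij}\Gamma_{ij}\Gamma_{ji}$ as $\Tr\hat{\Gamma}^2$;
\item the six four-cycles collapse all indices, giving $6\sum_j\Gamma_{jj}^2$.
\end{itemize}
None of these contractions uses $\hat{\Gamma}=\one-\hat{P}^{\mathsf T}\hat{P}$. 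Eq.~(\ref{eq:cgpd_closed}) then follows by subtracting $M_1^2$.

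Third, for Eq.~(\ref{eq:variance_decomposition}) I define $\bar\lambda_{\EE}=\Tr\hat{\Gamma}_{\EE}/(d-1)$ and $\hat{A}_{\EE}=\hat{\Gamma}_{\EE}-\bar\lambda_{\EE}\Hperp$. These satisfy $\hat{A}_{\EE}\vone=0$ and $\Tr\hat{A}_{\EE}=0$, so the two orthogonality identities in the proof of Theorem~\ref{thm:decomposition} hold verbatim. The algebra reducing the $(\Tr\hat{\Gamma})^2$ terms to $\gamma_d^2\CB^2$ is purely scalar and carries over unchanged.

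Finally, the bound Eq.~(\ref{eq:lower_bound}) follows because $\Xi_B^2\ge0$ by construction. The equality characterization also transfers, giving $\hat{\Gamma}_{\EE}=\bar\lambda\Hperp$.

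I expect no genuine obstacle, only bookkeeping. The main care point is that the earlier upper bound $\hat{\Gamma}\le\one$ and the faithfulness argument relied on the unistochastic form, and the theorem does not list them. So I will state explicitly that only the listed equations transfer. Faithfulness, for instance, becomes $\hat{\Gamma}_{\EE}=0$, which means all $\hat{A}_i^{(\EE)}=0$, rather than monomiality.
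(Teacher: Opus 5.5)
Your proposal is correct and follows essentially the same route as the paper. The paper's proof simply notes that the moment derivations use only the quadratic form, symmetry, and the zero-row-sum property, all supplied by Proposition~\ref{prop:channel_gamma}, and that the decomposition with its nonnegative remainder is purely algebraic; your Hermiticity argument for real symmetry of $\hat{\Gamma}_{\EE}$, and your caveat that $\hat{\Gamma}\leq\one$ and monomial faithfulness do not carry over, are details the paper leaves implicit.
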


\begin{proof}---The derivation of the moments uses only the quadratic form, symmetry of the matrix, and the zero-row-sum property. Proposition~\ref{prop:channel_gamma} supplies all three. The decomposition in Theorem~\ref{thm:decomposition} is purely algebraic, and its nonnegative remainder yields the lower bound in Theorem~\ref{thm:lower_bound}.
\end{proof}

\subsection{Faithfulness and the role of unitality}

For channels, vanishing CGP and CGPD characterize the absence of coherence generation from diagonal inputs,
\begin{eqnarray}
\CB(\EE)=\DCB(\EE)=0 \Longleftrightarrow \QB\EE\DB=0.
\label{eq:channel_faithful}
\end{eqnarray}
This condition is weaker than $[\EE,\DB]=0$. A channel may erase input coherence in a way that fails to commute with dephasing while still mapping every incoherent state to an incoherent state. Thus, the deviation remains faithful to the operational task of coherence generation, but not to every stronger definition of an incoherent channel. This distinction agrees with the corresponding caveat for average CGP~\cite{ZanardiStyliarisVenuti2017,ZanardiStyliarisVenuti2017b}.

Unitality is also the reason for the universal lower bound. For a general trace-preserving channel define $\hat{A}_i=\QB\EE(\ketbra{i}{i})$ and
\begin{eqnarray}
\hat{a}_0 = \frac1d\sum_i\hat{A}_i=\QB\EE(\hat{\openone}/d).
\label{eq:nonunital_offset}
\end{eqnarray}
Writing $\pp=\vone/d+\xx$ gives
\begin{eqnarray}
\QB\EE(\hat{\rho}_{\pp}) &=& \hat{a}_0+\sum_ix_i\hat{A}_i, \nonumber \\
\cB[\EE(\hat{\rho}_{\pp})] &=& \|\hat{a}_0\|_2^2+2\sum_ix_i\operatorname{Re}\Tr(\hat{a}_0^\dagger \hat{A}_i) \nonumber \\
	&& \quad +\sum_{ij}x_ix_j\Tr(\hat{A}_i^\dagger \hat{A}_j).
\label{eq:nonunital_affine}
\end{eqnarray}
A coherent offset at the simplex center and a linear term are therefore present. The moment problem remains finite, but it is no longer governed by a zero-row-sum quadratic form, and Eq.~(\ref{eq:lower_bound}) need not hold. This identifies a mathematically clean boundary of the present theory and a natural route to a separate nonunital CGPD framework.

\subsection{Kraus representation}

If $\EE(\hat{\rho})=\sum_k\hat{E}_k\hat{\rho}\hat{E}_k^\dagger$, Eq.~(\ref{eq:channel_gram}) becomes
\begin{eqnarray}
(\hat{\Gamma}_{\EE})_{ij}&=&\sum_{l\neq m}
\left[\sum_k(\hat{E}_k)_{li}(\hat{E}_k)_{mi}^*\right]^*
\left[\sum_r(\hat{E}_r)_{lj}(\hat{E}_r)_{mj}^*\right].
\label{eq:kraus_gram}
\end{eqnarray}
This makes clear that CGPD probes correlations between the coherent outputs of distinct basis populations, not only their individual norms.

\section{Case studies in quantum simulation and feature maps}\label{sec:examples}

In this section, we develop a hierarchy of matched-CGP examples. A four-level calibration isolates the elementary mechanism, an analytically solvable quantum walk resolves graph topology, and a quasiperiodic transport simulation contrasts extended with localized dynamics. In every case, the mean coherence alone hides a physically meaningful difference in input selectivity.

\subsection{Matching protocol and four-level calibration}

An equal-CGP comparison fixes a target $C_*$, solves $\CB(\hat{U}_1)=\CB(\hat{U}_2)=C_*$, and then compares $\DCB$, $\mathcal{S}_B$, and, when useful, the full push-forward distributions. This protocol removes average strength as a confounding variable. The following gate-level pair serves as a calibration before the Hamiltonian examples. Let
\begin{eqnarray}
\hat{R}(\theta) = \begin{pmatrix}
\cos\theta  &  -\sin\theta \\
\sin\theta  &  \cos\theta
\end{pmatrix},
\label{eq:rotation}
\end{eqnarray}
and define
\begin{eqnarray}
\hat{U}_{\rm loc} &=& \hat{R}(\pi/4) \oplus \hat{\openone}_2,\nonumber\\
\hat{U}_{\rm dis} &=& \hat{R}(\pi/8) \oplus \hat{R}(\pi/8).
\label{eq:block_example}
\end{eqnarray}

\begin{proposition}[Localized versus distributed population modes]
The two four-level unitaries in Eq.~(\ref{eq:block_example}) satisfy
\begin{eqnarray}
\CB(\hat{U}_{\rm loc})&=&\CB(\hat{U}_{\rm dis})=\frac1{20},
\label{eq:block_equal_mean}
\end{eqnarray}
but
\begin{eqnarray}
\DCB(\hat{U}_{\rm loc})^2&=&\frac{13}{2800},\nonumber\\
\DCB(\hat{U}_{\rm dis})^2&=&\frac1{600}.
\label{eq:block_diff_var}
\end{eqnarray}
Their landscapes are
\begin{eqnarray}
X_{\rm loc}(\pp)&=&\frac12(p_1-p_2)^2,\nonumber\\
X_{\rm dis}(\pp)&=&\frac14\left[(p_1-p_2)^2+(p_3-p_4)^2\right].
\label{eq:block_landscapes}
\end{eqnarray}
\end{proposition}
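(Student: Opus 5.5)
The plan is to compute everything directly from the population-space representation of Proposition~\ref{prop:gamma}, and to take the second moments from the scalar Dirichlet moments in Eq.~(\ref{eq:dirichlet_general}) rather than from the packaged formula.

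\emph{Landscapes.} First I obtain the unistochastic matrices. For $\hat{R}(\theta)$ the block of $\hat{P}$ is $\begin{pmatrix} c^2 & s^2\\ s^2 & c^2\end{pmatrix}$ with $c=\cos\theta$, $s=\sin\theta$. Squaring this block gives diagonal entries $c^4+s^4$ and off-diagonal entries $2c^2s^2$. The corresponding block of $\hat{\Gamma}$ is therefore $2c^2s^2\begin{pmatrix}1&-1\\-1&1\end{pmatrix}=\tfrac12\sin^2(2\theta)\begin{pmatrix}1&-1\\-1&1\end{pmatrix}$.
\begin{itemize}
\item For $\theta=\pi/4$ the prefactor is $1/2$. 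The identity block contributes $\hat{\Gamma}=0$ on levels $3,4$, so $X_{\rm loc}=\tfrac12(p_1-p_2)^2$.
\item For $\theta=\pi/8$ the prefactor is $1/4$ in each block, so $X_{\rm dis}=\tfrac14[(p_1-p_2)^2+(p_3-p_4)^2]$.
\end{itemize}
This establishes Eq.~(\ref{eq:block_landscapes}).

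\emph{Equal means.} In both cases $\Tr\hat{\Gamma}=1$. Eq.~(\ref{eq:m1}) with $d=4$ then gives $\CB=1/20$, which is Eq.~(\ref{eq:block_equal_mean}).

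\emph{Variances.} I expand the landscapes and apply Eq.~(\ref{eq:dirichlet_general}) with $n=4$, $d=4$, $(4)_4=840$. The needed monomial moments are
\[
\mathbb E[p_1^4]=\tfrac{24}{840},\quad \mathbb E[p_1^3p_2]=\tfrac{6}{840},\quad \mathbb E[p_1^2p_2^2]=\tfrac{4}{840},\quad \mathbb E[p_1^2p_2p_3]=\tfrac{2}{840},\quad \mathbb E[p_1p_2p_3p_4]=\tfrac{1}{840}.
\]

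For the localized unitary, $(p_1-p_2)^4=p_1^4+p_2^4-4p_1^3p_2-4p_1p_2^3+6p_1^2p_2^2$. This gives $\mathbb E[(p_1-p_2)^4]=24/840$, hence $M_2=\tfrac14\cdot\tfrac{24}{840}=\tfrac1{140}$. Then $\DCB^2=\tfrac1{140}-\tfrac1{400}=\tfrac{13}{2800}$.

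For the distributed unitary, I need the cross moment $\mathbb E[(p_1-p_2)^2(p_3-p_4)^2]$. Writing each square as $(p_1^2+p_2^2)-2p_1p_2$ and expanding the product, the pieces contribute $16-8-8+4$ in units of $1/840$, so the cross moment is $4/840$. Therefore $M_2=\tfrac1{16}\bigl(2\cdot\tfrac{24}{840}+2\cdot\tfrac{4}{840}\bigr)=\tfrac1{240}$. Then $\DCB^2=\tfrac1{240}-\tfrac1{400}=\tfrac1{600}$.

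\emph{Main obstacle.} The only delicate step is bookkeeping in the fourth-order moments. The permutation counts in Eq.~(\ref{eq:dirichlet_general}) must be matched correctly to each monomial type: $24$, $6$, $4$, $2$ and $1$ permutations preserve the index patterns $iiii$, $iiij$, $iijj$, $iijk$ and $ijkl$, respectively.

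I deliberately take these moments from Eq.~(\ref{eq:dirichlet_general}) rather than from the closed form Eq.~(\ref{eq:m2}). That closed form gives $M_2=5/840$ for $U_{\rm loc}$, whereas the direct expansion gives $6/840$, and only the latter reproduces the stated $13/2800$. The closed form therefore cannot serve as a cross-check here, and I would trust only the direct expansion.
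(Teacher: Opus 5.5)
Your computations are correct: the block $\tfrac12\sin^2(2\theta)\begin{pmatrix}1&-1\\-1&1\end{pmatrix}$, the two landscapes, the common trace, $\mathbb E[(p_1-p_2)^4]=24/840$, the cross moment $4/840$, and hence $M_2=1/140$ and $1/240$. Your landscape and mean steps coincide with the paper's.

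For the variances the paper takes a different route. It does not expand monomials. Instead it reads off the invariants $(\Tr\hat{\Gamma},\Tr\hat{\Gamma}^2,\|\operatorname{diag}\hat{\Gamma}\|_2^2)=(1,1,\tfrac12)$ for $\hat{U}_{\rm loc}$ and $(1,\tfrac12,\tfrac14)$ for $\hat{U}_{\rm dis}$, and substitutes them into Eq.~(\ref{eq:m2}). Your route is self-contained: it needs only the Dirichlet monomial moments, so it effectively re-derives Theorem~\ref{thm:moments} in this special case. The paper's route is shorter. It also makes the point of the example explicit: at equal trace, the two generators differ exactly in $\Tr\hat{\Gamma}^2$ and in the diagonal norm.

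Your closing remark, however, is wrong and should be removed. Eq.~(\ref{eq:m2}) does not give $5/840$ for $\hat{U}_{\rm loc}$. The matrix $\hat{\Gamma}_{\rm loc}=\tfrac12\begin{pmatrix}1&-1\\-1&1\end{pmatrix}\oplus 0$ has the single nonzero eigenvalue $\sin^2(\pi/2)=1$, so $\Tr\hat{\Gamma}_{\rm loc}^2=1$. The off-diagonal entries $-\tfrac12$ contribute as much as the diagonal ones, while $\|\operatorname{diag}\hat{\Gamma}_{\rm loc}\|_2^2=\tfrac12$.

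Hence the closed form gives $M_2=(1+2+3)/840=6/840$, exactly what your expansion gives. Likewise it gives $(1+1+\tfrac32)/840=1/240$ for $\hat{U}_{\rm dis}$. Your value $5/840$ comes from taking $\Tr\hat{\Gamma}_{\rm loc}^2=\tfrac12$, i.e., summing only the squared diagonal entries.

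So there is no inconsistency between Eq.~(\ref{eq:m2}) and Eq.~(\ref{eq:dirichlet_general}). The closed form is a valid cross-check, and it is precisely the paper's proof.
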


\begin{proof}---For a two-level rotation, the response eigenvalue on the population-contrast mode is $\sin^2(2\theta)$. Hence,
\begin{eqnarray}
\hat{\Gamma}_{\rm loc} &=& \frac{1}{2}
\begin{pmatrix}
1 & -1 & 0 & 0 \\
-1 & 1 & 0 & 0 \\
0 & 0 & 0 & 0 \\
0 & 0 & 0 &0
\end{pmatrix}, \nonumber \\
\hat{\Gamma}_{\rm dis} &=& \frac{1}{4}
\begin{pmatrix}
1 & -1 & 0 & 0 \\
-1 & 1 & 0 & 0 \\
0 & 0 & 1 & -1 \\
0 & 0 & -1 & 1
\end{pmatrix}
\label{eq:block_gamma}
\end{eqnarray}
Both traces equal one, which gives Eq.~(\ref{eq:block_equal_mean}). The first matrix has $\Tr\hat{\Gamma}^2=1$ and squared diagonal norm $1/2$; the second has $\Tr\hat{\Gamma}^2=1/2$ and squared diagonal norm $1/4$. Theorem~\ref{thm:moments} gives Eq.~(\ref{eq:block_diff_var}), and direct contraction gives Eq.~(\ref{eq:block_landscapes}).
\end{proof}

The pair makes the interpretation unambiguous. $\hat{U}_{\rm loc}$ responds strongly to one contrast and ignores the remaining two population directions. $\hat{U}_{\rm dis}$ spreads the same trace response over two contrasts and is therefore less variable.

\subsection{Continuous-time quantum walks}

For a graph $G$ with adjacency operator $\hat{A}_G$, consider the spectrally normalized Hamiltonian and propagator
\begin{eqnarray}
\hat{H}_G&=&\frac{\hat{A}_G}{\|\hat{A}_G\|_2},\qquad
\hat{U}_G(t)=e^{-it\hat{H}_G}.
\label{eq:ctqw}
\end{eqnarray}
The continuous-time quantum walks are standard models for transport, algorithms, and Hamiltonian simulation~\cite{FarhiGutmann1998,Childs2009,VenegasAndraca2012,ChenLiLi2024}. In the vertex basis, an incoherent state is a classical distribution over vertices. CGP measures average conversion of vertex uncertainty into coherent amplitudes, whereas CGPD measures how strongly this conversion depends on the initial vertex population profile.

\begin{figure}[t]
\centering
\includegraphics[width=0.46\textwidth]{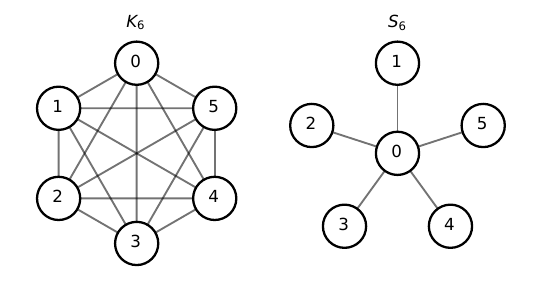}
\caption{The graph structures used in the continuous-time quantum-walk comparison. The complete graph $K_6$ has all pairwise vertex connections, whereas the star graph $S_6$ consists of one hub vertex connected to five leaves. The vertex labels match those used in Eqs.~(\ref{eq:landscape_complete}) and~(\ref{eq:landscape_star}).}
\label{fig:walk_graphs}
\end{figure}

Here, we compare the complete graph $K_6$ and the star graph $S_6$ with one hub and five leaves (see Fig.~\ref{fig:walk_graphs}). Their times are chosen to produce exactly the same CGP.

\begin{proposition}[Topology-resolved equal-CGP pair]
\label{prop:walk}
Let
\begin{eqnarray}
t_K &=& \frac56\arccos\!\left(\frac{3\sqrt{145}}{25}-2\right), \nonumber\\
t_S &=& \pi.
\label{eq:walk_times}
\end{eqnarray}
Then,
\begin{eqnarray}
\CB[\hat{U}_{K_6}(t_K)] = \CB[\hat{U}_{S_6}(t_S)] = \frac{16}{175},
\label{eq:walk_same_mean}
\end{eqnarray}
but
\begin{eqnarray}
\DCB[\hat{U}_{K_6}(t_K)] &=& \frac{16\sqrt{10}}{875} \simeq 0.0578245, \nonumber\\
\DCB[\hat{U}_{S_6}(t_S)] &=& \frac{4\sqrt{230}}{875} \simeq0.0693291.
\label{eq:walk_different_deviation}
\end{eqnarray}
The star-graph variance is $23/16$ times the complete-graph variance.
\end{proposition}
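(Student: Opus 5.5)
The plan is to reduce both walks to explicit unistochastic matrices $\hat{P}_{\hat{U}}$. We then build $\hat{\Gamma}_{\hat{U}}$ via Proposition~\ref{prop:gamma} and read off $\Tr\hat{\Gamma}$, $\Tr\hat{\Gamma}^2$ and $\|\operatorname{diag}\hat{\Gamma}\|_2^2$. Finally we insert these into Theorem~\ref{thm:moments}, or more efficiently into the decomposition of Theorem~\ref{thm:decomposition}. In both cases the spectrum of $\hat{H}_G$ is simple enough that $\hat{U}_G(t)$ is a linear combination of a few spectral projectors, so every matrix element can be written down directly.

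For $K_6$, we have $\hat{A}_{K_6}=\mathbf{J}-\one$, where $\mathbf{J}=\vone\vone^{\mathsf T}$ and $\|\hat{A}_{K_6}\|_2=5$. Hence $\hat{H}=(\mathbf{J}-\one)/5$ has eigenvalue $1$ on $\vone$ and $-1/5$ on $\vone^\perp$. This gives
\[
\hat{U}_{K_6}(t)=e^{-it}\,\mathbf{J}/6+e^{it/5}\,\Hperp .
\]
Every matrix in the algebra spanned by $\one$ and $\mathbf{J}$ stays there, so $\hat{P}_{\hat{U}}=\mu\Hperp+\mathbf{J}/6$ with
\[
\mu=1-6|b|^2=\frac{2+\cos(6t/5)}{3},\qquad b=\frac{e^{-it}-e^{it/5}}{6}.
\]
Then $\hat{\Gamma}=(1-\mu^2)\Hperp$ is isotropic, so by the equality case of Theorem~\ref{thm:lower_bound}, $\DCB=\gamma_6\CB$ with $\gamma_6=12/\sqrt{360}=2/\sqrt{10}$. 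The choice of $t_K$ gives $\cos(6t_K/5)=3\sqrt{145}/25-2$. Hence $\mu=\sqrt{145}/25$ and $\mu^2=29/125$, so $\Tr\hat{\Gamma}=5\cdot 96/125=96/25$. Eq.~(\ref{eq:m1}) then gives $\CB=(96/25)/42=16/175$, and therefore $\DCB=16\sqrt{10}/875$.

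For $S_6$ (hub $h$, leaves $1,\dots,5$), $\|\hat{A}_{S_6}\|_2=\sqrt5$. The normalized Hamiltonian has eigenvalues $\pm1$ on $\mathrm{span}\{\ket{h},\ket{s}\}$, with $\ket{s}=5^{-1/2}\sum_j\ket{j}$, and eigenvalue $0$ on the remaining $4$-dimensional leaf space. At $t=\pi$ both phases $e^{\mp i\pi}$ equal $-1$. Therefore
\[
\hat{U}_{S_6}(\pi)=\hat{\openone}-2\Pi,\qquad \Pi=\ketbra{h}{h}+\ketbra{s}{s}.
\]
So the hub is mapped to $-\ket{h}$, and on the leaves the amplitudes are $3/5$ on the diagonal and $-2/5$ off the diagonal. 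Thus $\hat{P}_{\hat{U}}=1\oplus(\one_5/5+\tfrac{4}{25}\mathbf{J}_5)$. Squaring gives
\[
\hat{P}^{\mathsf T}\hat{P}=1\oplus\left(\one_5/25+\tfrac{24}{125}\mathbf{J}_5\right),
\]
and hence
\[
\hat{\Gamma}=0\oplus\tfrac{24}{25}\left(\one_5-\mathbf{J}_5/5\right).
\]
The resulting invariants are
\begin{itemize}
\item $\Tr\hat{\Gamma}=96/25$, which gives the same $\CB=16/175$;
\item $\Tr\hat{\Gamma}^2=4(24/25)^2$;
\item $\|\operatorname{diag}\hat{\Gamma}\|_2^2=5(96/125)^2$.
\end{itemize}
Substituting into Eq.~(\ref{eq:cgpd_closed}) with $\Dfour=3024$ should give $\DCB^2=16\cdot230/875^2$. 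Comparing with $\DCB[K_6]^2=256\cdot10/875^2$ gives the ratio $3680/2560=23/16$.

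The main obstacle is not conceptual but bookkeeping: keeping the spectral normalizations straight and checking the arithmetic of the final substitution into Eq.~(\ref{eq:cgpd_closed}). As a cross-check for $S_6$, I would use Theorem~\ref{thm:decomposition}. There $\hat{A}_{\hat{U}}=\hat{\Gamma}-\tfrac{96}{125}\Hperp$ isolates the hub-versus-leaves anisotropy. The identity $\Xi_B^2=\DCB^2-\gamma_6^2\CB^2=(3680-2560)/875^2$ must then be reproduced by $2\|\hat{A}\|_F^2+6\|\operatorname{diag}\hat{A}\|_2^2$ divided by $\Dfour$.
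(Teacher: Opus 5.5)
Your proposal is correct and takes the same route as the paper. Spectral decomposition gives $\hat{\Gamma}_{K_6}(t_K)=\tfrac{96}{125}\Hperp$ and $\hat{\Gamma}_{S_6}(\pi)=\tfrac{24}{25}\bigl[0\oplus(\one_5-\mathbf{J}_5/5)\bigr]$, both with trace $96/25$, and Theorem~\ref{thm:moments} then gives the deviations; your equality-case shortcut $\DCB=\gamma_6\CB$ for the isotropic $K_6$ matrix is equivalent. The $S_6$ substitution you left as ``should give'' does check out: $M_2=10080/875^2$ and $\CB^2=6400/875^2$, so $\DCB[\hat{U}_{S_6}(\pi)]^2=3680/875^2$.
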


\begin{proof}---For $K_6$, let $\hat{\Pi}_6=\vone\vone^{\mathsf T}/6$. Spectral decomposition of $\hat{H}_{K_6}$ gives
\begin{eqnarray}
\hat{P}_{K_6}(t) &=& \hat{\Pi}_6+r_K(t)(\one-\hat{\Pi}_6), \nonumber \\
r_K(t) &=& \frac{2+\cos(6t/5)}{3}.
\end{eqnarray}
At $t=t_K$, $r_K(t_K)^2=29/125$, and hence
\begin{eqnarray}
\hat{\Gamma}_{K_6}(t_K) = \frac{96}{125}(\one-\hat{\Pi}_6).
\label{eq:gamma_complete}
\end{eqnarray}
For $S_6$, write $\ket{s}=5^{-1/2}\sum_{j=1}^5\ket{j}$ for the uniform leaf state. The normalized Hamiltonian couples only $\ket{0}$ and $\ket{s}$. At $t=\pi$, the hub and uniform leaf mode acquire a sign while the orthogonal leaf modes are stationary, yielding
\begin{eqnarray}
\hat{\Gamma}_{S_6}(\pi) = \frac{24}{25} \left[0\oplus\left(\one_5-\frac{\vone_5\vone_5^{\mathsf T}}5\right)\right].
\label{eq:gamma_star}
\end{eqnarray}
Both matrices have trace $96/25$, proving Eq.~(\ref{eq:walk_same_mean}). Substitution into Eq.~(\ref{eq:cgpd_closed}) gives Eq.~(\ref{eq:walk_different_deviation}). Further spectral details are given in Appendix~\ref{app:walk}.
\end{proof}

The landscapes are
\begin{eqnarray}
X_{K_6}(\pp) = \frac{96}{125} \left(\sum_{j=0}^5p_j^2 - \frac{1}{6}\right),
\label{eq:landscape_complete}
\end{eqnarray}
and
\begin{eqnarray}
X_{S_6}(\pp) = \frac{24}{25} \left[\sum_{j=1}^5p_j^2 - \frac{1}{5}\left(\sum_{j=1}^5p_j\right)^2\right].
\label{eq:landscape_star}
\end{eqnarray}
The complete graph treats all zero-sum population directions identically and saturates the universal bound. The star graph is blind to the hub population and to the uniform leaf mode, while strongly converting contrasts among leaves. Equal trace gives equal CGP, but the missing modes create anisotropy and a larger CGPD.

The corresponding time traces and generated-coherence distributions are shown in Fig.~\ref{fig:walk}. The marked propagators have the same mean, while the visibly broader $S_6$ distribution resolves the topology-induced selectivity.

\begin{figure}[t]
\centering
\includegraphics[width=0.46\textwidth]{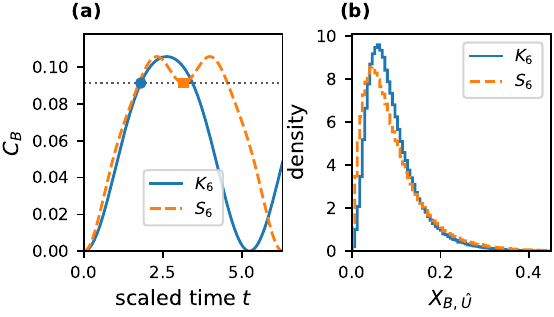}
\caption{The continuous-time quantum walks on $K_6$ and $S_6$. (a) CGP versus scaled time; marked propagators have the common value $16/175$. (b) Generated-coherence distributions for uniform-simplex inputs at the marked times. The means coincide, while the star graph has the broader distribution because coherence generation is concentrated on leaf-population contrasts. Histograms use $3.5\times10^5$ Dirichlet samples; exact moments are given in Proposition~\ref{prop:walk}.}
\label{fig:walk}
\end{figure}

The example has a direct graph-learning interpretation. Quantum evolution kernels and graph feature maps encode topology through Hamiltonian time evolution~\cite{Rossi2015,Henry2021,Albrecht2023}. If incoherent populations represent uncertain or data-dependent vertex weights, CGP measures average coherence injected into the feature state, while CGPD measures sensitivity to those weights. Two feature maps with equal mean coherence can therefore have different robustness and different emphasis on graph modes.

\subsection{Clean and quasiperiodic transport}

We next consider a single particle on an open chain of $L=6$ sites. The clean and Aubry--Andr\'e--Harper (AAH) Hamiltonians are
\begin{eqnarray}
\hat{H}_{\rm cl} = -J\sum_{n=0}^{L-2}\left(\ketbra{n}{n+1}+\ketbra{n+1}{n}\right),
\label{eq:clean_hamiltonian}
\end{eqnarray}
\begin{eqnarray}
\hat{H}_{\rm AAH} = \hat{H}_{\rm cl}+\lambda\sum_{n=0}^{L-1}\cos(2\pi\alpha n+\phi)\ketbra{n}{n},
\label{eq:aah_hamiltonian}
\end{eqnarray}
with $\alpha=(\sqrt5-1)/2$, $\phi=0$, and $\lambda/J=5$. The infinite AAH model is in the localized regime for $\lambda/J>2$; finite chains retain the corresponding suppression and mode selectivity of transport~\cite{AubryAndre1980,BiddleDasSarma2009,Lahini2009}.

For $\hat{U}_\mu(t)=e^{-it\hat{H}_\mu}$, we choose one time on each trajectory by solving
\begin{eqnarray}
\CB[\hat{U}_\mu(t_\mu)] = 0.035.
\label{eq:aah_matching}
\end{eqnarray}
On the selected branches,
\begin{eqnarray}
Jt_{\rm cl} &=& 4.1254706980, \nonumber \\
Jt_{\rm AAH} &=& 6.7980261219.
\label{eq:aah_times}
\end{eqnarray}
The matched deviations are
\begin{eqnarray}
\DCB[\hat{U}_{\rm cl}(t_{\rm cl})] &=& 0.0240471626, \nonumber \\
\DCB[\hat{U}_{\rm AAH}(t_{\rm AAH})] &=& 0.0374482254.
\label{eq:aah_deviations}
\end{eqnarray}
Thus, the localized model has the same average generated coherence but a variance approximately $2.43$ times larger. Appendix~\ref{app:numerics} gives the numerical protocol, branch checks, and reproducibility details for this comparison.

\begin{figure}[t]
\centering
\includegraphics[width=0.46\textwidth]{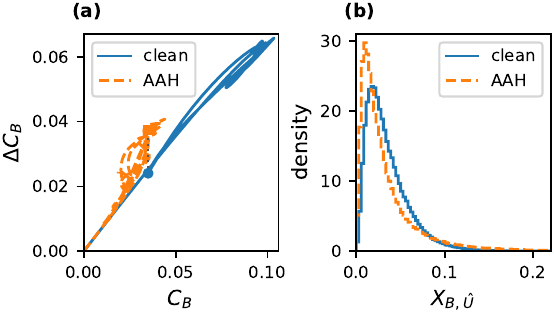}
\caption{Matched-CGP transport example for a clean chain and an AAH chain with $L=6$ and $\lambda/J=5$. (a) The parametric trajectories in the CGP--CGPD plane for $0\leq Jt\leq8$. The markers identify times with the common $\text{CGP}=0.035$; the dotted segment is the difference resolved by CGPD. (b) The distributions of generated coherence at the matched times from $3.5\times10^5$ uniform-simplex samples. The AAH distribution is more concentrated near weak generation but has a longer tail, yielding a substantially larger deviation at the same mean.}
\label{fig:aah}
\end{figure}

Fig.~\ref{fig:aah} explains the mechanism. Extended clean dynamics spreads its coherence-conversion capability over several population modes, whereas quasiperiodic localization produces a more heterogeneous response. Many input profiles generate little coherence, while a smaller set aligned with resonant local contrasts produces much larger values. The long tail compensates the enhanced weight near zero so that the average can match the clean chain. CGPD exposes this redistribution immediately.

The example also illustrates why CGPD can be useful in Hamiltonian feature maps. In kernel-based quantum learning, a data point may determine a Hamiltonian parameter, a graph, an input population profile, or a fixed-basis readout distribution~\cite{Havlicek2019,SchuldKilloran2019,Baker2023}. Matching average coherence does not guarantee comparable encoding stability. A feature layer with larger CGPD is more selective to population uncertainty and may emphasize particular modes. This statement concerns coherence-sensitive readout or data-dependent dynamics; a single data-independent unitary followed by a unitary-invariant full-state kernel would, of course, leave pairwise state overlaps unchanged. With that caveat, $(\CB,\DCB)$ separates average resource generation from input-dependent robustness in a way that conventional kernel expressivity measures do not.

\section{Discussion}\label{sec:discussion}

We have introduced the coherence-generating power deviation (CGPD) as the standard deviation of Hilbert--Schmidt coherence generated over the uniform simplex of incoherent inputs. The population-space matrix $\hat{\Gamma}_{\hat{U}}=\one-\hat{P}_{\hat{U}}^{\mathsf T}\hat{P}_{\hat{U}}$ turns unitary coherence generation into a quadratic landscape over population imbalances, whose mean is CGP and whose width is CGPD. This formulation gave exact first- and second-moment formulas in arbitrary finite dimension, identified the additional invariants beyond the trace that enter the deviation, proved a universal CGP--CGPD lower bound, and decomposed the deviation into an unavoidable isotropic part and an excess selective part. We also showed that the same moment geometry extends to unital channels and illustrated the theory through matched-CGP examples in graph quantum walks and quasiperiodic transport.

The main implication is that the average coherence generation alone is not sufficient to characterize a dynamical resource. Two processes can have identical CGP while distributing their coherence-generation ability over population space in very different ways. CGPD separates nearly isotropic generation from selective generation concentrated on particular population modes. This distinction is important whenever the diagonal input is uncertain, sampled, trained, or data dependent: a small excess selectivity factor indicates a stable coherence source, whereas a larger one identifies dynamics that preferentially amplifies specific population imbalances.

CGPD is also operationally useful. It is the single-input variance scale for finite-sample CGP benchmarking, gives distribution-independent reliability bounds for unknown incoherent inputs, and equals the linear susceptibility of the mean coherence generation to an input-ensemble bias. The four-copy formula shows that this second moment can be accessed without full process tomography. In the examples, graph topology and localization changed CGPD at fixed CGP, demonstrating that the deviation can reveal the structural information invisible to the mean and can serve as a resource-sensitive diagnostic for quantum simulation and Hamiltonian feature maps.


\begin{acknowledgments}
This work was supported by the National Research Foundation of Korea through grants funded by the Ministry of Science and ICT (RS-2024-00432214, RS-2025-03532992, and RS-2025-18362970); by the Institute of Information and Communications Technology Planning and Evaluation through a grant funded by the Korean government (RS-2019-II190003, ``Research and Development of Core Technologies for Programming, Running, Implementing and Validating of Fault-Tolerant Quantum Computing System''); and by Grant No.~K25L5M2C2 at the Korea Institute of Science and Technology Information. The authors acknowledge the Yonsei University Quantum Computing Project Group for support and access to the IBM Quantum System One operated at Yonsei University.
\end{acknowledgments}

\appendix

\section{Uniform-simplex moments from Haar integration}
\label{app:haar}

This appendix derives the moment identities used in Sec.~\ref{sec:moments}. For a Haar-random vector $\ket{\psi}\in\mathbb C^d$,
\begin{eqnarray}
\mathbb E_{\psi}[\ketbra{\psi}{\psi}^{\otimes n}] &=& \frac{\hat{\Pi}_{\mathrm{sym}}^{(n)}}{\dim\operatorname{Sym}^n(\mathbb C^d)} \nonumber\\
	&=& \frac{1}{(d)_n}\sum_{\pi\in S_n}\hat{T}_\pi.
\label{eq:haar_state_moment}
\end{eqnarray}
The second equality follows from
\begin{eqnarray}
&& \hat{\Pi}_{\mathrm{sym}}^{(n)} = \frac1{n!}\sum_{\pi\in S_n}\hat{T}_\pi, \nonumber \\
&& \dim\operatorname{Sym}^n(\mathbb C^d) = \frac{(d)_n}{n!}.
\end{eqnarray}
Applying $\DB^{\otimes n}$ gives Eq.~(\ref{eq:tensor_moment}).

\begin{lemma}[Dirichlet monomial moments]
\label{lem:dirichlet}
For nonnegative integers $m_1,\ldots,m_d$ with $m=\sum_im_i$,
\begin{eqnarray}
\mathbb E_{\nu_d}\!\left[\prod_{i=1}^dp_i^{m_i}\right] = \frac{(d-1)!\prod_i m_i!}{(d+m-1)!}.
\label{eq:dirichlet_monomial}
\end{eqnarray}
\end{lemma}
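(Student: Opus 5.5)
The plan is to compute the monomial moment directly from the flat Dirichlet density and then to cross-check it against the Haar construction. The uniform measure $\nu_d$ on $\Delta_{d-1}$ has constant density with respect to Lebesgue measure on the first $d-1$ coordinates $(p_1,\ldots,p_{d-1})$, with $p_d=1-\sum_{i<d}p_i$. The standard simplex $\{p_i\ge0,\sum_{i<d}p_i\le1\}$ has volume $1/(d-1)!$, so the normalized density is $(d-1)!$. The claim therefore reduces to evaluating the Dirichlet integral
\begin{eqnarray}
I(m_1,\ldots,m_d)=\int_{\Delta}\prod_{i=1}^{d}p_i^{m_i}\,\dd p_1\cdots \dd p_{d-1} = \frac{\prod_i m_i!}{(d+m-1)!},
\end{eqnarray}
after which multiplication by $(d-1)!$ gives Eq.~(\ref{eq:dirichlet_monomial}).

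I will prove the integral formula by the Gamma-function trick rather than by iterated Beta integrals. Consider $\int_{[0,\infty)^d}\prod_i t_i^{m_i}e^{-t_i}\,\dd t=\prod_i m_i!$. Then change variables to $t_i=r p_i$, where $r=\sum_i t_i$ and $\pp\in\Delta_{d-1}$. The Jacobian of this change is $r^{d-1}\,\dd r\,\dd p_1\cdots \dd p_{d-1}$, so the integral factorizes as
\begin{eqnarray}
\int_0^\infty r^{m+d-1}e^{-r}\,\dd r\cdot I = (m+d-1)!\,I .
\end{eqnarray}
This yields $I=\prod_i m_i!/(m+d-1)!$. Setting all $m_i=0$ confirms the normalization $I=1/(d-1)!$. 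The same calculation also shows that, for a Haar-random vector, the squared moduli $\abs{\braket{j}{\psi}}^2$ of independent complex Gaussian amplitudes are i.i.d.\ exponentials normalized by their sum, which is exactly the flat Dirichlet law. This justifies the equivalence claimed after Eq.~(\ref{eq:haar_dephase}).

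As a consistency check, I will compare with Eq.~(\ref{eq:dirichlet_general}). For an index string in which value $i$ appears $m_i$ times, the permutations $\pi$ with $\prod_a\delta_{i_a,i_{\pi(a)}}=1$ are exactly those preserving the level sets, and there are $\prod_i m_i!$ of them. Using $(d)_m=(d+m-1)!/(d-1)!$, the two formulas agree. This gives a second, purely Haar-based proof through Eq.~(\ref{eq:haar_state_moment}).

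The only delicate step is the Jacobian $r^{d-1}$ in the polar-type substitution. I will verify it by writing $t_d=r-\sum_{i<d}t_i$ and noting that the map $(r,p_1,\ldots,p_{d-1})\mapsto(t_1,\ldots,t_d)$ has determinant $r^{d-1}$. The remaining steps are routine.
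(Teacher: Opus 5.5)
Your proposal is correct and takes essentially the same approach as the paper: both normalize the flat density on $\Delta_{d-1}$ to $(d-1)!$ and reduce the claim to the multivariate beta integral $\prod_i m_i!/(d+m-1)!$. The only difference is that you derive this integral yourself via the substitution $t_i=rp_i$ with Jacobian $r^{d-1}$, where the paper simply quotes it; your permutation-counting check with $(d)_m=(d+m-1)!/(d-1)!$ corresponds to the paper's remark that these moments agree with Eq.~(\ref{eq:dirichlet_general}).
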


\begin{proof}---The uniform density on $\Delta_{d-1}$ is $(d-1)!$. The multivariate beta integral gives
\begin{eqnarray}
\int_{\Delta_{d-1}}\prod_i p_i^{m_i}\dd\pp = \frac{\prod_i\Gamma(m_i+1)}{\Gamma(d+m)}.
\end{eqnarray}
Multiplication by $(d-1)! = \Gamma(d)$ yields Eq.~(\ref{eq:dirichlet_monomial}).
\end{proof}

Eq.~(\ref{eq:dirichlet_second}) follows by setting $m=2$. For fourth order, the possible index multiplicities are
\begin{eqnarray}
\mathbb E[p_i^4]&=&\frac{24}{\Dfour}, \nonumber \\
\mathbb E[p_i^3p_j]&=&\frac6{\Dfour}\quad(i\neq j), \nonumber \\
\mathbb E[p_i^2p_j^2]&=&\frac4{\Dfour}\quad(i\neq j), \nonumber \\
\mathbb E[p_i^2p_jp_k]&=&\frac2{\Dfour}\quad(i,j,k\ \mathrm{distinct}), \nonumber \\
\mathbb E[p_ip_jp_kp_l]&=&\frac1{\Dfour}\quad(i,j,k,l\ \mathrm{distinct}).
\label{eq:fourth_patterns}
\end{eqnarray}
These values are equivalent to the permutation in Eq.~(\ref{eq:dirichlet_general}).

\section{Fourth-moment contraction}\label{app:fourth}

Let $\hat{G}$ be any real symmetric matrix satisfying $\hat{G}\vone=0$. We evaluate
\begin{eqnarray}
\mathbb E[(\pp^{\mathsf T}\hat{G}\pp)^2] = \sum_{ijkl}\hat{G}_{ij}\hat{G}_{kl}\mathbb E[p_ip_jp_kp_l].
\label{eq:generic_second}
\end{eqnarray}
Using Eq.~(\ref{eq:dirichlet_general}), each permutation in $S_4$ imposes index identifications. The conjugacy classes and their contributions are summarized in Table~\ref{tab:s4}.

\begin{table}[t]
\caption{Conjugacy classes of $S_4$ in the fourth-moment contraction.}
\label{tab:s4}
\begin{ruledtabular}
\begin{tabular}{lccc}
Cycle type & Number & Fixed point? & Total contribution \\
\hline
$1^4$ & $1$ & yes & $0$ \\
$2\,1^2$ & $6$ & yes & $0$ \\
$3\,1$ & $8$ & yes & $0$ \\
$2^2$ & $3$ & no & $(\Tr\hat{G})^2+2\Tr(\hat{G}^2)$ \\
$4$ & $6$ & no & $6\sum_i\hat{G}_{ii}^2$ \\
\end{tabular}
\end{ruledtabular}
\end{table}

Any fixed point leaves one index appearing in a single matrix factor and therefore produces a row sum $\sum_j\hat{G}_{ij}=0$. The three double transpositions are
\begin{eqnarray}
(12)(34),\qquad(13)(24),\qquad(14)(23).
\end{eqnarray}
Their contractions are, respectively,
\begin{eqnarray}
\sum_{ik}\hat{G}_{ii}\hat{G}_{kk} &=& (\Tr\hat{G})^2, \nonumber \\
\sum_{ij}\hat{G}_{ij}^2 &=& \Tr(\hat{G}^2), \nonumber \\
\sum_{ij}\hat{G}_{ij}\hat{G}_{ji} &=& \Tr(\hat{G}^2).
\end{eqnarray}
Each four-cycle identifies all four external indices after contraction and gives $\sum_i\hat{G}_{ii}^2$. There are six such cycles. Dividing by $\Dfour$ proves Eq.~(\ref{eq:m2}).

An alternative derivation follows directly from Eq.~(\ref{eq:fourth_patterns}) by grouping terms according to index multiplicity. The permutation method is shorter and generalizes naturally to higher cumulants.

\section{Details of the fluctuation bound and decomposition}\label{app:bound}

Let $\hat{G} \geq 0$, $\hat{G}\vone=0$, and $t=\Tr\hat{G}$. Define
\begin{eqnarray}
\bar\lambda=\frac{t}{d-1},
\quad
\hat{A}=\hat{G}-\bar\lambda\Hperp.
\label{eq:app_decomp}
\end{eqnarray}
Then, $\hat{A}\vone=0$ and $\Tr\hat{A}=0$. Since $\Hperp^2=\Hperp$ and $\Tr\Hperp=d-1$,
\begin{eqnarray}
\Tr(\hat{G}^2) &=& \bar\lambda^2(d-1)+\Tr(\hat{A}^2) \nonumber \\
	&=& \frac{t^2}{d-1}+\|\hat{A}\|_F^2.
\label{eq:app_spectral}
\end{eqnarray}
The diagonal of $\bar\lambda\Hperp$ is constant and equal to $t/d$, so
\begin{eqnarray}
\sum_j\hat{G}_{jj}^2 = \frac{t^2}{d}+\sum_j\hat{A}_{jj}^2.
\label{eq:app_diagonal}
\end{eqnarray}
Substituting Eqs.~(\ref{eq:app_spectral}) and (\ref{eq:app_diagonal}) into the generic variance formula yields
\begin{eqnarray}
\Var(\pp^{\mathsf T}\hat{G}\pp)&=&
\frac{4t^2}{(d-1)(d+1)^2(d+2)(d+3)}\nonumber\\
&&+\frac{2\|\hat{A}\|_F^2+6\|\operatorname{diag}\hat{A}\|_2^2}{\Dfour}.
\label{eq:app_variance_identity}
\end{eqnarray}
Because $\mathbb E[\pp^{\mathsf T}\hat{G}\pp]=t/[d(d+1)]$, the first term is $\gamma_d^2C^2$. This proves Theorem~\ref{thm:decomposition} and immediately implies the lower bound.

The equality condition is especially transparent in this form: the lower bound is saturated exactly when $\hat{A}=0$, namely when all population modes in $\vone^\perp$ have the same response strength. For a unitary, this condition is weaker than maximal CGP; any feasible $\hat{\Gamma}_{\hat{U}}=\lambda\Hperp$ with $0<\lambda\leq1$ is isotropic, while maximal CGP corresponds to $\lambda=1$.

\section{Analytical details for the graph walks}\label{app:walk}

For the complete graph $K_d$, the adjacency operator is $\hat{A}=\hat{J}_d-\hat{\openone}$, where $\hat{J}_d=\sum_{i,j=1}^d\ketbra{i}{j}$. With spectral normalization $\hat{H}=\hat{A}/(d-1)$,
\begin{eqnarray}
\hat{U}_{K_d}(t) = e^{-it}\hat{\Pi}_d+e^{i \frac{t}{(d-1)}}(\hat{\openone}-\hat{\Pi}_d),
\quad
\hat{\Pi}_d = \frac{\hat{J}_d}{d}.
\label{eq:complete_propagator}
\end{eqnarray}
The transition probabilities are expressed with the real population-space projector $\hat{\Pi}_d=\vone\vone^{\mathsf T}/d$ and have one diagonal value and one off-diagonal value. Direct multiplication gives
\begin{eqnarray}
\hat{P}_{K_d}(t) &=& \hat{\Pi}_d+r_d(t)(\one-\hat{\Pi}_d), \nonumber \\
r_d(t) &=& \frac{d-2+2\cos\!\left[dt/(d-1)\right]}{d}.
\label{eq:complete_transition_general}
\end{eqnarray}
For $d=6$, this reduces to the expression used in Proposition~\ref{prop:walk}.

For the star $S_d$, let $\ket{0}$ be the hub and
\begin{eqnarray}
\ket{s} = \frac1{\sqrt{d-1}}\sum_{j=1}^{d-1}\ket{j}.
\end{eqnarray}
The normalized Hamiltonian is
\begin{eqnarray}
\hat{H}_{S_d} = \ketbra{0}{s}+\ketbra{s}{0}.
\label{eq:star_hamiltonian}
\end{eqnarray}
It acts as a Pauli $X$ on $\operatorname{span}\{\ket{0},\ket{s}\}$ and vanishes on the $(d-2)$-dimensional leaf-contrast subspace. At $t=\pi$, the active two-dimensional subspace acquires a sign while the contrast subspace is unchanged. Evaluating the absolute squares gives
\begin{eqnarray}
\hat{\Gamma}_{S_d}(\pi) = \frac{8(d-3)}{(d-1)^2} \left[0\oplus\left(\one_{d-1}-\frac{\hat{J}_{d-1}}{d-1}\right)\right].
\label{eq:star_gamma_general}
\end{eqnarray}
Setting $d=6$ gives Eq.~(\ref{eq:gamma_star}).

\section{Numerical protocol, branch checks, and reproducibility}
\label{app:numerics}

All numerical examples use the exact matrix formulas in Theorem~\ref{thm:moments}; Monte Carlo sampling is used only to visualize the distribution. For a propagator $\hat{U}(t)$, set
\begin{eqnarray}
\hat{P}(t) &=& \abs{\hat{U}(t)}^{\circ2}, \nonumber \\
\hat{\Gamma}(t) &=& \one-\hat{P}(t)^{\mathsf T}\hat{P}(t), \nonumber \\
g(t) &=& \Tr\hat{\Gamma}(t), \nonumber \\
q(t) &=& \Tr[\hat{\Gamma}(t)^2], \nonumber \\
h(t) &=& \sum_j\hat{\Gamma}_{jj}(t)^2.
\label{eq:numerical_invariants}
\end{eqnarray}
Here, $\abs{\cdot}^{\circ2}$ denotes entrywise absolute square. Then,
\begin{eqnarray}
\CB[\hat{U}(t)] &=& \frac{g(t)}{d(d+1)}, \nonumber \\
\DCB[\hat{U}(t)]^2 &=& \frac{g(t)^2+2q(t)+6h(t)}{\Dfour} - \CB[\hat{U}(t)]^2.
\label{eq:numerical_recipe}
\end{eqnarray}

\begin{figure}[t]
\centering
\includegraphics[width=0.46\textwidth]{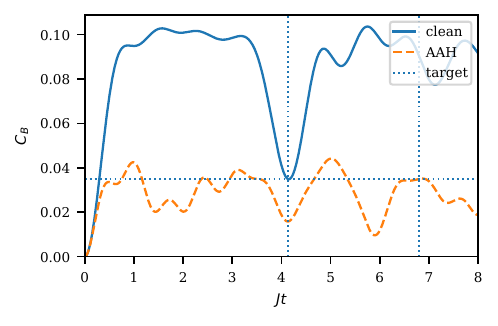}
\caption{The CGP time traces for the clean and AAH chains used in Fig.~\ref{fig:aah}. The horizontal line marks the target CGP $0.035$, and the vertical markers identify the selected roots. Exact matrix moments, rather than histogram estimates, determine all curves.}
\label{fig:aah_traces}
\end{figure}

For the AAH calculation, we set $J=1$, $L=6$, $\alpha=(\sqrt5-1)/2$, $\phi=0$, and $\lambda=5$. The main-text roots are found at high working precision on brackets containing the values in Eq.~(\ref{eq:aah_times}). The time traces in Fig.~\ref{fig:aah_traces} show that the common target is reached on several branches; the quoted pair is a deliberately nontrivial revival-time comparison, not a small-time perturbative coincidence.

To verify that the ordering is not tied to one selected revival, we also apply a deterministic first-crossing rule
\begin{eqnarray}
\tau_\mu(C_*) = \inf\{t>0:\CB[\hat{U}_\mu(t)]=C_*\}.
\label{eq:first_crossing}
\end{eqnarray}
Table~\ref{tab:first_crossing} reports the several targets. The AAH deviation exceeds the clean-chain deviation throughout this range, although the magnitude of the enhancement varies because the finite system has multiple interfering frequencies.

\begin{table}[t]
\caption{First-crossing robustness check for the transport example. The last column is the ratio of variances.}
\label{tab:first_crossing}
\begin{ruledtabular}
\begin{tabular}{cccc}
$C_*$ & $\Delta_{\rm cl}$ & $\Delta_{\rm AAH}$ & $V_{\rm AAH}/V_{\rm cl}$ \\
\hline
$0.025$ & $0.0181115$ & $0.0192111$ & $1.12511$ \\
$0.030$ & $0.0216623$ & $0.0240290$ & $1.23044$ \\
$0.035$ & $0.0251848$ & $0.0389494$ & $2.39180$ \\
$0.040$ & $0.0286763$ & $0.0394245$ & $1.89011$ \\
\end{tabular}
\end{ruledtabular}
\end{table}

The uniform-simplex samples are generated as
\begin{eqnarray}
p_i&=&\frac{z_i}{\sum_jz_j}, \quad z_i\sim\operatorname{Exp}(1),
\label{eq:dirichlet_sampling}
\end{eqnarray}
which is equivalent to $\operatorname{Dir}(1,\ldots,1)$. The histograms use $3.5\times10^5$ common random population vectors for the matched dynamics, reducing visual differences caused by independent sampling. A fixed seed is specified in the code.

\section{Low-dimensional checks}\label{app:lowdim}

For completeness, we collect two calculations that clarify where genuinely new deviation information begins. For a qubit, write $\pp=(x,1-x)$ with $x$ uniform on $[0,1]$. Every unistochastic matrix has one nontrivial singular value, so $\hat{\Gamma}_{\hat{U}} = \lambda\hat{H}_2$ and
\begin{eqnarray}
X_{B,\hat{U}}(x) &=& \frac{\lambda}{2}(2x-1)^2, \nonumber \\
\CB(\hat{U}) &=& \frac{\lambda}{6}, \nonumber \\
X_{\max} &=& \frac{\lambda}{2}=3\CB(\hat{U}).
\label{eq:qubit_distribution_derivation}
\end{eqnarray}
The transformation $y=(2x-1)^2$ has density $1/(2\sqrt y)$ on $(0,1)$, which proves Eq.~(\ref{eq:qubit_density}) and gives $\DCB=2\CB/\sqrt5$. Thus the entire qubit family is isotropic on the one-dimensional space $\vone^\perp$.

For the four-level calibration pair, Eq.~(\ref{eq:block_gamma}) gives
\begin{eqnarray}
(\Tr\hat{\Gamma},\Tr\hat{\Gamma}^2,\|\operatorname{diag}\hat{\Gamma}\|_2^2)_{\rm loc}
&=&(1,1,1/2),\nonumber\\
(\Tr\hat{\Gamma},\Tr\hat{\Gamma}^2,\|\operatorname{diag}\hat{\Gamma}\|_2^2)_{\rm dis}
&=&(1,1/2,1/4).
\label{eq:block_invariants_appendix}
\end{eqnarray}
With $\Dfour=4\cdot5\cdot6\cdot7=840$, the second moments are $1/140$ and $1/240$, respectively. Subtracting the common mean square $1/400$ yields the variances in Eq.~(\ref{eq:block_diff_var}). This calculation is the smallest transparent instance in which equal trace does not fix the remaining second-order invariants.


\begin{thebibliography}{31}%
\makeatletter
\providecommand \@ifxundefined [1]{%
 \@ifx{#1\undefined}
}%
\providecommand \@ifnum [1]{%
 \ifnum #1\expandafter \@firstoftwo
 \else \expandafter \@secondoftwo
 \fi
}%
\providecommand \@ifx [1]{%
 \ifx #1\expandafter \@firstoftwo
 \else \expandafter \@secondoftwo
 \fi
}%
\providecommand \natexlab [1]{#1}%
\providecommand \enquote  [1]{``#1''}%
\providecommand \bibnamefont  [1]{#1}%
\providecommand \bibfnamefont [1]{#1}%
\providecommand \citenamefont [1]{#1}%
\providecommand \href@noop [0]{\@secondoftwo}%
\providecommand \href [0]{\begingroup \@sanitize@url \@href}%
\providecommand \@href[1]{\@@startlink{#1}\@@href}%
\providecommand \@@href[1]{\endgroup#1\@@endlink}%
\providecommand \@sanitize@url [0]{\catcode `\\12\catcode `\$12\catcode
  `\&12\catcode `\#12\catcode `\^12\catcode `\_12\catcode `\%12\relax}%
\providecommand \@@startlink[1]{}%
\providecommand \@@endlink[0]{}%
\providecommand \url  [0]{\begingroup\@sanitize@url \@url }%
\providecommand \@url [1]{\endgroup\@href {#1}{\urlprefix }}%
\providecommand \urlprefix  [0]{URL }%
\providecommand \Eprint [0]{\href }%
\providecommand \doibase [0]{https://doi.org/}%
\providecommand \selectlanguage [0]{\@gobble}%
\providecommand \bibinfo  [0]{\@secondoftwo}%
\providecommand \bibfield  [0]{\@secondoftwo}%
\providecommand \translation [1]{[#1]}%
\providecommand \BibitemOpen [0]{}%
\providecommand \bibitemStop [0]{}%
\providecommand \bibitemNoStop [0]{.\EOS\space}%
\providecommand \EOS [0]{\spacefactor3000\relax}%
\providecommand \BibitemShut  [1]{\csname bibitem#1\endcsname}%
\let\auto@bib@innerbib\@empty
\bibitem [{\citenamefont {Baumgratz}\ \emph {et~al.}(2014)\citenamefont
  {Baumgratz}, \citenamefont {Cramer},\ and\ \citenamefont
  {Plenio}}]{Baumgratz2014}%
  \BibitemOpen
  \bibfield  {author} {\bibinfo {author} {\bibfnamefont {T.}~\bibnamefont
  {Baumgratz}}, \bibinfo {author} {\bibfnamefont {M.}~\bibnamefont {Cramer}},\
  and\ \bibinfo {author} {\bibfnamefont {M.~B.}\ \bibnamefont {Plenio}},\
  }\bibfield  {title} {\bibinfo {title} {Quantifying coherence},\ }\href
  {https://doi.org/10.1103/PhysRevLett.113.140401} {\bibfield  {journal}
  {\bibinfo  {journal} {Phys. Rev. Lett.}\ }\textbf {\bibinfo {volume} {113}},\
  \bibinfo {pages} {140401} (\bibinfo {year} {2014})}\BibitemShut {NoStop}%
\bibitem [{\citenamefont {Streltsov}\ \emph {et~al.}(2017)\citenamefont
  {Streltsov}, \citenamefont {Adesso},\ and\ \citenamefont
  {Plenio}}]{Streltsov2017}%
  \BibitemOpen
  \bibfield  {author} {\bibinfo {author} {\bibfnamefont {A.}~\bibnamefont
  {Streltsov}}, \bibinfo {author} {\bibfnamefont {G.}~\bibnamefont {Adesso}},\
  and\ \bibinfo {author} {\bibfnamefont {M.~B.}\ \bibnamefont {Plenio}},\
  }\bibfield  {title} {\bibinfo {title} {Colloquium: Quantum coherence as a
  resource},\ }\href {https://doi.org/10.1103/RevModPhys.89.041003} {\bibfield
  {journal} {\bibinfo  {journal} {Rev. Mod. Phys.}\ }\textbf {\bibinfo {volume}
  {89}},\ \bibinfo {pages} {041003} (\bibinfo {year} {2017})}\BibitemShut
  {NoStop}%
\bibitem [{\citenamefont {Winter}\ and\ \citenamefont
  {Yang}(2016)}]{WinterYang2016}%
  \BibitemOpen
  \bibfield  {author} {\bibinfo {author} {\bibfnamefont {A.}~\bibnamefont
  {Winter}}\ and\ \bibinfo {author} {\bibfnamefont {D.}~\bibnamefont {Yang}},\
  }\bibfield  {title} {\bibinfo {title} {Operational resource theory of
  coherence},\ }\href {https://doi.org/10.1103/PhysRevLett.116.120404}
  {\bibfield  {journal} {\bibinfo  {journal} {Phys. Rev. Lett.}\ }\textbf
  {\bibinfo {volume} {116}},\ \bibinfo {pages} {120404} (\bibinfo {year}
  {2016})}\BibitemShut {NoStop}%
\bibitem [{\citenamefont {Napoli}\ \emph {et~al.}(2016)\citenamefont {Napoli},
  \citenamefont {Bromley}, \citenamefont {Cianciaruso}, \citenamefont {Piani},
  \citenamefont {Johnston},\ and\ \citenamefont {Adesso}}]{Napoli2016}%
  \BibitemOpen
  \bibfield  {author} {\bibinfo {author} {\bibfnamefont {C.}~\bibnamefont
  {Napoli}}, \bibinfo {author} {\bibfnamefont {T.~R.}\ \bibnamefont {Bromley}},
  \bibinfo {author} {\bibfnamefont {M.}~\bibnamefont {Cianciaruso}}, \bibinfo
  {author} {\bibfnamefont {M.}~\bibnamefont {Piani}}, \bibinfo {author}
  {\bibfnamefont {N.}~\bibnamefont {Johnston}},\ and\ \bibinfo {author}
  {\bibfnamefont {G.}~\bibnamefont {Adesso}},\ }\bibfield  {title} {\bibinfo
  {title} {Robustness of coherence: An operational and observable measure of
  quantum coherence},\ }\href {https://doi.org/10.1103/PhysRevLett.116.150502}
  {\bibfield  {journal} {\bibinfo  {journal} {Phys. Rev. Lett.}\ }\textbf
  {\bibinfo {volume} {116}},\ \bibinfo {pages} {150502} (\bibinfo {year}
  {2016})}\BibitemShut {NoStop}%
\bibitem [{\citenamefont {Lostaglio}\ \emph {et~al.}(2015)\citenamefont
  {Lostaglio}, \citenamefont {Korzekwa}, \citenamefont {Jennings},\ and\
  \citenamefont {Rudolph}}]{Lostaglio2015}%
  \BibitemOpen
  \bibfield  {author} {\bibinfo {author} {\bibfnamefont {M.}~\bibnamefont
  {Lostaglio}}, \bibinfo {author} {\bibfnamefont {K.}~\bibnamefont {Korzekwa}},
  \bibinfo {author} {\bibfnamefont {D.}~\bibnamefont {Jennings}},\ and\
  \bibinfo {author} {\bibfnamefont {T.}~\bibnamefont {Rudolph}},\ }\bibfield
  {title} {\bibinfo {title} {Quantum coherence, time-translation symmetry, and
  thermodynamics},\ }\href {https://doi.org/10.1103/PhysRevX.5.021001}
  {\bibfield  {journal} {\bibinfo  {journal} {Phys. Rev. X}\ }\textbf {\bibinfo
  {volume} {5}},\ \bibinfo {pages} {021001} (\bibinfo {year}
  {2015})}\BibitemShut {NoStop}%
\bibitem [{\citenamefont {Marvian}\ and\ \citenamefont
  {Spekkens}(2016)}]{MarvianSpekkens2016}%
  \BibitemOpen
  \bibfield  {author} {\bibinfo {author} {\bibfnamefont {I.}~\bibnamefont
  {Marvian}}\ and\ \bibinfo {author} {\bibfnamefont {R.~W.}\ \bibnamefont
  {Spekkens}},\ }\bibfield  {title} {\bibinfo {title} {How to quantify
  coherence: Distinguishing speakable and unspeakable notions},\ }\href
  {https://doi.org/10.1103/PhysRevA.94.052324} {\bibfield  {journal} {\bibinfo
  {journal} {Phys. Rev. A}\ }\textbf {\bibinfo {volume} {94}},\ \bibinfo
  {pages} {052324} (\bibinfo {year} {2016})}\BibitemShut {NoStop}%
\bibitem [{\citenamefont {Mani}\ and\ \citenamefont
  {Karimipour}(2015)}]{ManiKarimipour2015}%
  \BibitemOpen
  \bibfield  {author} {\bibinfo {author} {\bibfnamefont {A.}~\bibnamefont
  {Mani}}\ and\ \bibinfo {author} {\bibfnamefont {V.}~\bibnamefont
  {Karimipour}},\ }\bibfield  {title} {\bibinfo {title} {Cohering and
  decohering power of quantum channels},\ }\href
  {https://doi.org/10.1103/PhysRevA.92.032331} {\bibfield  {journal} {\bibinfo
  {journal} {Phys. Rev. A}\ }\textbf {\bibinfo {volume} {92}},\ \bibinfo
  {pages} {032331} (\bibinfo {year} {2015})}\BibitemShut {NoStop}%
\bibitem [{\citenamefont {Zanardi}\ \emph
  {et~al.}(2017{\natexlab{a}})\citenamefont {Zanardi}, \citenamefont
  {Styliaris},\ and\ \citenamefont
  {Campos~Venuti}}]{ZanardiStyliarisVenuti2017}%
  \BibitemOpen
  \bibfield  {author} {\bibinfo {author} {\bibfnamefont {P.}~\bibnamefont
  {Zanardi}}, \bibinfo {author} {\bibfnamefont {G.}~\bibnamefont {Styliaris}},\
  and\ \bibinfo {author} {\bibfnamefont {L.}~\bibnamefont {Campos~Venuti}},\
  }\bibfield  {title} {\bibinfo {title} {Coherence-generating power of quantum
  unitary maps and beyond},\ }\href
  {https://doi.org/10.1103/PhysRevA.95.052306} {\bibfield  {journal} {\bibinfo
  {journal} {Phys. Rev. A}\ }\textbf {\bibinfo {volume} {95}},\ \bibinfo
  {pages} {052306} (\bibinfo {year} {2017}{\natexlab{a}})}\BibitemShut
  {NoStop}%
\bibitem [{\citenamefont {Zanardi}\ \emph
  {et~al.}(2017{\natexlab{b}})\citenamefont {Zanardi}, \citenamefont
  {Styliaris},\ and\ \citenamefont
  {Campos~Venuti}}]{ZanardiStyliarisVenuti2017b}%
  \BibitemOpen
  \bibfield  {author} {\bibinfo {author} {\bibfnamefont {P.}~\bibnamefont
  {Zanardi}}, \bibinfo {author} {\bibfnamefont {G.}~\bibnamefont {Styliaris}},\
  and\ \bibinfo {author} {\bibfnamefont {L.}~\bibnamefont {Campos~Venuti}},\
  }\bibfield  {title} {\bibinfo {title} {Measures of coherence-generating power
  for quantum unital operations},\ }\href
  {https://doi.org/10.1103/PhysRevA.95.052307} {\bibfield  {journal} {\bibinfo
  {journal} {Phys. Rev. A}\ }\textbf {\bibinfo {volume} {95}},\ \bibinfo
  {pages} {052307} (\bibinfo {year} {2017}{\natexlab{b}})}\BibitemShut
  {NoStop}%
\bibitem [{\citenamefont {Styliaris}\ \emph {et~al.}(2018)\citenamefont
  {Styliaris}, \citenamefont {Campos~Venuti},\ and\ \citenamefont
  {Zanardi}}]{StyliarisVenutiZanardi2018}%
  \BibitemOpen
  \bibfield  {author} {\bibinfo {author} {\bibfnamefont {G.}~\bibnamefont
  {Styliaris}}, \bibinfo {author} {\bibfnamefont {L.}~\bibnamefont
  {Campos~Venuti}},\ and\ \bibinfo {author} {\bibfnamefont {P.}~\bibnamefont
  {Zanardi}},\ }\bibfield  {title} {\bibinfo {title} {Coherence-generating
  power of quantum dephasing processes},\ }\href
  {https://doi.org/10.1103/PhysRevA.97.032304} {\bibfield  {journal} {\bibinfo
  {journal} {Phys. Rev. A}\ }\textbf {\bibinfo {volume} {97}},\ \bibinfo
  {pages} {032304} (\bibinfo {year} {2018})}\BibitemShut {NoStop}%
\bibitem [{\citenamefont {Zhang}\ \emph {et~al.}(2018)\citenamefont {Zhang},
  \citenamefont {Ma}, \citenamefont {Chen},\ and\ \citenamefont
  {Fei}}]{ZhangMaChenFei2018}%
  \BibitemOpen
  \bibfield  {author} {\bibinfo {author} {\bibfnamefont {L.}~\bibnamefont
  {Zhang}}, \bibinfo {author} {\bibfnamefont {Z.}~\bibnamefont {Ma}}, \bibinfo
  {author} {\bibfnamefont {Z.}~\bibnamefont {Chen}},\ and\ \bibinfo {author}
  {\bibfnamefont {S.-M.}\ \bibnamefont {Fei}},\ }\bibfield  {title} {\bibinfo
  {title} {Coherence generating power of unitary transformations via
  probabilistic average},\ }\href {https://doi.org/10.1007/s11128-018-1957-z}
  {\bibfield  {journal} {\bibinfo  {journal} {Quantum Inf. Process.}\ }\textbf
  {\bibinfo {volume} {17}},\ \bibinfo {pages} {186} (\bibinfo {year}
  {2018})}\BibitemShut {NoStop}%
\bibitem [{\citenamefont {Cho}\ and\ \citenamefont {Bang}(2026)}]{ChoBang2026}%
  \BibitemOpen
  \bibfield  {author} {\bibinfo {author} {\bibfnamefont {K.}~\bibnamefont
  {Cho}}\ and\ \bibinfo {author} {\bibfnamefont {J.}~\bibnamefont {Bang}},\
  }\bibfield  {title} {\bibinfo {title} {Entangling power and its deviation: A
  quantitative analysis on input-state dependence and variability in
  entanglement generation},\ }\href {https://doi.org/10.1103/fbyt-kdjk}
  {\bibfield  {journal} {\bibinfo  {journal} {Phys. Rev. A}\ }\textbf {\bibinfo
  {volume} {113}},\ \bibinfo {pages} {012442} (\bibinfo {year}
  {2026})}\BibitemShut {NoStop}%
\bibitem [{\citenamefont {Aubry}\ and\ \citenamefont
  {Andr{\'e}}(1980)}]{AubryAndre1980}%
  \BibitemOpen
  \bibfield  {author} {\bibinfo {author} {\bibfnamefont {S.}~\bibnamefont
  {Aubry}}\ and\ \bibinfo {author} {\bibfnamefont {G.}~\bibnamefont
  {Andr{\'e}}},\ }\bibfield  {title} {\bibinfo {title} {Analyticity breaking
  and {Anderson} localization in incommensurate lattices},\ }\href@noop {}
  {\bibfield  {journal} {\bibinfo  {journal} {Ann. Israel Phys. Soc.}\ }\textbf
  {\bibinfo {volume} {3}},\ \bibinfo {pages} {133} (\bibinfo {year}
  {1980})}\BibitemShut {NoStop}%
\bibitem [{\citenamefont {Biddle}\ \emph {et~al.}(2009)\citenamefont {Biddle},
  \citenamefont {Wang}, \citenamefont {Priour},\ and\ \citenamefont
  {Das~Sarma}}]{BiddleDasSarma2009}%
  \BibitemOpen
  \bibfield  {author} {\bibinfo {author} {\bibfnamefont {J.}~\bibnamefont
  {Biddle}}, \bibinfo {author} {\bibfnamefont {B.}~\bibnamefont {Wang}},
  \bibinfo {author} {\bibfnamefont {J.}~\bibnamefont {Priour}, \bibfnamefont
  {D.~J.}},\ and\ \bibinfo {author} {\bibfnamefont {S.}~\bibnamefont
  {Das~Sarma}},\ }\bibfield  {title} {\bibinfo {title} {Localization in
  one-dimensional incommensurate lattices beyond the {Aubry--Andr\'e} model},\
  }\href {https://doi.org/10.1103/PhysRevA.80.021603} {\bibfield  {journal}
  {\bibinfo  {journal} {Phys. Rev. A}\ }\textbf {\bibinfo {volume} {80}},\
  \bibinfo {pages} {021603(R)} (\bibinfo {year} {2009})}\BibitemShut {NoStop}%
\bibitem [{\citenamefont {Lahini}\ \emph {et~al.}(2009)\citenamefont {Lahini},
  \citenamefont {Pugatch}, \citenamefont {Pozzi}, \citenamefont {Sorel},
  \citenamefont {Morandotti}, \citenamefont {Davidson},\ and\ \citenamefont
  {Silberberg}}]{Lahini2009}%
  \BibitemOpen
  \bibfield  {author} {\bibinfo {author} {\bibfnamefont {Y.}~\bibnamefont
  {Lahini}}, \bibinfo {author} {\bibfnamefont {R.}~\bibnamefont {Pugatch}},
  \bibinfo {author} {\bibfnamefont {F.}~\bibnamefont {Pozzi}}, \bibinfo
  {author} {\bibfnamefont {M.}~\bibnamefont {Sorel}}, \bibinfo {author}
  {\bibfnamefont {R.}~\bibnamefont {Morandotti}}, \bibinfo {author}
  {\bibfnamefont {N.}~\bibnamefont {Davidson}},\ and\ \bibinfo {author}
  {\bibfnamefont {Y.}~\bibnamefont {Silberberg}},\ }\bibfield  {title}
  {\bibinfo {title} {Observation of a localization transition in quasiperiodic
  photonic lattices},\ }\href {https://doi.org/10.1103/PhysRevLett.103.013901}
  {\bibfield  {journal} {\bibinfo  {journal} {Phys. Rev. Lett.}\ }\textbf
  {\bibinfo {volume} {103}},\ \bibinfo {pages} {013901} (\bibinfo {year}
  {2009})}\BibitemShut {NoStop}%
\bibitem [{\citenamefont {Rossi}\ \emph {et~al.}(2015)\citenamefont {Rossi},
  \citenamefont {Torsello},\ and\ \citenamefont {Hancock}}]{Rossi2015}%
  \BibitemOpen
  \bibfield  {author} {\bibinfo {author} {\bibfnamefont {L.}~\bibnamefont
  {Rossi}}, \bibinfo {author} {\bibfnamefont {A.}~\bibnamefont {Torsello}},\
  and\ \bibinfo {author} {\bibfnamefont {E.~R.}\ \bibnamefont {Hancock}},\
  }\bibfield  {title} {\bibinfo {title} {Measuring graph similarity through
  continuous-time quantum walks and the quantum {Jensen--Shannon} divergence},\
  }\href {https://doi.org/10.1103/PhysRevE.91.022815} {\bibfield  {journal}
  {\bibinfo  {journal} {Phys. Rev. E}\ }\textbf {\bibinfo {volume} {91}},\
  \bibinfo {pages} {022815} (\bibinfo {year} {2015})}\BibitemShut {NoStop}%
\bibitem [{\citenamefont {Henry}\ \emph {et~al.}(2021)\citenamefont {Henry},
  \citenamefont {Thabet}, \citenamefont {Dalyac},\ and\ \citenamefont
  {Henriet}}]{Henry2021}%
  \BibitemOpen
  \bibfield  {author} {\bibinfo {author} {\bibfnamefont {L.-P.}\ \bibnamefont
  {Henry}}, \bibinfo {author} {\bibfnamefont {S.}~\bibnamefont {Thabet}},
  \bibinfo {author} {\bibfnamefont {C.}~\bibnamefont {Dalyac}},\ and\ \bibinfo
  {author} {\bibfnamefont {L.}~\bibnamefont {Henriet}},\ }\bibfield  {title}
  {\bibinfo {title} {Quantum evolution kernel: Machine learning on graphs with
  programmable arrays of qubits},\ }\href
  {https://doi.org/10.1103/PhysRevA.104.032416} {\bibfield  {journal} {\bibinfo
   {journal} {Phys. Rev. A}\ }\textbf {\bibinfo {volume} {104}},\ \bibinfo
  {pages} {032416} (\bibinfo {year} {2021})}\BibitemShut {NoStop}%
\bibitem [{\citenamefont {Albrecht}\ \emph {et~al.}(2023)\citenamefont
  {Albrecht}, \citenamefont {Dalyac}, \citenamefont {Leclerc}, \citenamefont
  {Ortiz-Guti{\'e}rrez}, \citenamefont {Thabet} \emph {et~al.}}]{Albrecht2023}%
  \BibitemOpen
  \bibfield  {author} {\bibinfo {author} {\bibfnamefont {B.}~\bibnamefont
  {Albrecht}}, \bibinfo {author} {\bibfnamefont {C.}~\bibnamefont {Dalyac}},
  \bibinfo {author} {\bibfnamefont {L.}~\bibnamefont {Leclerc}}, \bibinfo
  {author} {\bibfnamefont {L.}~\bibnamefont {Ortiz-Guti{\'e}rrez}}, \bibinfo
  {author} {\bibfnamefont {S.}~\bibnamefont {Thabet}}, \emph {et~al.},\
  }\bibfield  {title} {\bibinfo {title} {Quantum feature maps for graph machine
  learning on a neutral atom quantum processor},\ }\href
  {https://doi.org/10.1103/PhysRevA.107.042615} {\bibfield  {journal} {\bibinfo
   {journal} {Phys. Rev. A}\ }\textbf {\bibinfo {volume} {107}},\ \bibinfo
  {pages} {042615} (\bibinfo {year} {2023})}\BibitemShut {NoStop}%
\bibitem [{\citenamefont {Havl{\'i}{\v c}ek}\ \emph {et~al.}(2019)\citenamefont
  {Havl{\'i}{\v c}ek}, \citenamefont {C{\'o}rcoles}, \citenamefont {Temme},
  \citenamefont {Harrow}, \citenamefont {Kandala}, \citenamefont {Chow},\ and\
  \citenamefont {Gambetta}}]{Havlicek2019}%
  \BibitemOpen
  \bibfield  {author} {\bibinfo {author} {\bibfnamefont {V.}~\bibnamefont
  {Havl{\'i}{\v c}ek}}, \bibinfo {author} {\bibfnamefont {A.~D.}\ \bibnamefont
  {C{\'o}rcoles}}, \bibinfo {author} {\bibfnamefont {K.}~\bibnamefont {Temme}},
  \bibinfo {author} {\bibfnamefont {A.~W.}\ \bibnamefont {Harrow}}, \bibinfo
  {author} {\bibfnamefont {A.}~\bibnamefont {Kandala}}, \bibinfo {author}
  {\bibfnamefont {J.~M.}\ \bibnamefont {Chow}},\ and\ \bibinfo {author}
  {\bibfnamefont {J.~M.}\ \bibnamefont {Gambetta}},\ }\bibfield  {title}
  {\bibinfo {title} {Supervised learning with quantum-enhanced feature
  spaces},\ }\href {https://doi.org/10.1038/s41586-019-0980-2} {\bibfield
  {journal} {\bibinfo  {journal} {Nature}\ }\textbf {\bibinfo {volume} {567}},\
  \bibinfo {pages} {209} (\bibinfo {year} {2019})}\BibitemShut {NoStop}%
\bibitem [{\citenamefont {Schuld}\ and\ \citenamefont
  {Killoran}(2019)}]{SchuldKilloran2019}%
  \BibitemOpen
  \bibfield  {author} {\bibinfo {author} {\bibfnamefont {M.}~\bibnamefont
  {Schuld}}\ and\ \bibinfo {author} {\bibfnamefont {N.}~\bibnamefont
  {Killoran}},\ }\bibfield  {title} {\bibinfo {title} {Quantum machine learning
  in feature {Hilbert} spaces},\ }\href
  {https://doi.org/10.1103/PhysRevLett.122.040504} {\bibfield  {journal}
  {\bibinfo  {journal} {Phys. Rev. Lett.}\ }\textbf {\bibinfo {volume} {122}},\
  \bibinfo {pages} {040504} (\bibinfo {year} {2019})}\BibitemShut {NoStop}%
\bibitem [{\citenamefont {Zyczkowski}\ and\ \citenamefont
  {Sommers}(2001)}]{ZyczkowskiSommers2001}%
  \BibitemOpen
  \bibfield  {author} {\bibinfo {author} {\bibfnamefont {K.}~\bibnamefont
  {Zyczkowski}}\ and\ \bibinfo {author} {\bibfnamefont {H.-J.}\ \bibnamefont
  {Sommers}},\ }\bibfield  {title} {\bibinfo {title} {Induced measures in the
  space of mixed quantum states},\ }\href
  {https://doi.org/10.1088/0305-4470/34/35/335} {\bibfield  {journal} {\bibinfo
   {journal} {J. Phys. A: Math. Gen.}\ }\textbf {\bibinfo {volume} {34}},\
  \bibinfo {pages} {7111} (\bibinfo {year} {2001})}\BibitemShut {NoStop}%
\bibitem [{\citenamefont {Collins}\ and\ \citenamefont
  {{\'S}niady}(2006)}]{CollinsSniady2006}%
  \BibitemOpen
  \bibfield  {author} {\bibinfo {author} {\bibfnamefont {B.}~\bibnamefont
  {Collins}}\ and\ \bibinfo {author} {\bibfnamefont {P.}~\bibnamefont
  {{\'S}niady}},\ }\bibfield  {title} {\bibinfo {title} {Integration with
  respect to the {Haar} measure on unitary, orthogonal and symplectic group},\
  }\href {https://doi.org/10.1007/s00220-006-1554-3} {\bibfield  {journal}
  {\bibinfo  {journal} {Commun. Math. Phys.}\ }\textbf {\bibinfo {volume}
  {264}},\ \bibinfo {pages} {773} (\bibinfo {year} {2006})}\BibitemShut
  {NoStop}%
\bibitem [{\citenamefont {Feller}(1971)}]{Feller1971}%
  \BibitemOpen
  \bibfield  {author} {\bibinfo {author} {\bibfnamefont {W.}~\bibnamefont
  {Feller}},\ }\href@noop {} {\emph {\bibinfo {title} {An Introduction to
  Probability Theory and Its Applications, Vol. 2}}},\ \bibinfo {edition}
  {2nd}\ ed.\ (\bibinfo  {publisher} {John Wiley \& Sons},\ \bibinfo {address}
  {New York},\ \bibinfo {year} {1971})\BibitemShut {NoStop}%
\bibitem [{\citenamefont {Elben}\ \emph {et~al.}(2018)\citenamefont {Elben},
  \citenamefont {Vermersch}, \citenamefont {Dalmonte}, \citenamefont {Cirac},\
  and\ \citenamefont {Zoller}}]{Elben2018}%
  \BibitemOpen
  \bibfield  {author} {\bibinfo {author} {\bibfnamefont {A.}~\bibnamefont
  {Elben}}, \bibinfo {author} {\bibfnamefont {B.}~\bibnamefont {Vermersch}},
  \bibinfo {author} {\bibfnamefont {M.}~\bibnamefont {Dalmonte}}, \bibinfo
  {author} {\bibfnamefont {J.~I.}\ \bibnamefont {Cirac}},\ and\ \bibinfo
  {author} {\bibfnamefont {P.}~\bibnamefont {Zoller}},\ }\bibfield  {title}
  {\bibinfo {title} {{R\'enyi} entropies from random quenches in atomic
  {Hubbard} and spin models},\ }\href
  {https://doi.org/10.1103/PhysRevLett.120.050406} {\bibfield  {journal}
  {\bibinfo  {journal} {Phys. Rev. Lett.}\ }\textbf {\bibinfo {volume} {120}},\
  \bibinfo {pages} {050406} (\bibinfo {year} {2018})}\BibitemShut {NoStop}%
\bibitem [{\citenamefont {Brydges}\ \emph {et~al.}(2019)\citenamefont
  {Brydges}, \citenamefont {Elben}, \citenamefont {Jurcevic}, \citenamefont
  {Vermersch}, \citenamefont {Maier}, \citenamefont {Lanyon}, \citenamefont
  {Zoller}, \citenamefont {Blatt},\ and\ \citenamefont {Roos}}]{Brydges2019}%
  \BibitemOpen
  \bibfield  {author} {\bibinfo {author} {\bibfnamefont {T.}~\bibnamefont
  {Brydges}}, \bibinfo {author} {\bibfnamefont {A.}~\bibnamefont {Elben}},
  \bibinfo {author} {\bibfnamefont {P.}~\bibnamefont {Jurcevic}}, \bibinfo
  {author} {\bibfnamefont {B.}~\bibnamefont {Vermersch}}, \bibinfo {author}
  {\bibfnamefont {C.}~\bibnamefont {Maier}}, \bibinfo {author} {\bibfnamefont
  {B.~P.}\ \bibnamefont {Lanyon}}, \bibinfo {author} {\bibfnamefont
  {P.}~\bibnamefont {Zoller}}, \bibinfo {author} {\bibfnamefont
  {R.}~\bibnamefont {Blatt}},\ and\ \bibinfo {author} {\bibfnamefont {C.~F.}\
  \bibnamefont {Roos}},\ }\bibfield  {title} {\bibinfo {title} {Probing
  {R\'enyi} entanglement entropy via randomized measurements},\ }\href
  {https://doi.org/10.1126/science.aau4963} {\bibfield  {journal} {\bibinfo
  {journal} {Science}\ }\textbf {\bibinfo {volume} {364}},\ \bibinfo {pages}
  {260} (\bibinfo {year} {2019})}\BibitemShut {NoStop}%
\bibitem [{\citenamefont {Nielsen}\ and\ \citenamefont
  {Chuang}(2010)}]{NielsenChuang2010}%
  \BibitemOpen
  \bibfield  {author} {\bibinfo {author} {\bibfnamefont {M.~A.}\ \bibnamefont
  {Nielsen}}\ and\ \bibinfo {author} {\bibfnamefont {I.~L.}\ \bibnamefont
  {Chuang}},\ }\href {https://doi.org/10.1017/CBO9780511976667} {\emph
  {\bibinfo {title} {Quantum Computation and Quantum Information}}},\ \bibinfo
  {edition} {10th}\ ed.\ (\bibinfo  {publisher} {Cambridge University Press},\
  \bibinfo {address} {Cambridge},\ \bibinfo {year} {2010})\BibitemShut
  {NoStop}%
\bibitem [{\citenamefont {Farhi}\ and\ \citenamefont
  {Gutmann}(1998)}]{FarhiGutmann1998}%
  \BibitemOpen
  \bibfield  {author} {\bibinfo {author} {\bibfnamefont {E.}~\bibnamefont
  {Farhi}}\ and\ \bibinfo {author} {\bibfnamefont {S.}~\bibnamefont
  {Gutmann}},\ }\bibfield  {title} {\bibinfo {title} {Quantum computation and
  decision trees},\ }\href {https://doi.org/10.1103/PhysRevA.58.915} {\bibfield
   {journal} {\bibinfo  {journal} {Phys. Rev. A}\ }\textbf {\bibinfo {volume}
  {58}},\ \bibinfo {pages} {915} (\bibinfo {year} {1998})}\BibitemShut
  {NoStop}%
\bibitem [{\citenamefont {Childs}(2009)}]{Childs2009}%
  \BibitemOpen
  \bibfield  {author} {\bibinfo {author} {\bibfnamefont {A.~M.}\ \bibnamefont
  {Childs}},\ }\bibfield  {title} {\bibinfo {title} {Universal computation by
  quantum walk},\ }\href {https://doi.org/10.1103/PhysRevLett.102.180501}
  {\bibfield  {journal} {\bibinfo  {journal} {Phys. Rev. Lett.}\ }\textbf
  {\bibinfo {volume} {102}},\ \bibinfo {pages} {180501} (\bibinfo {year}
  {2009})}\BibitemShut {NoStop}%
\bibitem [{\citenamefont {Venegas-Andraca}(2012)}]{VenegasAndraca2012}%
  \BibitemOpen
  \bibfield  {author} {\bibinfo {author} {\bibfnamefont {S.~E.}\ \bibnamefont
  {Venegas-Andraca}},\ }\bibfield  {title} {\bibinfo {title} {Quantum walks: a
  comprehensive review},\ }\href {https://doi.org/10.1007/s11128-012-0432-5}
  {\bibfield  {journal} {\bibinfo  {journal} {Quantum Inf. Process.}\ }\textbf
  {\bibinfo {volume} {11}},\ \bibinfo {pages} {1015} (\bibinfo {year}
  {2012})}\BibitemShut {NoStop}%
\bibitem [{\citenamefont {Chen}\ \emph {et~al.}(2024)\citenamefont {Chen},
  \citenamefont {Li},\ and\ \citenamefont {Li}}]{ChenLiLi2024}%
  \BibitemOpen
  \bibfield  {author} {\bibinfo {author} {\bibfnamefont {Z.}~\bibnamefont
  {Chen}}, \bibinfo {author} {\bibfnamefont {G.}~\bibnamefont {Li}},\ and\
  \bibinfo {author} {\bibfnamefont {L.}~\bibnamefont {Li}},\ }\bibfield
  {title} {\bibinfo {title} {Implementation of a continuous-time quantum walk
  on a sparse graph},\ }\href {https://doi.org/10.1103/PhysRevA.110.052215}
  {\bibfield  {journal} {\bibinfo  {journal} {Phys. Rev. A}\ }\textbf {\bibinfo
  {volume} {110}},\ \bibinfo {pages} {052215} (\bibinfo {year}
  {2024})}\BibitemShut {NoStop}%
\bibitem [{\citenamefont {Baker}\ \emph {et~al.}(2023)\citenamefont {Baker},
  \citenamefont {Park}, \citenamefont {Yu}, \citenamefont {Ghukasyan},
  \citenamefont {Goktas},\ and\ \citenamefont {Radha}}]{Baker2023}%
  \BibitemOpen
  \bibfield  {author} {\bibinfo {author} {\bibfnamefont {J.~S.}\ \bibnamefont
  {Baker}}, \bibinfo {author} {\bibfnamefont {G.}~\bibnamefont {Park}},
  \bibinfo {author} {\bibfnamefont {K.}~\bibnamefont {Yu}}, \bibinfo {author}
  {\bibfnamefont {A.}~\bibnamefont {Ghukasyan}}, \bibinfo {author}
  {\bibfnamefont {O.}~\bibnamefont {Goktas}},\ and\ \bibinfo {author}
  {\bibfnamefont {S.~K.}\ \bibnamefont {Radha}},\ }\href
  {https://doi.org/10.48550/arXiv.2305.05881} {\bibinfo {title} {Parallel
  hybrid quantum-classical machine learning for kernelized time-series
  classification}} (\bibinfo {year} {2023}),\ \Eprint
  {https://arxiv.org/abs/2305.05881} {arXiv:2305.05881 [quant-ph]} \BibitemShut
  {NoStop}%
\end{thebibliography}

%

\end{document}